\documentclass[onecolumn,journal,noadjust,12pt]{IEEEtran}
\usepackage{cite}

\ifCLASSINFOpdf
\else
\fi
\usepackage{amssymb}
\usepackage{dsfont}
\usepackage{array}
\usepackage{tabularx}
\usepackage{makecell}
\usepackage{multirow}
\newcolumntype{C}[1]{>{\centering\arraybackslash}p{#1}}
\newcolumntype{Y}{>{\raggedright\arraybackslash}X}
\usepackage{float}
\usepackage{graphicx}

\newtheorem{lemma}{Lemma}

\newtheorem{proposition}{Proposition}
\newtheorem{theorem}{Theorem}
\newtheorem{remark}{Remark}
\newtheorem{corollary}{Corollary}

\usepackage{xcolor}
\usepackage{mathrsfs}
\usepackage{mathtools}
\usepackage{enumitem}
\usepackage{caption}
\usepackage[ colorlinks = true,              linkcolor = blue, urlcolor  =
blue,              citecolor = red,              anchorcolor = green,
]{hyperref}

\newcommand{\ind}{\perp \!\!\! \perp}

\usepackage [autostyle, english = american]{csquotes}
\MakeOuterQuote{"}

\usepackage{amsmath}

\begin{document}
%

\title{Joint Source-Channel Coding of Gaussian Sources over Block Erasure Channels: Nonasymptotic Bounds and Channel-Uniform Normal Approximations}
%
%
%

\author{Adeel Mahmood\\Nokia Bell Labs\\Murray Hill, NJ, USA}

\maketitle


\begin{abstract}
We study finite-blocklength lossy transmission of a Gaussian memoryless
source over memoryless, possibly nonstationary block erasure channels
under an excess mean-squared distortion criterion. We derive computable
nonasymptotic achievability and converse bounds and establish matching
third-order, channel-uniform normal approximations. Our nonasymptotic achievability
bound follows from an exact evaluation of the ensemble-average
excess-distortion probability of a specified random coding scheme and improves known general one-shot
bounds (Kostina-Verd\'u 2013, Li-Anantharam 2021) evaluated with the same channel input and source reproduction
distributions. Our nonasymptotic converse argument conditions on the erasure pattern and
combines an energy-conditioned spherical-cap bound with a volume bound.
The spherical-cap argument retains the geometric prefactor needed to
identify the converse third-order term. In our channel-uniform normal approximation, we show that for fixed distortion ratio,
target excess-distortion probability, and block size, the required
source blocklength and remainder constants are independent of the
channel blocklength and erasure-probability profile. Both the sufficient
and necessary information-balance conditions contain the term
$\frac12\log k$, where $k$ is the source blocklength, and differ only
in bounded remainder terms. The analysis combines a threshold-uniform
normal approximation for functions of the source energy, logistic and
exponential random-threshold representations, and source-induced
Gaussian smoothing of the channel information. These arguments yield an additive $O(k^{-1/2})$ error in the normal approximation of the excess-distortion probability, while retaining the familiar
source and channel dispersion terms without requiring a large channel
blocklength. Numerical evaluations compare the nonasymptotic bounds
with their common third-order normal approximation. Our results provide a benchmark for transmitting a high-dimensional source using a small number of packets.
\end{abstract}

\section{Introduction}

Finite-blocklength information theory has devoted substantial effort to characterizing communication systems that cannot operate with very long blocklengths because of energy, latency and/or complexity constraints. The normal approximation of Polyanskiy, Poor, and Verd\'u \cite{5452208} was shown to accurately describe channel coding limits at a fixed nonzero error probability $\epsilon \in [10^{-6}, 10^{-1}]$ and blocklengths on the order of a few hundred channel uses, and in representative examples even around $n=150$. Subsequent work \cite{6802432} continued to develop accurate approximations for a variety of channel models for similarly small blocklengths  that are relevant for the paradigm of ultra-reliable low-latency communication \cite{durisi2016toward,lancho2020saddlepoint}. In joint source-channel coding (JSCC), Kostina and Verd\'u \cite{kostina_JSCC} derived refined asymptotic bounds based on the normal approximation that were shown to be reasonably accurate at a fixed excess-distortion probability $\epsilon \approx 10^{-2}$ and source blocklength $k$ and channel blocklength $n$ on the order of hundreds.  

In this paper, we derive more accurate JSCC bounds for the special case of a Gaussian memoryless source $\mathcal{N}(0, \sigma^2)$ and nonstationary block erasure channels (BLEC) with erasure probabilities $p_1, \ldots, p_n$. Our starting achievability and converse results, Theorems \ref{thm_achievability} and \ref{convo2}, are nonasymptotic bounds that hold for all $n \geq 1$ and $k \geq 1$. The random coding achievability scheme underlying our nonasymptotic result is tailored for block erasure channels and Gaussian memoryless source (GMS), with straightforward adaptation to general source laws also given. Theorem \ref{thm_achievability} derives an exact expression of the ensemble-average excess-distortion probability of the scheme. The resulting achievability bound improves upon mere specializations to the GMS-BLEC pair of the previous general 
one-shot JSCC achievability bounds \cite[Theorem~8]{kostina_JSCC} and \cite[Theorem~4]{alternative_oneshot_JSCC} evaluated with the same channel input and source reproduction distributions. Also, like \cite[Theorem~4]{alternative_oneshot_JSCC} and unlike \cite[Theorem~8]{kostina_JSCC}, our nonasymptotic achievability bound requires no optimization over an auxiliary scalar parameter.

The converse in Theorem~\ref{convo2} exploits the geometry of the
Gaussian source after conditioning on the erasure pattern. If exactly $j$
blocks are unerased, a deterministic decoder can produce at most $2^{mj}$ possible
reproduction sequences. Conditioning further on the source energy
allows us to bound the fraction of the source sphere covered by their
distortion balls. We combine this spherical-cap bound with a
complementary volume bound, taking the stronger bound separately for
each $j$ before averaging. The resulting converse is at least as
strong as the Lebesgue-measure specialization of the list-code converse
in \cite[Theorem~5]{kostina_JSCC} with the auxiliary output law
specified in~\eqref{2pyi}. Its spherical-cap component supplies the
third-order term that matches our achievability result in our normal-approximation bounds. 

Although nonasymptotic bounds are valid at every blocklength, they have to be numerically evaluated and offer less immediate design insight than closed-form normal approximations. In the normal-approximation results, both sufficient and necessary
conditions for reliable joint source-channel coding can be expressed
through an \emph{information-balance condition}, which compares the
information supplied by the channel with the information required to
represent the source at the target distortion. In its usual
second-order form, this balance condition has the form \cite{kostina_JSCC}
\begin{equation}
    nC - kR(d) \geq \sqrt{nV_c + k V(d)} \, Q^{-1}(\epsilon)  +\text{higher-order terms}, 
\end{equation}
where $C$ and $V_c$ are the
channel capacity and channel dispersion, respectively, $R(d)$ and
$V(d)$ are the source rate-distortion and rate-dispersion functions,
and $Q$ is the standard normal complementary CDF. Intuitively, \(nC\) represents the first-order information that the channel can deliver, while \(kR(d)\) represents the first-order information required to describe the source to distortion \(d\); the dispersions \(V_c\) and \(V(d)\) quantify the corresponding fluctuations around these first-order quantities.

For the GMS--BLEC pair, our Theorems~\ref{asymp_thm_achievability}
and~\ref{thm_semiasympconverse} give sufficient and necessary
information-balance conditions with the same first-order, dispersion,
and $\frac12\log k$ terms, leaving only bounded remainder terms.
For fixed distortion ratio $\delta = d/\sigma^2$ , target excess-distortion probability $\epsilon$, and
block size $m$, the constant remainder terms and source-blocklength thresholds for the information-balance conditions are
uniform over every channel blocklength $n$ and every erasure profile
$p_1,\ldots,p_n\in[0,1]$. We call these channel-uniform normal
approximations. They retain the familiar source and channel dispersion
terms without requiring a separate large-$n$ approximation of the
channel. Previously, a similar channel-uniform converse result \cite[Theorem 3]{adeel_unknown_channels} was proved by Mahmood–Viswanathan–Du, which applies to an arbitrary memoryless source transmission over BLEC, although our Theorem \ref{thm_semiasympconverse} improves its $-\frac12\log k-O(1)$ remainder to $\frac12\log k-O(1)$ for the specific case of a GMS. This is an improvement of
$\log k+O(1)$ in the necessary information-balance condition and matches the logarithmic
term in our achievability condition.

Channel-uniform results are particularly relevant for packetized communication. In our block erasure channel model, one channel use represents the successful delivery or erasure of an $m$-bit block, which may correspond to a packet; although the physical-layer transmission of each packet can involve many channel symbols, the number $n$ of independent packet-delivery events may be small. This model is closer to real network behavior, where the channel impairment is represented as block
erasures instead of bit errors. 
At the same time, an image, video frame, or learned latent representation can contain hundreds or thousands of source samples, making an approximation based on large $k$ more plausible than one requiring large $n$. Recent semantic multimedia and video transmission systems divide encoded content into a small number of blocks---for example, eight or sixteen blocks in representative designs---and reconstruct the source from the subset that is successfully received \cite{homapaper2, nargis}. These examples illustrate a regime in which the source dimension is large, while the packet-level channel blocklength remains too small for a conventional channel-only central-limit approximation.

We focus on the GMS--BLEC pair because it provides both a canonical source model and a practically meaningful packet-loss abstraction. The Gaussian memoryless source (GMS) under mean-square error is a standard benchmark for continuous-valued lossy compression and JSCC and can also serve as an approximate model for decorrelated transform or latent coefficients \cite{kostina_SC,cheng2020}.
At the channel side, the block-erasure model captures systems in which error detection causes a packet or encoded block to be either delivered intact or declared unavailable. Recent semantic communication architectures \cite{homapaper2,nargis} explicitly train their encoders and decoders over uniform and nonuniform (nonstationary) block-erasure channels to handle packet loss, bandwidth limitations, congestion-aware packet dropping, and unequal error protection \cite{adeelUEP}. 

In addition to being channel-uniform, our results sharpen the third-order bounds obtained by directly specializing the general normal approximation of Kostina and Verd\'u \cite[Theorem~10]{kostina_JSCC} to the GMS--BLEC pair, in the stationary setting where both results apply. This improvement partly comes from specializing nonasymptotic bounds to a source-channel pair before carrying out their asymptotic analysis. Such improvements have already been observed in the pair-specific treatments of binary memoryless sources over binary symmetric channels (BMS-BSC) and Gaussian memoryless sources over additive white Gaussian noise channels (GMS-AWGN) \cite[Theorems~15 and~19]{kostina_JSCC}. However, these results in \cite{kostina_JSCC} leave a gap between an $O(1)$ third-order converse term and a $\log k+\log\log k+O(1)$ third-order achievability term. In contrast, our achievability and converse results for the GMS--BLEC pair yield matching $\frac12\log k$ terms with $O(1)$ bounded
remainders, and they also apply to nonstationary channel laws. In our
achievability analysis, a parameter-free one-shot bound and the
logistic random-threshold representation
\cite[Eq.~(31)]{adeel_unknown_channels} avoid additional logarithmic
losses. In our converse analysis, conditioning on the source energy
retains an angular covering constraint that a volume-only bound
discards, while an exponential random-threshold representation \eqref{exportrep}
preserves the resulting refinement without introducing an additional
logarithmic penalty. In both analyses, source-induced Gaussian
smoothing of the channel information then yields an
$O(k^{-1/2})$ normal-approximation error uniformly over the channel
blocklength and erasure profile. Note that Mahmood–Viswanathan–Du used the
logistic random-threshold representation
\cite[Eq.~(31)]{adeel_unknown_channels} 
to also yield a third-order improvement for arbitrary memoryless stationary sources and channels, although their result \cite[Theorem 1]{adeel_unknown_channels} applies to the more general mismatched setting with nonstationary channel laws.


\section{Preliminaries}

All logs are natural logs and $\exp(x)$ is the natural exponent, unless otherwise noted. We use $\|\cdot\|$ for the Euclidean norm. We use $F_{\chi_k^2}$ to denote the CDF of the $\chi^2$ distribution with $k$ degrees of freedom, and $\Phi$ to denote the standard normal CDF. We write $X \ind Y$ to mean that $X$ and $Y$ are independent random variables.

For a joint distribution $P_{X, Y}$, we define the information density as 
\begin{align*}
    \imath_{P_{X, Y}}(x, y) &\coloneqq \log \frac{d P_{X,Y}}{d(P_X \times P_Y)}(x, y)\\
    &\,= \log \frac{d P_{Y|X = x}}{d P_Y}(y)
\end{align*}
for $(P_X \times P_Y)$-a.e. $(x, y)$, where we assume $P_{X, Y} \ll P_X \times P_Y$ and use the convention $\log(0) = -\infty$. Furthermore, for a given channel $P_{Y|X}$ and an arbitrary auxiliary output distribution $P_{\overline{Y}}$, we also define 
\begin{align*}
    \imath_{P_{Y|X}, P_{\overline{Y}}}(x, y) &\coloneqq \log \frac{d P_{Y|X = x}}{d P_{\overline{Y}}}(y) 
\end{align*}
for $P_{\overline{Y}}$-a.e. $y$ and for $x \in \mathcal{X}$ such that $P_{Y|X = x} \ll P_{\overline{Y}}$.

The fundamental limits of joint source-channel coding are governed by the coupling between the channel information density and the source distortion-ball probability. We call these the channel and source random variables, respectively. 

\subsection{Channel Random Variable}

Consider the channel input alphabet $\mathcal{X} = \{0,1 \}^m$ and the output alphabet $\mathcal{Y} = \mathcal{X} \cup \{ e\}$. The block erasure channel $P_{Y|X} = \operatorname{BLEC}(2^m, p)$ with block size $m$ and erasure probability $p$ is defined as 
\begin{align*}
    P_{Y|X}(y|x) &= \begin{cases}
        1 - p & y = x,\\
        p & y = e,\\
        0 & y \in \mathcal{X}, y \neq x.
    \end{cases}
\end{align*}
Let $P_{X^n} = \operatorname{Unif}( \mathcal{X}^n)$ and $P_{Y^n|X^n} = P^{(1)}_{Y|X} \times \cdots \times P^{(n)}_{Y|X}$, where $P_{Y|X}^{(i)} = \operatorname{BLEC}(2^m, p_i)$. The marginal distribution $P_{Y^n}$ induced by the joint $P_{X^n} \times P_{Y^n|X^n}$ is $P_{Y^n} = P_{Y_1} \times \cdots \times P_{Y_n}$, where 
\begin{align}
    P_{Y_i}(y_i) &= \begin{cases}
        p_i & y_i = e,\\
        \frac{1 -p_i}{2^m} & y_i \in \mathcal{X}. 
    \end{cases}
    \label{pyi}
\end{align}
Then the channel random variable is 
\begin{align}
    \imath_{P_{X^n, Y^n}}(X^n, Y^n) &= \sum_{i=1}^n \log \frac{P_{Y_i|X_i}(Y_i|X_i)}{P_{Y_i}(Y_i)}\\
    &= \log(2^m) \sum_{i=1}^n  \mathds{1}\{ Y_i \neq e \}, \label{i140b]]}
\end{align}
where the last equality holds $(P_{X^n} \times P_{Y^n|X^n})$-almost surely and $\mathds{1}$ is the indicator function. The sum in \eqref{i140b]]} has a Poisson binomial distribution with parameters $1-p_1, \ldots, 1-p_n$. Throughout the paper, we denote the PMF of this Poisson binomial distribution by $\{\pi_j\}_{j=0}^n$, where  
\begin{align}
   \pi_j &\coloneqq
\sum_{\substack{A \subseteq \{1,\dots,n\}\\ |A|=j}}
\prod_{i \in A} (1-p_i)
\prod_{k \notin A} p_k. \label{njerasureslej4}
\end{align}
In the special case of $p_1 = \cdots = p_n = p$, we define
\begin{align*}
    \widetilde{\pi}_j \coloneqq \binom{n}{j} (1-p)^j p^{n-j}.  
\end{align*}

\subsection{Source Random Variable}

Consider an i.i.d. Gaussian source $S^k \sim P_S^{\otimes k}$ where $P_S = \mathcal{N}(0, \sigma^2)$ and $0 < \sigma < \infty$. We denote the reproduction by $z^k \in \mathbb{R}^k$, and the reproduction distortion is measured by the mean squared error $ k^{-1} \|s^k - z^k\|^2$. For any distortion level \(d\in(0,\infty)\), the
rate-distortion function of the Gaussian source is
\begin{align}
    R(d) = \begin{cases}
         \frac12\log\frac{\sigma^2}{d} & \text{ if }0<d<\sigma^2,\\[0.5em]
        0 & \text{ if } d\ge \sigma^2.
    \end{cases} \label{rdf}
\end{align}
For $d \in (0, \sigma^2)$, we have $R(d) = I(P_S, P_{Z|S}^*)$, where  \cite[10.3.2]{Cover2006}
\begin{align}
    P_{Z|S}^*(\cdot|s)
    =
    \mathcal N\!\left(
        \left(1-\frac{d}{\sigma^2}\right)s,
        d\left(1-\frac{d}{\sigma^2}\right)
    \right),
\end{align}
and the reproduction marginal is $P_Z^*=\mathcal N(0,\sigma^2-d).$ Equivalently, the rate-distortion-achieving joint law $P_{S,Z}^* = P_S \times P_{Z|S}^*$ admits the backward
representation $S = Z + N$, where $Z\sim \mathcal N(0,\sigma^2-d), N\sim\mathcal N(0,d)$ and $Z \ind N$. Furthermore, for $d \in (0, \sigma^2)$, the Gaussian source rate-dispersion function \cite[Definition 7]{kostina_SC} in $\text{nats}^2$ is $V(d) = 1/2$. 

Define 
\begin{align*}
    B_d(s^k) &\coloneqq \left \{z^k \in \mathbb{R}^k : \|s^k - z^k\|^2 \leq k d \right\}.
\end{align*}
We refer to the negative log of the distortion-ball probability, $-\log P_{Z^k}(B_d(S^k)),$ as the \emph{source random variable}. For a source realization $S^k = s^k$, the \emph{distortion-ball probability} is the
probability that a single random reproduction codeword $Z^k \sim P_{Z^k}$ covers the fixed source
sequence $s^k$, i.e., 
\begin{align*}
    P_{Z^k}(B_d(S^k)) &= \operatorname{Pr}\left(\|Z^k-s^k\|^2\leq kd\right), 
\end{align*}
where $Z^k \sim P_{Z^k}$ in the probability on the RHS above. When $Z^k$ is drawn uniformly from the surface of a sphere, the probability above is a spherical-cap probability. This is because geometrically, the distortion ball \(B_d(s^k)\) cuts out a spherical cap on
the reproduction sphere. To compactly write down the spherical-cap probability, we first define the \emph{normalized spherical-cap area function} as follows. For every $k \in \mathbb{Z}_{\geq 2}$, we define $L_k : \mathbb{R} \to [0, 1]$ as 
\begin{align*}
    L_k(t) \coloneqq \begin{cases}
0 & \text{ if } t\geq 1,\\[4pt]
1 & \text{ if } t\leq -1,\\[4pt]
\displaystyle \frac{1}{2}I_{1-t^2}\left( \frac{k-1}{2}, \frac{1}{2} \right) 
& \text{ if } 0 \leq t <1,\\
\displaystyle 1 - \frac{1}{2}I_{1-t^2}\left( \frac{k-1}{2}, \frac{1}{2} \right) & \text{ if } -1 < t < 0,
\end{cases}
\end{align*}
where $I_x(a, b)$ denotes the regularized incomplete beta function \cite[8.17.2]{NISTHandbook}. For $k = 1$, we extend the above definition as 
\begin{align*}
    L_1(t) &\coloneqq \begin{cases}
        0 & \text{ if } t > 1,\\
        1 & \text{ if } t \leq -1,\\
        \displaystyle \frac{1}{2} & \text{ if } -1 < t \leq 1. 
    \end{cases}
\end{align*}
We call it \emph{normalized} because $L_k(t)$ characterizes the spherical-cap area of a unit sphere at a given threshold $t$. Specifically, if \(U^k\) is uniformly distributed on the unit sphere in \(\mathbb R^k\), then \(\Pr[U_1\geq t]  =L_k(t)\); hence, \(L_k(t)\) is the fraction of the sphere occupied by the spherical cap with threshold \(t\). For each fixed $k \in \mathbb{Z}_{\geq 1}$, the function $L_k(t)$ is nonincreasing in $t$.

Lemma \ref{lemmaball} and the spherical-cap specialization of Proposition \ref{prop:normal-approximation} give useful results for the source random variable. Throughout the paper, $\Upsilon=\|S^k\|^2/(k\sigma^2)$ denotes the normalized source energy, and when $S^k=s^k$ we write its realized value as $\upsilon=\|s^k\|^2/(k\sigma^2)$. 
\begin{lemma}
Fix any $\sigma > 0$, $d \in (0, \sigma^2)$, $k \in \mathbb{Z}_{\geq 1}$ and $s^k\in\mathbb{R}^k \setminus \{ \mathbf{0}\}$. Let  \begin{align}
    P_{Z^k} = \operatorname{Unif}\left( \left \{z^k \in \mathbb{R}^k : \frac{1}{k}\sum_{i=1}^k z_i^2 = \sigma^2 - d \right \} \right). \label{pzkchoice}
\end{align} 
Then 
\begin{equation}
 \operatorname{Pr}\left(\|Z^k-s^k\|^2\leq kd\right) = \rho_{k, d}(\upsilon) \coloneqq L_k\left(\frac{\upsilon + 1 - 2 \delta}{2 \sqrt{\upsilon} \sqrt{1 - \delta}}\right), \label{rhokddef}
\end{equation}
where $\delta = \frac{d}{\sigma^2}$ and $\upsilon = \frac{\|s^k\|^2}{k \sigma^2}$. 
\label{lemmaball}
\end{lemma}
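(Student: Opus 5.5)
The plan is to reduce the event $\{\|Z^k-s^k\|^2\le kd\}$ to a threshold event on the first coordinate of a uniformly distributed unit vector, and then apply the characterization $\Pr[U_1\ge t]=L_k(t)$ stated just before the lemma.

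Write $Z^k = r\,U^k$, where $r=\sqrt{k(\sigma^2-d)}$ and $U^k$ is uniform on the unit sphere in $\mathbb{R}^k$. Expanding the square gives
\begin{align*}
\|Z^k-s^k\|^2 = r^2 + \|s^k\|^2 - 2r\langle U^k, s^k\rangle .
\end{align*}
With $\|s^k\|^2 = k\sigma^2\upsilon$, the event $\|Z^k-s^k\|^2\le kd$ is therefore equivalent to
\begin{align*}
\left\langle U^k, \frac{s^k}{\|s^k\|}\right\rangle \ge \frac{k(\sigma^2-d)+k\sigma^2\upsilon-kd}{2\sqrt{k(\sigma^2-d)}\sqrt{k\sigma^2\upsilon}}.
\end{align*}
Dividing numerator and denominator by $k\sigma^2$ simplifies the right-hand side to $t^\ast = \frac{\upsilon+1-2\delta}{2\sqrt{\upsilon}\sqrt{1-\delta}}$. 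Since $s^k\neq\mathbf{0}$, the unit vector $s^k/\|s^k\|$ is well defined.

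By rotational invariance of the uniform law on the sphere, choose an orthogonal matrix mapping $s^k/\|s^k\|$ to $e_1$. Then $\langle U^k, s^k/\|s^k\|\rangle$ has the same distribution as $U_1$, and the probability equals $\Pr[U_1\ge t^\ast]$. This leaves a single step: confirming that $\Pr[U_1\ge t]=L_k(t)$ for every $t$, including the boundary cases and $k=1$.

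\textbf{Case $k\ge2$.} The density of $U_1$ on $(-1,1)$ is proportional to $(1-u^2)^{(k-3)/2}$, so $U_1^2\sim\mathrm{Beta}(\tfrac12,\tfrac{k-1}{2})$ and $U_1$ is symmetric.
\begin{itemize}
\item For $0\le t<1$: $\Pr[U_1\ge t]=\tfrac12\Pr[U_1^2\ge t^2]=\tfrac12\Pr[1-U_1^2\le 1-t^2]$. Since $1-U_1^2\sim\mathrm{Beta}(\tfrac{k-1}{2},\tfrac12)$, this equals $\tfrac12 I_{1-t^2}(\tfrac{k-1}{2},\tfrac12)$.
\item For $-1<t<0$: by symmetry, $\Pr[U_1\ge t]=1-\Pr[U_1>|t|]$, and $U_1$ has no atoms, so this matches the second formula.
\item For $t\ge1$ and $t\le-1$: the values $0$ and $1$ follow because $U_1$ has no atoms. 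For $t\ge 1$ this uses $\Pr[U_1=1]=0$.
\end{itemize}

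\textbf{Case $k=1$.} Here $U_1$ is uniform on $\{\pm1\}$, so $\Pr[U_1\ge t]$ equals $1$ for $t\le-1$, $\tfrac12$ for $-1<t\le1$, and $0$ for $t>1$. This matches $L_1$ exactly, including at $t=1$, where $\Pr[U_1\ge 1]=\tfrac12$.

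The argument is essentially routine. The only real care points are the Beta-function identification with the correct parameter order, and the boundary conventions: ties at $t^\ast=\pm1$, and the atom at $t=1$ when $k=1$. Note also that $t^\ast\ge0$ always holds when $\delta\le 1/2$; in general it can be negative, which is why both branches of $L_k$ are needed.
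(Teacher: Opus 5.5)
Your proposal is correct and follows the same route as the paper: write $Z^k=rU^k$ with $r=\sqrt{k(\sigma^2-d)}$, rotate $s^k$ onto the first axis to reduce the event to $\{U_1\geq t\}$ with $t=\frac{\upsilon+1-2\delta}{2\sqrt{\upsilon}\sqrt{1-\delta}}$, and then evaluate $\Pr(U_1\geq t)$ case by case (the two-point law for $k=1$, and for $k\geq 2$ the density proportional to $(1-u^2)^{(k-3)/2}$ together with symmetry). The only cosmetic difference is that you identify $1-U_1^2\sim\mathrm{Beta}(\frac{k-1}{2},\frac12)$ directly, whereas the paper obtains the same incomplete-beta expression through the substitution $u=\sqrt{1-s}$ in the integral.
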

\textit{Proof:} The proof of Lemma \ref{lemmaball} is given in Appendix \ref{lemmaball_proof}. The result below is a straightforward corollary of Lemma \ref{lemmaball}. 
\begin{corollary}
\label{lemmaball_coro}
    Consider any sequence $s^k \neq \mathbf{0}$ and a random sequence $Z^k$ uniformly distributed on the surface of a sphere in $\mathbb{R}^k$ with radius $\sqrt{k r}$. Then for all $r > 0$ and $d > 0$,  
\begin{equation}
    \operatorname{Pr}\left(\|Z^k-s^k\|^2\leq kd\right) = L_k \left( \frac{\| s^k \|^2 + kr - kd}{2 \sqrt{kr}\, \|s^k \|} \right). \label{opt_int_L_K} 
\end{equation}
\end{corollary}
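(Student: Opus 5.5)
The plan is to reduce Corollary \ref{lemmaball_coro} to the direct spherical-cap computation that underlies Lemma \ref{lemmaball}. We cannot invoke Lemma \ref{lemmaball} as a black box, because it requires $r=\sigma^2-d$ with $d<\sigma^2$, whereas here $r>0$ and $d>0$ are arbitrary. Instead we repeat its geometric argument, which only uses that $Z^k$ is uniform on a sphere.

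Write $Z^k=\sqrt{kr}\,U^k$, with $U^k$ uniform on the unit sphere. Expanding the squared distance gives
\begin{align*}
\|Z^k-s^k\|^2 = kr + \|s^k\|^2 - 2\sqrt{kr}\,\langle U^k, s^k\rangle .
\end{align*}
Hence the event $\|Z^k-s^k\|^2\le kd$ is equivalent to
\begin{align*}
\left\langle U^k, \frac{s^k}{\|s^k\|}\right\rangle \ge t := \frac{\|s^k\|^2+kr-kd}{2\sqrt{kr}\,\|s^k\|},
\end{align*}
which uses $s^k\neq\mathbf 0$ and $r>0$.

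By rotational invariance of the uniform measure on the sphere, the inner product $\langle U^k, s^k/\|s^k\|\rangle$ has the same law as $U_1$. The probability is therefore $\Pr[U_1\ge t]$, and this equals $L_k(t)$ by the stated characterization of the normalized cap function.

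The step needing care is verifying $\Pr[U_1\ge t]=L_k(t)$ for all real $t$, including the boundary conventions.
\begin{itemize}
\item For $k\ge 2$, $U_1$ has density proportional to $(1-u^2)^{(k-3)/2}$ on $(-1,1)$, and it has no atoms. For $0\le t<1$, the substitution $u^2=x$ gives the tail $\tfrac12 I_{1-t^2}\bigl(\tfrac{k-1}{2},\tfrac12\bigr)$. For $-1<t<0$, symmetry of the density gives the complementary expression. For $|t|\ge 1$ the values are $0$ or $1$, matching the definition of $L_k$ (at $t=1$ the tail is $0$ because there is no atom).
\item For $k=1$, $U_1=\pm1$ with probability $\tfrac12$ each. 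Then $\Pr[U_1\ge t]$ equals $0$ for $t>1$, $\tfrac12$ for $-1<t\le 1$, and $1$ for $t\le -1$. These are exactly the conventions in the definition of $L_1$.
\end{itemize}

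Finally, Lemma \ref{lemmaball} is recovered as a consistency check. Substituting $r=\sigma^2-d$ and $\|s^k\|^2=k\sigma^2\upsilon$ into $t$ gives
\begin{align*}
t=\frac{\upsilon+1-2\delta}{2\sqrt{\upsilon}\sqrt{1-\delta}},
\end{align*}
which is the argument of $L_k$ in \eqref{rhokddef}.
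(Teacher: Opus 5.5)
Your proof is correct. It is the spherical-cap computation of Appendix~\ref{lemmaball_proof}, run with a general radius $\sqrt{kr}$: expand $\|Z^k-s^k\|^2$, rotate $s^k$ onto the first axis, reduce to $\Pr[U_1\ge t]$, then evaluate with the beta density for $k\ge 2$ and the two-point law for $k=1$. Your handling of the boundary conventions in $L_k$ and $L_1$ is also right.

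The one thing to correct is your stated reason for not citing Lemma~\ref{lemmaball} directly. That lemma is a deterministic statement about a fixed $s^k\neq\mathbf 0$, and it holds for every $\sigma>0$ and $d\in(0,\sigma^2)$. So given arbitrary $r>0$ and $d>0$, you may simply take $\sigma^2=r+d$. Then $d\in(0,\sigma^2)$, and the reproduction sphere in \eqref{pzkchoice} has radius $\sqrt{k(\sigma^2-d)}=\sqrt{kr}$. With $\delta=d/(r+d)$ and $\upsilon=\|s^k\|^2/(k(r+d))$, one checks
\[
\frac{\upsilon+1-2\delta}{2\sqrt{\upsilon}\sqrt{1-\delta}}=\frac{\|s^k\|^2+kr-kd}{2\sqrt{kr}\,\|s^k\|}.
\]
This is presumably the ``straightforward corollary'' route the paper has in mind. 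It needs only a two-line algebraic check. Your re-derivation is longer but self-contained, and it makes explicit that the cap computation never uses $\sigma$ as the source variance. Both arguments are valid.
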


\textit{Geometric interpretation:} Lemma~\ref{lemmaball} gives an exact finite-dimensional expression for the
probability that a single reproduction codeword covers a fixed source
sequence. Since the reproduction distribution \(P_{Z^k}\) is rotationally
invariant, this probability depends on \(s^k\) only through its normalized
energy
\[
\upsilon=\frac{\|s^k\|^2}{k\sigma^2}.
\]
Geometrically, the distortion ball \(B_d(s^k)\) cuts out a spherical cap on
the reproduction sphere of radius \(\sqrt{k(\sigma^2-d)}\), and
\(\rho_{k,d}(\upsilon)\) is the ratio of the surface area of this cap to the total surface area of the reproduction sphere.

\textit{Random-coding analysis:} If a random codebook consists of \(M\) independent codewords
\(Z^k(1),\ldots,Z^k(M)\), each distributed according to \(P_{Z^k}\), then,
for every fixed \(s^k\) with normalized energy \(\upsilon\),
\[
\mathbb{P}\left[
\min_{1\leq m\leq M}\|Z^k(m)-s^k\|^2>kd
\,\middle|\, S^k=s^k
\right]
=
\bigl(1-\rho_{k,d}(\upsilon)\bigr)^M.
\]
Thus, the random-coding analysis is governed by the magnitude of
\(-\log \rho_{k,d}(\upsilon)\). Under the Gaussian source distribution, the
normalized source energy
\[
\Upsilon=\frac{\|S^k\|^2}{k\sigma^2} \stackrel{d}{=} \frac{\chi_k^2}{k},
\]
and thus $\Upsilon$ concentrates near \(1\). The next proposition gives a threshold-uniform normal approximation for functions of this energy that admit a local quadratic expansion. Its spherical-cap specialization (Corollary \ref{coro_specialize}) describes \(-\log \rho_{k,d}(\Upsilon)\), allowing also for a locally uniformly bounded perturbation.

\begin{proposition}[Normal approximation from a local quadratic expansion]
\label{prop:normal-approximation}
Let $\Upsilon\stackrel{d}{=}\chi_k^2/k$. Let $(f_k)_{k\geq1}$ be any sequence of measurable functions from $(0,\infty)$ to $\mathbb R\cup\{-\infty,+\infty\}$, and let $(b_k)_{k\geq1}$ be any sequence of deterministic real numbers. Suppose there exist positive constants $\eta_1\in(0,1)$, $k_1\geq1$, and $C_1<\infty$ such that, for every $k\geq k_1$ and $|\upsilon-1|\leq\eta_1$,
\begin{align}
\left|f_k(\upsilon)-b_k-\frac{k}{2}(\upsilon-1)\right|
&\leq C_1\left(1+k(\upsilon-1)^2\right).
\label{eq:local-quadratic-expansion}
\end{align}
Define the extended-real-valued random variables 
\begin{align}
    W_k = f_k(\Upsilon), \quad \quad \widehat W_k = \frac{W_k - b_k}{\sqrt{k/2}}. 
\end{align}
Then there exist constants $C>0$ and $k_0\geq k_1$, depending only on $\eta_1$, $k_1$, and $C_1$, such that, for $k\geq k_0$,
\begin{align}
\sup_{w\in\mathbb R}
\left|\mathbb P \left [  \widehat W_k \leq w \right ]-\Phi(w)\right|
&\leq\frac{C}{\sqrt{k}}.
\label{tronle}
\end{align}
\end{proposition}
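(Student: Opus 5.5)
The plan is to compare $\widehat W_k$ with the standardized energy $T_k \coloneqq \sqrt{k/2}\,(\Upsilon-1)$. By the Berry--Esseen theorem for the i.i.d.\ sum $\chi_k^2=\sum_{i=1}^k G_i^2$, where the $G_i^2$ have variance $2$ and finite third absolute moment, we have $\sup_t|\mathbb P[T_k\le t]-\Phi(t)|\le C_{\mathrm{BE}}/\sqrt k$ with a universal constant. Write $\widehat W_k = T_k + R_k$ on the event $E_k\coloneqq\{|\Upsilon-1|\le \eta_1\}$. By \eqref{eq:local-quadratic-expansion}, for $k\ge k_1$,
\begin{align*}
|R_k|\le \frac{C_1\bigl(1+k(\Upsilon-1)^2\bigr)}{\sqrt{k/2}} = \frac{\sqrt2\,C_1}{\sqrt k}\bigl(1+2T_k^2\bigr).
\end{align*}
This perturbation is $O(k^{-1/2})$ only when $T_k$ is bounded. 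The key step is therefore a truncation at a level that still keeps the Gaussian tail of order $k^{-1/2}$.

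Fix $a_k\coloneqq\sqrt{\log k}$. On $F_k\coloneqq E_k\cap\{|T_k|\le a_k\}$ we have $|R_k|\le \epsilon_k\coloneqq \sqrt2 C_1(1+2\log k)/\sqrt k$. Then
\begin{align*}
\mathbb P[\widehat W_k\le w]\le \mathbb P[T_k\le w+\epsilon_k]+\mathbb P[F_k^c],
\end{align*}
and similarly $\mathbb P[\widehat W_k\le w]\ge\mathbb P[T_k\le w-\epsilon_k]-\mathbb P[F_k^c]$. Combining Berry--Esseen with $|\Phi(w\pm\epsilon_k)-\Phi(w)|\le\epsilon_k/\sqrt{2\pi}$ gives an error of order $\log k/\sqrt k$, which is off by a logarithm. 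This crude route is not enough, so I would refine the perturbation bound to remove the $\log k$.

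The refinement uses that the bound on $R_k$ is quadratic in $T_k$. Since the event $\{\widehat W_k\le w\}$ is sandwiched between $\{T_k+\tfrac{\sqrt2 C_1}{\sqrt k}(1+2T_k^2)\le w\}$ and $\{T_k-\tfrac{\sqrt2 C_1}{\sqrt k}(1+2T_k^2)\le w\}$ on $E_k$, it suffices to control $\mathbb P[g_\pm(T_k)\le w]$ for $g_\pm(t)=t\pm c k^{-1/2}(1+2t^2)$. On $|t|\le \sqrt k/(8c)$, the map $g_\pm$ is strictly increasing with derivative in $[1/2,3/2]$. Hence on this range $\{g_\pm(T_k)\le w\}=\{T_k\le g_\pm^{-1}(w)\}$, and $|g_\pm^{-1}(w)-w|\le C' k^{-1/2}(1+w^2)$ for $|w|$ in the corresponding range. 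Berry--Esseen then gives $\Phi(g_\pm^{-1}(w))+O(k^{-1/2})$. The main estimate is
\begin{align*}
|\Phi(g_\pm^{-1}(w))-\Phi(w)|\lesssim k^{-1/2}(1+w^2)\,\phi\bigl(w/2\bigr)\le C''k^{-1/2},
\end{align*}
using the Gaussian density on the interval between $w$ and $g_\pm^{-1}(w)$, which stays within a factor $2$ of $w$. Large $|w|$, beyond a level of order $\sqrt k$, is handled by monotonicity together with tail bounds for both $\Phi$ and $T_k$.

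The remaining terms are exponentially small. First, $\mathbb P[E_k^c]$ and $\mathbb P[|T_k|>\sqrt k/(8c)]$ are bounded by $2e^{-c'k}$ using the Chernoff bound for $\chi^2_k/k$ at a fixed deviation $\min(\eta_1,1/(8c\sqrt2))$. Second, the values of $f_k$ off $E_k$, which may be $\pm\infty$, only enter through this event and cost at most $\mathbb P[E_k^c]$. Both are $\le k^{-1/2}$ for $k\ge k_0$. All constants depend only on $C_1,\eta_1,k_1$ and the universal Berry--Esseen constant, as required.

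I expect the main obstacle to be the uniformity in $w$ when inverting $g_\pm$. One must check that the $(1+w^2)$ growth of the threshold shift is absorbed by the Gaussian density uniformly in $w$, and handle the regime $|w|\gtrsim\sqrt k$ separately via tails.
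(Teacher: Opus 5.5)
Your proposal is correct and follows essentially the same route as the paper: Berry--Esseen for $\sqrt{k/2}\,(\Upsilon-1)$, quadratic envelopes $u\pm c(1+u^2)/\sqrt{k}$ whose derivatives stay bounded away from $0$ and $\infty$ on a window of width of order $\sqrt{k}$, inversion with a Gaussian-density comparison that absorbs the $(1+u^2)$ shift, and a Chernoff bound for the atypical-energy event. The differences are cosmetic: the paper extends the envelopes affinely with slope $1$ outside the window, so they become global increasing bijections and your separate large-$|w|$ tail step is unnecessary; it also does the density comparison in terms of the preimage $u_w$, splitting into $|u_w|$ of order $k^{-1/2}$ versus larger, rather than in terms of $w$.
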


\begin{corollary}[Spherical-cap specialization]
    Fix $\sigma>0$ and $d\in(0,\sigma^2)$, and let $S^k$ have i.i.d. $\mathcal N(0,\sigma^2)$ components. Recall the definition of $\rho_{k, d}(\upsilon)$ in \eqref{rhokddef}. Let each $\xi_k:(0,\infty)\to\mathbb R\cup\{-\infty,+\infty\}$ be measurable, and suppose that $-\log\rho_{k,d}(\upsilon)+\xi_k(\upsilon)$ is well-defined as an extended real number for every $k\geq1$ and $\upsilon>0$, with the convention $-\log0=+\infty$. Suppose that, for some $\eta_1\in(0,1)$, $k_1\geq1$, and $C_1<\infty$,
\begin{align}
\sup_{k\geq k_1}\sup_{|\upsilon-1|\leq\eta_1}|\xi_k(\upsilon)|
&\leq C_1.
\label{eq:spherical-perturbation-bound}
\end{align}
Set
\begin{align*}
\Upsilon&=\frac{\|S^k\|^2}{k\sigma^2},\\
W_k&=-\log\rho_{k,d}(\Upsilon)+\xi_k(\Upsilon),\\
b_k&=kR(d)+\frac12\log k.
\end{align*}
Then there exist constants $C$ and $k_0$, depending only on $\delta=d/\sigma^2$, $\eta_1$, $k_1$ and $C_1$, such that, for $k\geq k_0$,
\begin{align}
\sup_{w\in\mathbb R}
\left|\mathbb P\left[\frac{W_k-b_k}{\sqrt{k/2}}\leq w\right]-\Phi(w)\right|
&\leq\frac{C}{\sqrt{k}}.
\label{eq:spherical-cap-normal}
\end{align}
\label{coro_specialize}
\end{corollary}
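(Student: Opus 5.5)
The plan is to reduce Corollary~\ref{coro_specialize} to Proposition~\ref{prop:normal-approximation} with $f_k(\upsilon)=-\log\rho_{k,d}(\upsilon)+\xi_k(\upsilon)$ and $b_k=kR(d)+\frac12\log k$. Everything then comes down to verifying the local quadratic expansion \eqref{eq:local-quadratic-expansion} on a neighborhood $|\upsilon-1|\le\eta_1'$, with a constant depending only on $\delta$, $\eta_1$, $k_1$ and $C_1$. The perturbation $\xi_k$ is harmless here: by \eqref{eq:spherical-perturbation-bound} it contributes at most $C_1$ to the left side of \eqref{eq:local-quadratic-expansion}, provided we take $\eta_1'\le\eta_1$. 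Measurability is immediate, since $\rho_{k,d}$ is a monotone function of the continuous map $\upsilon\mapsto t(\upsilon)$.

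\textbf{Step 1: geometry of the threshold.} Write $t(\upsilon)=\frac{\upsilon+1-2\delta}{2\sqrt{\upsilon}\sqrt{1-\delta}}$, so that $\rho_{k,d}(\upsilon)=L_k(t(\upsilon))$. At $\upsilon=1$ we get $t(1)=\sqrt{1-\delta}\in(0,1)$ and $1-t(1)^2=\delta$. By continuity, choose $\eta_1'\le\eta_1$ (depending only on $\delta$) so that for $|\upsilon-1|\le\eta_1'$ we have $t(\upsilon)\in[\tau_0,\tau_1]$ for fixed constants $0<\tau_0<\tau_1<1$. Let $g(\upsilon)=-\frac12\log(1-t(\upsilon)^2)$; then $g(1)=R(d)$. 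Differentiating,
\begin{align*}
\frac{d}{d\upsilon}t(\upsilon)^2=\frac{(\upsilon+1-2\delta)(\upsilon-1+2\delta)}{4(1-\delta)\upsilon^2},
\end{align*}
which equals $\delta$ at $\upsilon=1$. Hence $g'(1)=\frac12$. Since $g$ is smooth on the chosen neighborhood with $g''$ bounded by a constant depending on $\delta$, Taylor's theorem gives
\begin{align*}
k\,g(\upsilon)=kR(d)+\frac{k}{2}(\upsilon-1)+O\bigl(k(\upsilon-1)^2\bigr).
\end{align*}

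\textbf{Step 2: the main obstacle, a uniform estimate of the cap area.} I need
\begin{align*}
-\log L_k(t)=-\frac{k-1}{2}\log(1-t^2)+\frac12\log k+O(1)
\end{align*}
uniformly for $t\in[\tau_0,\tau_1]$ and $k\ge2$. Only an $O(1)$ error is required, so crude two-sided bounds suffice. Set $x=1-t^2\in[1-\tau_1^2,1-\tau_0^2]$ and $a=(k-1)/2$. Then
\begin{align*}
I_x(a,\tfrac12)=\frac{1}{B(a,\frac12)}\int_0^x u^{a-1}(1-u)^{-1/2}\,du .
\end{align*}
On $[0,x]$ the factor $(1-u)^{-1/2}$ lies between $1$ and $(1-x)^{-1/2}\le\tau_0^{-1}$. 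The integral is therefore between $x^a/a$ and $x^a/(a\tau_0)$. Next, $B(a,\frac12)=\Gamma(\frac12)\Gamma(a)/\Gamma(a+\frac12)$, which is $\sqrt{\pi}\,a^{-1/2}$ up to factors bounded above and below uniformly in $a\ge\frac12$ (Gautschi/Wendel inequality). Hence $L_k(t)=\frac12 I_x(a,\frac12)$ equals $x^a a^{-1/2}$ times a factor in a fixed compact subset of $(0,\infty)$. Taking logarithms gives the claimed estimate, since $\frac12\log a=\frac12\log k+O(1)$.

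\textbf{Step 3: assembly.} Combining Steps 1 and 2, for $|\upsilon-1|\le\eta_1'$ and $k\ge\max(k_1,2)$,
\begin{align*}
f_k(\upsilon)-b_k-\frac{k}{2}(\upsilon-1)=\frac12\log(1-t^2)+O(1)+O\bigl(k(\upsilon-1)^2\bigr)+\xi_k(\upsilon).
\end{align*}
Here $\frac12\log(1-t^2)$ is bounded on $[\tau_0,\tau_1]$, and $|\xi_k|\le C_1$. Thus \eqref{eq:local-quadratic-expansion} holds with constants depending only on $\delta,\eta_1,k_1,C_1$. Proposition~\ref{prop:normal-approximation} then yields \eqref{eq:spherical-cap-normal}. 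Behavior of $f_k$ outside the neighborhood, including infinite values, is irrelevant, because the Proposition allows arbitrary extended-real measurable $f_k$ there.
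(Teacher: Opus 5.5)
Your proposal is correct and follows essentially the same route as the paper: reduce to Proposition~\ref{prop:normal-approximation} by checking the local quadratic expansion on a neighborhood where $t(\upsilon)$ stays in a compact subinterval of $(0,1)$, with the beta-integral form of $L_k$ supplying the exponential factor $(1-t^2)^{(k-1)/2}$, a $k^{-1/2}$ prefactor, and a bounded correction. The differences are only in bookkeeping. The paper uses the exact factorization $\rho_{k,d}=\lambda_k s^{(k-1)/2}\ell_k(s)$ with Stirling's formula and the explicit identity for $s(\upsilon)=1-t(\upsilon)^2$, whereas you use two-sided bounds on the integral, Wendel's inequality, and a direct computation of $g'(1)=\tfrac12$.
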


\textit{Proof:} Proposition~\ref{prop:normal-approximation} and its spherical-cap specialization corollary are proved in Appendix~\ref{prop:normal-approximation_proof}.

\textit{Discussion of Proposition~\ref{prop:normal-approximation}:} The proposition gives a normal approximation for any function of the normalized Gaussian source energy that satisfies the local expansion~\eqref{eq:local-quadratic-expansion}. The function need not be differentiable or monotone, and its behavior outside the specified neighborhood of $\upsilon=1$ is unrestricted. The permitted remainder has size \(O(1+k(\upsilon-1)^2)\), so the proposition applies to functions that share the same local linear behavior at \(\upsilon=1\) while differing at quadratic order. This flexibility will allow Proposition \ref{prop:normal-approximation} to be applied in both the achievability and converse analyses leading to Theorems~\ref{asymp_thm_achievability} and~\ref{thm_semiasympconverse}.

A useful sufficient condition for~\eqref{eq:local-quadratic-expansion} is the representation
\[
f_k(\upsilon)=b_k+\frac{k}{2}h(\upsilon)+r_k(\upsilon),
\]
where $h$ is twice continuously differentiable near $1$, $h(1)=0$, $h'(1)=1$, and $r_k$ is uniformly bounded on a fixed neighborhood of $1$. Taylor's theorem then gives~\eqref{eq:local-quadratic-expansion}; no differentiability of the remainder term $r_k(\upsilon)$ is needed.

Corollary \ref{coro_specialize} gives a \emph{robust} and \emph{threshold-uniform} normal approximation for the logarithm of the
spherical-cap probability $\rho_{k,d}(\Upsilon)$ from Lemma \ref{lemmaball}. More precisely, it shows that 
\[
-\log \rho_{k,d}(\Upsilon)+\xi_k(\Upsilon),
\]
admits the deterministic centering
\[
b_k=kR(d)+\frac12\log k,
\]
fluctuations of order $\sqrt{k/2}$, and, after centering and scaling, a standard normal distribution up to a
Kolmogorov error of order $k^{-1/2}$. It is robust in the sense that the same approximation remains valid after adding any perturbation $\xi_k(\upsilon)$
that is uniformly bounded on a fixed neighborhood of $1$. On the other hand, threshold-uniform refers to the supremum in ~\eqref{tronle} and~\eqref{eq:spherical-cap-normal} that will
be essential in our application of this result: it permits the threshold to be replaced, in the subsequent
JSCC analysis, by the independent channel random variable while preserving the same $O(k^{-1/2})$ error.


Relevant source-coding comparisons for the spherical-cap specialization are
\cite[Lemma~2]{kostina_SC}, which gives a one-sided, high-probability refinement of the lossy AEP for the logarithm of a distortion-ball probability, and \cite[Theorem~40 and Appendix~K]{kostina_SC}, which derives a GMS rate expansion at a fixed target excess-distortion probability. Related spherical-cap geometry appears in the Gaussian source-coding converse of Palzer and Timo \cite{palzer_timo_converse}; see also \cite[Section~3.2.5]{palzer_thesis}. In contrast, Proposition \ref{prop:normal-approximation}
gives a normal approximation uniformly over all thresholds, which is the form needed for the subsequent
channel-uniform conditioning argument.


\textit{Proof Outline of Proposition~\ref{prop:normal-approximation}:} At a high level, the proof proceeds in four steps.
\begin{enumerate}
    \item Since $\Upsilon\stackrel{d}{=}\chi_k^2/k$, a Chernoff bound shows that $\Upsilon$ lies in a fixed neighborhood of $1$ with probability $1-O(e^{-\gamma k})$.

    \item On this neighborhood, the local expansion~\eqref{eq:local-quadratic-expansion} implies that 
    \[
    |\widehat W_k-U_k|\leq\frac{C(1+U_k^2)}{\sqrt{k}},
    \qquad
    U_k=\sqrt{\frac{k}{2}}(\Upsilon-1)
    \stackrel{d}{=} \frac{\chi_k^2-k}{\sqrt{2k}}.
    \]
    The Berry--Esseen theorem gives a standard normal approximation for $U_k$ with error $O(k^{-1/2})$.

    \item On a sufficiently small fixed neighborhood, the functions
    \[
    g_{k,\pm}(u)=u\pm\frac{C(1+u^2)}{\sqrt{k}}
    \]
    are strictly increasing over the corresponding range of $U_k$ and bound $\widehat W_k$ from below and above. We extend each function continuously, with slope $1$ outside this range, to an increasing bijection over $\mathbb R$. Their inverse functions, together with the bounded Gaussian density and its tail decay, transfer the normal approximation from $U_k$ to both bounding functions $g_{k, +}(U_k)$ and $g_{k, -}(U_k)$.
    \item Finally, a distribution-function sandwich transfers the approximation from the two bounding functions to $\widehat W_k$. The complement of the typical neighborhood is handled by the exponential Chernoff bound, yielding a uniform estimate over all $w\in\mathbb R$.
\end{enumerate}

\begin{table*}[h]
\begin{minipage}{\textwidth}
\begin{center}
\setcellgapes{2pt}
\makegapedcells
\renewcommand{\arraystretch}{1.12}

\begin{tabularx}{\textwidth}{|C{0.22\textwidth}|C{0.15\textwidth}|Y|}
\hline
\textbf{Category} & \textbf{Parameter} & \textbf{Description} \\
\hline

\multirow{3}{*}{\makecell{\textbf{Channel}\\\textbf{parameters}}}
& \(m\)
& BLEC block size; the channel input (resp. output) alphabet has size $2^m$ (resp. $2^m + 1$). \\
\cline{2-3}

& \(p\)
& Erasure probability for the memoryless block erasure channel
  \(\operatorname{BLEC}(2^m,p)\). \\
\cline{2-3}

& \(p_1,\ldots,p_n\)
& Erasure probabilities across the \(n\) channel uses of a nonstationary BLEC. \\
\hline

\makecell{\textbf{Source} \textbf{parameter}}
& \(\sigma^2\)
& Variance of the Gaussian source \(P_S=\mathcal{N}(0,\sigma^2)\). \\
\hline

\multirow{4}{*}{\makecell{\textbf{Design}\\\textbf{parameters}}}
& \(d\)
& Target distortion level. \\
\cline{2-3}

& \(\epsilon\)
& Target excess-distortion probability. \\
\cline{2-3}

& \(k\)
& Source blocklength. \\
\cline{2-3}

& \(n\)
& Channel blocklength. \\
\hline
\end{tabularx}

\end{center}
\caption{Joint source--channel coding parameters.}
\label{tab:parameters}
\end{minipage}
\end{table*}

An $(n, k)$ source-channel code is a pair of (possibly randomized) mappings $\operatorname{f} : \mathbb{R}^k \to \mathcal{X}^n$ and $\operatorname{g} : \mathcal{Y}^n \to \mathbb{R}^k$. An $(n, k)$ source-channel code is called $(d, \epsilon, n, k)$ if $\operatorname{Pr}(\|S^k - Z^k\|^2 > kd) \leq \epsilon$, where $S^k \sim P_S^{\otimes k}$, $Z^k = \operatorname{g}(Y^n)$, $X^n = \operatorname{f}(S^k)$ and $Y^n|X^n \sim P_{Y^n|X^n}$.

Achievability results give a sufficient condition in terms of $d, \epsilon, n$ and $k$ for an optimal source-channel code to be $(d, \epsilon, n, k)$. Converse results give a necessary condition for any code to be $(d, \epsilon, n, k)$. The excess-distortion event $\{\|S^k - Z^k\|^2 > k d \}$ is sometimes also called the error event. This is because the target excess-distortion probability $\epsilon$ plays a similar role in the fundamental bounds for JSCC as the average error probability does in channel coding. For example, compare \cite[(1)]{kostina_JSCC} for channel coding and \cite[(3)]{kostina_JSCC} for joint source-channel coding. 

\section{Achievability Results}

We first describe the random source-channel coding scheme underlying the achievability/existence result in Theorem \ref{thm_achievability}.    

\textit{Description of the source-channel code:} We denote the random $(n, k)$ source-channel code depending on the distortion level $d$ as $C_{n, k}^{(d)}$. Let
$\{ (\bar{X}_i^n, \bar{Z}_i^k)  \}_{i \in \mathbb{N}}$ be an infinite random codebook where each entry $(\bar{X}_i^n, \bar{Z}_i^k)$ is a pair of channel and reproduction sequences which are independently generated according to the product measure $ P_{X^n} \times P_{Z^k}$, where $P_{X^n} = \operatorname{Unif}( \mathcal{X}^n)$ and
\begin{align}
    P_{Z^k} = \operatorname{Unif}\left( \left \{z^k \in \mathbb{R}^k : \frac{1}{k}\sum_{i=1}^k z_i^2 = \sigma^2 - d \right \} \right). \label{pzkch2oice}
\end{align}
The codebook is independent of the source $S^k$ and is shared between the encoder and decoder of $C_{n, k}^{(d)}$. The encoder and decoder mappings are defined as follows: 
\begin{align}
   &s^k \mapsto \bar{X}_{i^\star}^n, \text{ where } i^\star = \begin{cases}
        \min \limits_{\bar{Z}_i^k \in B_d(s^k)} i & \text{ if } P_{Z^k}(B_d(s^k)) > 0,\\
        1 & \text{ otherwise,}
    \end{cases} \label{bec_enc}\\
    &y^n \mapsto \bar{Z}_{i^*}^k, \text{ where } i^* = \min_{ \bar{X}_{i, j} = y_j \forall j \text{ with } y_j \neq e} i. \label{bec_dec}
\end{align}

The results below are given for nonstationary block erasure channels. For the stationary case $p_1 = \cdots = p_n = p$, replace $\pi_j$ with $\widetilde{\pi}_j$ throughout. The results apply to both deterministic and random codes.  

\begin{theorem}[Non-asymptotic achievability]
Fix any $\sigma > 0$, $d \in (0, \sigma^2)$, $n \in \mathbb{Z}_{\geq 1}$ and $k \in \mathbb{Z}_{\geq 1}$. Let $\delta = d/\sigma^2$. Then there exists a $(d, \epsilon, n, k)$ source-channel code provided that   
\begin{align}
    \epsilon &\geq \sum_{j=0}^n \pi_j \int_0^\infty p_{\Upsilon}(\upsilon) \Psi_j(\upsilon; d)  d\upsilon, \label{achie2r794h}  
\end{align}
where 
\begin{align*}
    \Psi_j(\upsilon; d) &= \frac{1-\rho_{k, d}(\upsilon)}
{1+\bigl(2^{mj}-1\bigr)\rho_{k, d}(\upsilon)}
\end{align*}
and $p_{\Upsilon}(\upsilon)$ is a PDF given by 
\begin{align}
    p_{\Upsilon}(\upsilon) &= \frac{(k/2)^{k/2}}{\Gamma(k/2)} \upsilon^{k/2-1} e^{-k\upsilon/2}, \quad \upsilon \geq 0. \label{pzzz}
\end{align}
\label{thm_achievability}
\end{theorem}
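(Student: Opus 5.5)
The plan is to compute the ensemble-average excess-distortion probability of the random code $C_{n,k}^{(d)}$ exactly, conditioned on the source realization and on the erasure pattern, and then average. Since the codebook-averaged error probability equals the right-hand side of \eqref{achie2r794h}, some realization of the shared codebook achieves at most that value. That realization is a deterministic $(d,\epsilon,n,k)$ code whenever \eqref{achie2r794h} holds.

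\textbf{Step 1: fix the conditioning.} Fix $s^k\neq\mathbf 0$; the event $s^k=\mathbf 0$ has probability zero. Write $\rho=\rho_{k,d}(\upsilon)$; by Lemma~\ref{lemmaball}, $\rho$ is the probability that a single codeword $\bar Z_i^k$ lies in $B_d(s^k)$. For the BLEC, the erasure indicators $\mathds 1\{Y_i=e\}$ are independent of the channel input, with $\Pr[Y_i=e]=p_i$. Otherwise $Y_i=X_i$. So I condition additionally on an erasure pattern with exactly $j$ unerased blocks, which has total probability $\pi_j$. This pattern is independent of the codebook and of $S^k$.

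\textbf{Step 2: exact error probability given $(s^k,j)$.} If $\rho=0$, then $\Psi_j=1$ and the bound is trivial. If $\rho>0$, the encoder index $i^\star$ is a.s. finite and geometric: $\Pr[i^\star=\ell]=\rho(1-\rho)^{\ell-1}$. The decoder outputs the smallest index $i^*$ whose $\bar X_i^n$ agrees with $y^n$ on the $j$ unerased coordinates. Index $i^\star$ itself always agrees, so $i^*\le i^\star$. Every index $i<i^\star$ has $\bar Z_i^k\notin B_d(s^k)$ by the definition of $i^\star$. Hence decoding succeeds if and only if no index $i<i^\star$ has a matching $\bar X_i^n$.

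The $\bar X_i^n$ are independent of the $\bar Z_i^k$, and hence of $i^\star$. Each earlier $\bar X_i^n$ matches with probability $q=2^{-mj}$, independently across $i$. So, given $i^\star=\ell$, success has probability $(1-q)^{\ell-1}$. Summing the geometric series gives
\begin{align*}
\Pr[\text{success}\mid s^k,j]=\sum_{\ell\ge1}\rho\bigl[(1-\rho)(1-q)\bigr]^{\ell-1}=\frac{\rho}{\rho+q-\rho q}.
\end{align*}
The complementary error probability simplifies, after multiplying through by $2^{mj}$, to
\begin{align*}
\frac{1-\rho}{1+(2^{mj}-1)\rho}=\Psi_j(\upsilon;d).
\end{align*}

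\textbf{Step 3: average.} The quantity $\Psi_j$ depends on $s^k$ only through $\upsilon$. Moreover $\Upsilon\stackrel{d}{=}\chi_k^2/k$, whose density is \eqref{pzzz}. Averaging over $\Upsilon$ and then over $j$ with weights $\pi_j$ yields exactly the right-hand side of \eqref{achie2r794h}. The stationary case follows by replacing $\pi_j$ with $\widetilde\pi_j$.

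\textbf{Main obstacle.} The step needing most care is the independence bookkeeping in Step 2. I must justify that, conditioned on $i^\star=\ell$ and on the erasure pattern, the earlier channel codewords $\bar X_1^n,\dots,\bar X_{\ell-1}^n$ remain i.i.d.\ uniform and independent of $\bar X_\ell^n$. This holds because $i^\star$ is a function of $(s^k,\{\bar Z_i^k\})$ only, and the pairs are drawn from the product measure $P_{X^n}\times P_{Z^k}$. I also need to note that the infinite codebook causes no issue: $i^\star<\infty$ a.s.\ when $\rho>0$, and the encoder's output always lies in the finite set $\mathcal X^n$.
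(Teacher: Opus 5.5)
Your proposal is correct and follows the paper's proof essentially step for step. Like the paper, you condition on $s^k$ and the erasure pattern, use the geometric law of the encoder index, note that each earlier channel codeword independently matches on the $j$ unerased blocks with probability $2^{-mj}$, sum the geometric series, and average over $\Upsilon$ and $j$. The only cosmetic difference is the case $\rho_{k,d}(\upsilon)=0$: the paper shows the conditional error probability there is exactly one, which makes the right-hand side of \eqref{achie2r794h} the exact ensemble average, whereas you only need the trivial bound, and that suffices for the theorem.
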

\textit{Proof:} The proof of Theorem \ref{thm_achievability} is given in Appendix \ref{thm_achievability_proof}.

\textit{Discussion of Theorem \ref{thm_achievability}:} Theorem \ref{thm_achievability}'s non-asymptotic bound is stronger than the bounds obtained by simply particularizing the general non-asymptotic bounds \cite[Theorem 8]{kostina_JSCC} and \cite[Theorem 4]{alternative_oneshot_JSCC} to the GMS-BLEC case; this particularization entails choosing the channel input distribution $P_{X^n} = \operatorname{Unif}( \mathcal{X}^n)$ and the reproduction distribution $P_{Z^k}$ as in \eqref{pzkch2oice}. First, note that \cite[Theorem 4]{alternative_oneshot_JSCC} is stronger than \cite[Theorem 8]{kostina_JSCC}. Second, particularizing the bound in \cite[Theorem 4]{alternative_oneshot_JSCC} to the GMS-BLEC case and then applying our Lemma \ref{lemmaball} gives us exactly the same result as Theorem \ref{thm_achievability} but with $\Psi_j(\upsilon; d)$ replaced with 
\begin{align*}
    \overline{\Psi}_j(\upsilon; d) = \frac{1}{1 +  2^{m j} \rho_{k, d}(\upsilon)}. 
\end{align*}
Clearly, $\overline{\Psi}_j(\upsilon; d) \geq \Psi_j(\upsilon; d)$ with equality if and only if $\rho_{k, d}(\upsilon) = 0$. Note also that the RHS of \eqref{achie2r794h} is an exact expression for the ensemble-average excess-distortion probability of our achievability scheme as opposed to an upper bound. 
The achievability scheme in $(\ref{pzkch2oice})$-$(\ref{bec_dec})$ can also be adapted for non-Gaussian sources so that a similar result to Theorem \ref{thm_achievability} can be proven for source-channel coding of arbitrary sources over block erasure channels; we state the adaptation without proof in Proposition \ref{general_sources_adapt}.
\begin{proposition}
    Fix any $d > 0$, $m \in \mathbb{Z}_{\geq 1}$, $n \in \mathbb{Z}_{\geq 1}$ and $k \in \mathbb{Z}_{\geq 1}$. Let $Q_{S^k}$ and $Q_{Z^k}$ be arbitrary Borel probability measures on $\mathbb{R}^k$. Let the erasure probabilities $p_1,\ldots,p_n$ satisfy $p_\ell\in[0,1]$ for $\ell\in\{1,\ldots,n\}$. Consider the source-channel code in $(\ref{bec_enc})$-$(\ref{bec_dec})$ with $P_{Z^k}$ replaced by the arbitrary reproduction distribution $Q_{Z^k}$ and $B_d(s^k) = \{z^k: \operatorname{d}_k(s^k, z^k) \leq d \}$ for some arbitrary measurable distortion function $\operatorname{d}_k$. Let $S^k \sim Q_{S^k}$ be independent of the random codebook and the channel. Define $\rho_{k, d}(s^k) = Q_{Z^k}\left( B_d(s^k) \right)$. Then the ensemble-average excess-distortion probability of the code is equal to 
    \begin{align}
        \sum_{j = 0}^n \pi_j  \int Q_{S^k}(d s^k) \Psi_j(s^k; d), 
    \end{align}
    where 
\begin{align}
\Psi_j(s^k; d)
= \displaystyle
\frac{1-\rho_{k, d}(s^k)}
{1+\bigl(2^{mj}-1\bigr)\rho_{k, d}(s^k)}.
\end{align}
    \label{general_sources_adapt}
\end{proposition}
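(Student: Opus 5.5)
The plan is to compute the ensemble-average excess-distortion probability exactly by conditioning on three things: the source realization $S^k=s^k$, the erasure pattern (in particular, the set $A$ of unerased positions with $|A|=j$), and the structure of the infinite codebook. The source, codebook, and channel are mutually independent, so conditioning on $A$ contributes a factor $\prod_{i\in A}(1-p_i)\prod_{i\notin A}p_i$. Summing over all $A$ with $|A|=j$ then produces $\pi_j$. Therefore it suffices to show that, for fixed $s^k$ and fixed $A$ with $|A|=j$, the conditional error probability equals $\Psi_j(s^k;d)$. Integrating against $Q_{S^k}$ and summing over $j$ then gives the claim. Write $\rho=\rho_{k,d}(s^k)$ and $q=2^{-mj}$.

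\textbf{Case $\rho=0$.} The encoder sends $\bar X_1^n$, and no codeword lies in $B_d(s^k)$ almost surely. Hence the decoded $\bar Z_{i^*}^k$, which is independent of $s^k$ given the codebook, is in $B_d(s^k)$ with probability $0$. The error probability is therefore $1=\Psi_j$.

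\textbf{Case $\rho>0$.} The encoder index $i^\star$ is the first $i$ with $\bar Z_i^k\in B_d(s^k)$; it is geometric with parameter $\rho$ and is almost surely finite. The decoder sees $y_\ell=\bar X_{i^\star,\ell}$ for $\ell\in A$ and picks $i^*$, the first index $i$ whose channel codeword agrees with $\bar X_{i^\star}$ on $A$. Since $i^*\le i^\star$ always, the decoding is correct in distortion if and only if the decoded reproduction lies in $B_d(s^k)$.

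We then examine the indices $i<i^\star$. Each such index has $\bar Z_i^k\notin B_d(s^k)$. Independently of this, it matches on $A$ with probability $q$, because the $\bar X_i$ are uniform on $\mathcal X^n$ and independent of the $\bar Z$'s. Hence $i^*<i^\star$ exactly when some earlier index matches, and in that case the decoded reproduction lies outside the ball, which is an error. Otherwise $i^*=i^\star$ and there is no error. This gives
\[
\Pr[\text{error}]=1-\sum_{t\ge1}\rho(1-\rho)^{t-1}(1-q)^{t-1}
=1-\frac{\rho}{1-(1-\rho)(1-q)}=\frac{(1-\rho)q}{\rho+q-\rho q}.
\]
Multiplying the numerator and denominator by $2^{mj}$ yields $(1-\rho)/(1+(2^{mj}-1)\rho)$, which is exactly $\Psi_j$. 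When $j=0$ we have $q=1$, and this correctly gives $1-\rho$.

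\textbf{Main obstacle.} The one point needing care is the conditional independence claim: given $i^\star=t$, the channel codewords $\bar X_1,\dots,\bar X_{t-1}$ must remain i.i.d.\ uniform and independent of $\bar X_t$. This holds because $i^\star$ is a function of the $\bar Z$'s and $s^k$ only, and the $\bar X$'s are independent of those. Measurability of the event $\{i^\star=t\}$ follows from the measurability of $B_d$. Finally, Fubini's theorem justifies averaging the conditional probability over $Q_{S^k}$.
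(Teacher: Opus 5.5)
Your proposal is correct and follows essentially the same argument as the paper's proof of Theorem~\ref{thm_achievability}, which the paper says adapts directly to this proposition: condition on the source realization and the erasure pattern, and treat $\rho=0$ separately. For $\rho>0$, use the geometric law of the encoder index and the conditional probability $(1-2^{-mj})^{t-1}$ that no earlier channel codeword agrees on the unerased positions, then sum the geometric series. The only difference is that you integrate directly against $Q_{S^k}$ rather than reducing to the normalized energy $\Upsilon$, which is exactly what the general-source setting requires.
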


\begin{theorem}[Semi-asymptotic achievability] Fix $\sigma>0$, $d\in(0,\sigma^2)$, $\epsilon\in(0,1)$, and
$m\in\mathbb{Z}_{\geq 1}$. There exist constants
$k_0$ and $C$, depending only on
$\delta=d/\sigma^2$, $\epsilon$ and $m$, such that
for every $n\in\mathbb{Z}_{\geq 1}$, $p_1,\ldots,p_n\in[0,1]$ and $k\geq k_0$, there exists a $(d,\epsilon,n,k)$
source-channel code provided that 
\begin{align}
\log(2^m)\sum_{i=1}^n(1-p_i) - \frac{k}{2}\log\left(\frac{\sigma^2}{d}\right) &\geq \sqrt{\frac{k}{2} + \bigl(\log(2^m)\bigr)^2
\sum_{i=1}^n p_i(1-p_i) }
\,Q^{-1}(\epsilon) +\frac{1}{2}\log k + C.
\label{eq:gaussian-corollary-statement}
\end{align}
\label{asymp_thm_achievability}
\end{theorem}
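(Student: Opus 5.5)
The plan is to start from the exact expression in Theorem~\ref{thm_achievability} and turn it into a single tail probability. Write $J=\sum_{i=1}^n\mathds 1\{Y_i\neq e\}$, so that $J\sim\{\pi_j\}$. Let $I=J\log(2^m)$ be the channel random variable and $W=-\log\rho_{k,d}(\Upsilon)$ the source random variable. For every $j\geq 0$ we have $2^{mj}-1\geq 2^{mj-1}$, hence
\[
\Psi_j(\upsilon;d)\leq \frac{1}{1+2^{mj-1}\rho_{k,d}(\upsilon)}=\mathbb P\bigl[L>mj\log 2-\log 2-W\,\big|\,\Upsilon=\upsilon\bigr],
\]
where $L$ is a standard logistic variable independent of everything else. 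This is the logistic random-threshold representation \cite[Eq.~(31)]{adeel_unknown_channels}, and it costs only the constant $\log 2$. Theorem~\ref{thm_achievability} therefore yields a code whenever
\[
\epsilon\geq\mathbb P\bigl[W-b_k> I-b_k-L-\log 2\bigr],\qquad b_k=kR(d)+\tfrac12\log k,
\]
with $W$, $I$ and $L$ mutually independent.

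Next, condition on $(I,L)$ and apply Corollary~\ref{coro_specialize} with $\xi_k\equiv0$. Its threshold-uniformity allows the random threshold $I-L-\log2$, and gives the bound $\mathbb E\bigl[Q\bigl((I-b_k-L-\log2)/\sqrt{k/2}\bigr)\bigr]+C/\sqrt k$. The Gaussian $Q$ is now read as a source-induced Gaussian: the quantity equals $\mathbb P[\sqrt{k/2}\,G+L+\log 2> I-b_k]$ with $G\sim\mathcal N(0,1)$ independent.

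Two further steps follow.
\begin{enumerate}
\item Remove $L$. The density of $\sqrt{k/2}\,G$ is bounded by $(\pi k)^{-1/2}$ and $\mathbb E|L|<\infty$, so dropping $L$ changes the probability by $O(k^{-1/2})$.
\item Compare $I+\sqrt{k/2}\,G$ with a Gaussian of variance $s^2=k/2+V_n$, where $V_n=(\log 2^m)^2\sum_i p_i(1-p_i)$, and mean $\mu=\log(2^m)\sum_i(1-p_i)$. Berry--Esseen for the independent summands (the Bernoulli terms plus the Gaussian) gives Kolmogorov error at most $c\,m^3V_n/s^3\leq c'/\sqrt{k}$, because each centred Bernoulli term has absolute third moment at most $\log(2^m)$ times its variance. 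This bound is uniform in $n$ and in $p_1,\ldots,p_n$.
\end{enumerate}
Combining these, the excess-distortion probability is at most $Q\bigl((\mu-b_k-\log2)/s\bigr)+c''/\sqrt k$.

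The main obstacle is the final inversion, when the channel variance dominates. To reach $\epsilon$ we need the threshold to exceed $sQ^{-1}(\epsilon)$ by about $s\cdot c''/(\sqrt k\,\phi(Q^{-1}(\epsilon)))$. This extra margin is bounded only if $s=O(\sqrt k)$, so the $O(k^{-1/2})$ source error is harmless when $V_n\lesssim k$ but not otherwise. I would split into two cases.
\begin{itemize}
\item Case $V_n\leq k$: the argument above applies as written, and the $O(k^{-1/2})$ error is absorbed into the constant $C$ of \eqref{eq:gaussian-corollary-statement}. This uses $Q^{-1}(\epsilon-t)\leq Q^{-1}(\epsilon)+O(t)$ for small $t$, which fixes $k_0$ in terms of $\epsilon$ and $\delta$.
\item Case $V_n>k$: reverse the roles of source and channel. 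Condition on $(W,L)$ and use the Berry--Esseen bound for the Poisson binomial $I$, whose Kolmogorov error is $O(m/\sqrt{V_n})$. Then smooth the resulting Gaussian in $I$ and apply Corollary~\ref{coro_specialize} unconditionally; the Gaussian convolution now has scale $\sqrt{V_n}$. The total error becomes $O(1/\sqrt{\max(k,V_n)})=O(1/s)$, so the required threshold shift is $O(1)$. In this case the source-side error should be controlled through its Wasserstein distance, which is $O(1)$ after scaling by $\sqrt{k/2}$, so that it also contributes only $O(1/s)$.
\end{itemize}
Checking that $k_0$ and $C$ depend only on $(\delta,\epsilon,m)$ in both cases completes the plan.
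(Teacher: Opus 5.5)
Up to the Berry--Esseen step your argument is essentially the paper's: the logistic random threshold, Corollary~\ref{coro_specialize} applied conditionally on the channel variable, removal of the logistic variable at cost $O(k^{-1/2})$, and Gaussian smoothing of the Bernoulli sum. Your inversion in the case $V_n\le k$ is also correct.

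The gap is the case $V_n>k$. Your role-reversal sketch there rests on an $O(1)$ Wasserstein bound between $W-b_k$ and $\sqrt{k/2}\,G$. Corollary~\ref{coro_specialize} does not give this, since it is only a Kolmogorov bound, and you do not prove it. As stated the bound is false. Indeed, $\rho_{k,d}(\upsilon)=0$ whenever $(\sqrt{\upsilon}-\sqrt{1-\delta})^2\ge\delta$, which has positive probability under $\Upsilon\stackrel{d}{=}\chi_k^2/k$. So $W=+\infty$ with positive probability, and its Wasserstein distance to any Gaussian is infinite. A repaired version would need truncation, a $W_1$ rate for $\chi_k^2$, and coupling control of the quadratic remainder, none of which is in your plan. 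As it stands, uniformity over all $n$ and $p_1,\ldots,p_n$ is not established.

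The missing observation is that the large-variance regime needs no new estimate. Because $p_i(1-p_i)\le 1-p_i$, you have $V_n\le \log(2^m)\,\mu$. So a large channel variance forces a large mean, and then the information-balance condition holds with huge slack. Split on $x=(\mu-kR(d)-\frac12\log k)/s$ rather than on $V_n$, and let $q=Q^{-1}(\epsilon)$.

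If $x\ge q+1$, then $Q(x)+D/\sqrt{k}\le Q(q+1)+D/\sqrt{k}\le\epsilon$ for all large $k$, whatever the size of $s$.

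If $x<q+1$, then $\mu\le (R(d)+1)k+|q+1|\,s$. Hence $s^2\le k/2+\log(2^m)\mu\le A'k+B's$, which gives $s\le C\sqrt{k}$ with $C$ depending only on $(\delta,\epsilon,m)$. Your Case~1 inversion then applies verbatim. This is exactly how the paper concludes, and it makes the role reversal unnecessary.

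There are also two smaller points.

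First, in your Berry--Esseen step, treating $\sqrt{k/2}\,G$ as a single summand adds $(k/2)^{3/2}\mathbb{E}|G|^3/s^3$ to the error. This is of order one when $V_n\ll k$. To get the $O(V_n/s^3)$ bound you state, split $\sqrt{k/2}\,G$ into $r$ independent Gaussian pieces and let $r\to\infty$, as the paper does.

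Second, $2^{mj}-1\ge 2^{mj-1}$ fails at $j=0$. Your bound on $\Psi_0$ still holds because of the factor $1-\rho$, where $\rho=\rho_{k,d}(\upsilon)$. In fact $\Psi_j\le 1/(1+2^{mj}\rho)$ for every $j$, so the $\log 2$ loss is avoidable. The paper writes $\Psi_j$ exactly as $1/(1+2^{mj}\rho/(1-\rho))$ and absorbs $\log(1-\rho)$ into $\xi_k$.
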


\textit{Proof:} The proof of Theorem \ref{asymp_thm_achievability} is given in Appendix \ref{asymp_thm_achievability_proof}.

\subsection{Discussion of Theorem \ref{asymp_thm_achievability}}

 In Theorem \ref{asymp_thm_achievability}, we fix a target excess-distortion probability $\epsilon$. Then $(\ref{eq:gaussian-corollary-statement})$ gives a sufficient condition on the design parameters $d, \epsilon, n$ and $k$ for the existence of a $(d, \epsilon, n, k)$ source-channel code. The starting point of the proof of Theorem \ref{asymp_thm_achievability} is Theorem \ref{thm_achievability}. Then the following two intermediate results from the proof of Theorem \ref{asymp_thm_achievability} are of independent interest. They also describe the proof outline of Theorem \ref{asymp_thm_achievability}. 

\begin{proposition}[cf. $(\ref{rer22})$]
Fix $\sigma>0$ and $d\in(0,\sigma^2)$. There exist constants
$k_0$ and $K$, depending only on
$\delta=d/\sigma^2$, such that for every
$m,n\in\mathbb{Z}_{\geq 1}$, every
$p_1,\ldots,p_n\in[0,1]$ and $k\geq k_0$, there exists a
$(d,\epsilon,n,k)$ source-channel code whenever
\begin{align}
\epsilon \geq \frac{K}{\sqrt{k}}+
\sum_{j=0}^n\pi_j Q\left(
\frac{j\log(2^m)-kR(d)-\frac12\log k}{\sqrt{k/2}}
\right).
\label{eq:semi-asymptotic-statement}
\end{align}
\label{semiasympt}
\end{proposition}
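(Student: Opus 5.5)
Starting from Theorem~\ref{thm_achievability}, it suffices to show that for each $j\in\{0,\ldots,n\}$,
\[
\mathbb{E}\bigl[\Psi_j(\Upsilon;d)\bigr]\leq Q\!\left(\frac{j\log(2^m)-b_k}{\sqrt{k/2}}\right)+\frac{K}{\sqrt{k}},
\qquad b_k=kR(d)+\tfrac12\log k,
\]
with $K$ and $k_0$ depending only on $\delta$. The bound must not depend on $j$ or $m$, and so not on $n$ or $p_1,\ldots,p_n$. Averaging against $\pi_j$ then gives \eqref{eq:semi-asymptotic-statement}.

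\textbf{Step 1: logistic form of $\Psi_j$.} Write $a=\log(2^{mj})$ and $W=-\log\rho_{k,d}(\Upsilon)$. Since $1-\rho\le 1$ and $1+(2^{mj}-1)\rho\ge \max(1,2^{mj}\rho)$, and also $1+(2^{mj}-1)\rho \ge 1+2^{mj}\rho-1\cdot\rho\ge \tfrac12(1+2^{mj}\rho)$ because $\rho\le 1$, we get
\[
\Psi_j\le \frac{2}{1+e^{a-W}}.
\]
To avoid losing the factor $2$, I would use $\Psi_j\le \overline{\Psi}_j=1/(1+e^{a-W})$ directly, which the discussion after Theorem~\ref{thm_achievability} already asserts. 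Now apply the logistic random-threshold representation: $1/(1+e^{a-W})=\Pr[W+L>a\mid W]$, where $L$ is a standard logistic variable independent of $\Upsilon$. Hence
\[
\mathbb{E}[\overline{\Psi}_j]=\Pr[W+L>a]=\mathbb{E}_L\bigl[\Pr[W>a-L]\bigr].
\]

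\textbf{Step 2: threshold-uniform normal approximation.} By Corollary~\ref{coro_specialize} with $\xi_k\equiv0$,
\[
\sup_t\left|\Pr[W>t]-Q\!\left(\frac{t-b_k}{\sqrt{k/2}}\right)\right|\le \frac{C}{\sqrt k}\quad\text{for }k\ge k_0.
\]
Since this holds uniformly in the threshold, substituting $t=a-L$ and averaging over $L$ gives
\[
\mathbb{E}[\overline{\Psi}_j]\le \mathbb{E}_L\!\left[Q\!\left(\frac{a-L-b_k}{\sqrt{k/2}}\right)\right]+\frac{C}{\sqrt k}.
\]

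\textbf{Step 3: removing the logistic smoothing.} The function $g(x)=Q\bigl((a-b_k-x)/\sqrt{k/2}\bigr)$ is smooth, with $|g''|\le c/k$ uniformly in $a$. Because $L$ is symmetric with finite variance $\pi^2/3$, a second-order Taylor expansion gives $\mathbb{E}[g(L)]\le g(0)+\tfrac12\sup|g''|\,\mathbb{E}[L^2]=g(0)+O(1/k)$; the first-order term vanishes by symmetry. This yields the desired bound with $K=C+O(k^{-1/2})$.

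\textbf{Main obstacle.} The substantive input is Corollary~\ref{coro_specialize}, which is assumed. The step I expect to need the most care is checking that all constants are uniform in $j$ and $m$. The Taylor bound depends only on $\sup|Q''|$ and $\mathbb{E}[L^2]$, and Corollary~\ref{coro_specialize}'s constants depend only on $\delta$. One should also check the edge cases where $\rho_{k,d}=0$, i.e. $W=+\infty$. Then $\overline{\Psi}_j=1$, which is consistent with the event $\{W+L>a\}$ under the convention $-\log 0=+\infty$, so the representation remains valid.
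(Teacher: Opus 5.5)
Your proof is correct and follows essentially the paper's route: the exact ensemble average from Theorem~\ref{thm_achievability}, the logistic random-threshold representation, the threshold-uniform Corollary~\ref{coro_specialize}, removal of the logistic smoothing, and then averaging over $j$ (the paper equivalently conditions on $J_n$). The differences are minor and immaterial at order $k^{-1/2}$. First, you pass to $\overline{\Psi}_j\geq\Psi_j$ so that $\xi_k\equiv 0$; this is valid since $(1-\rho)(1+2^{mj}\rho)=1+(2^{mj}-1)\rho-2^{mj}\rho^2$. The paper instead keeps $\Psi_j$ exactly via $\xi_k=\log(1-\rho_{k,d})$ and checks that this perturbation is exponentially small near $\upsilon=1$. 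Second, you remove the smoothing with a symmetric second-order Taylor bound of order $1/k$, whereas the paper uses a first-order bound via $\mathbb{E}|U|=2\log 2$ of order $k^{-1/2}$.
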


\begin{proposition}[cf. $(\ref{eq:Pe-normal})$]
    Fix $\sigma>0$, $d\in(0,\sigma^2)$ and
$m\in\mathbb{Z}_{\geq 1}$. Then there exist constants
$k_0$ and $D$, depending only on
$\delta=d/\sigma^2$ and $m$, such that
for every $n\in\mathbb{Z}_{\geq 1}$, $p_1,\ldots,p_n\in[0,1]$ and $k\geq k_0$, there exists a
$(d,\epsilon,n,k)$ source-channel code whenever
\begin{align}
\epsilon
&\geq
Q\left(
\frac{\mu_n-kR(d)-\frac12\log k}{S_{n,k}}
\right)
+\frac{D}{\sqrt{k}}.
\label{eq:Pe2255-normal}
\end{align}
where 
\begin{align*}
    \mu_n &= \log(2^m)\sum_{i=1}^n(1-p_i),\\
    S_{n, k} &= \sqrt{
\frac{k}{2}
+
\bigl(\log(2^m)\bigr)^2
\sum_{i=1}^n p_i(1-p_i)
}. 
\end{align*}
\label{9y,n}
\end{proposition}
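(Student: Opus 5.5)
Starting from Proposition~\ref{semiasympt}, it suffices to show that there are constants $k_0$ and $D'$, depending only on $\delta$ and $m$, such that for all $n$, all $p_1,\ldots,p_n\in[0,1]$ and all $k\ge k_0$,
\begin{align*}
\sum_{j=0}^n\pi_j\, Q\left(\frac{j\log(2^m)-b_k}{\sqrt{k/2}}\right)
\le Q\left(\frac{\mu_n-b_k}{S_{n,k}}\right)+\frac{D'}{\sqrt{k}},
\qquad b_k=kR(d)+\tfrac12\log k .
\end{align*}
We then take $D=K+D'$.

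The key observation is that the left side is a probability. Let $J$ be the Poisson binomial number of unerased blocks and let $G\sim\mathcal N(0,1)$ be independent of $J$. Then the left side equals
\[
\mathbb P\bigl[\,J\log(2^m)-\sqrt{k/2}\,G\le b_k\,\bigr].
\]
Here $J\log(2^m)=\sum_i X_i$, where the $X_i=\log(2^m)\mathds 1\{Y_i\neq e\}$ are independent and bounded by $\log(2^m)$. The sum therefore has mean $\mu_n$ and variance $S_{n,k}^2$. This is the source-induced Gaussian smoothing. The task reduces to a Kolmogorov bound for
\[
T=\frac{\sum_i (X_i-\mathbb E X_i)-\sqrt{k/2}\,G}{S_{n,k}}
\]
against $\Phi$, with error $O(k^{-1/2})$ uniformly in $n$ and the erasure profile.

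I would prove this Kolmogorov bound with the Berry--Esseen theorem for independent, non-identically distributed summands. Split $G$ into $N$ i.i.d.\ $\mathcal N(0,1/N)$ pieces, or use the fact that the Gaussian summand has a finite third moment. Lyapunov's ratio is then
\[
\frac{\sum_i\mathbb E|X_i-\mathbb EX_i|^3+c\,(k/2)^{3/2}N^{-1/2}}{S_{n,k}^3}.
\]
The Gaussian part vanishes as $N\to\infty$. Using $|X_i-\mathbb E X_i|\le\log(2^m)$, the channel part is at most $\log(2^m)\,V_n/S_{n,k}^3$, where $V_n=(\log 2^m)^2\sum_i p_i(1-p_i)$.

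The main obstacle is uniformity: when $V_n$ is small the channel sum is far from normal. The smoothing handles this. Since $S_{n,k}^2=k/2+V_n$, we have
\[
\frac{\log(2^m)\,V_n}{S_{n,k}^3}\le \frac{\log(2^m)}{S_{n,k}}\le\frac{\log(2^m)\sqrt2}{\sqrt k},
\]
whatever the value of $V_n$. This gives $\sup_w|\mathbb P[T\le w]-\Phi(w)|\le C_{\rm BE}\,\log(2^m)\sqrt{2/k}$. Evaluating at $w=(b_k-\mu_n)/S_{n,k}$ yields the claim with $D'=C_{\rm BE}\sqrt2\,m\log2$, which depends only on $m$.

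If one prefers to avoid the limiting argument, an alternative is to condition on $J$ and compare directly. Write the smoothed distribution function as $\mathbb E\,\Phi\bigl((b_k-J\log 2^m)/\sqrt{k/2}\bigr)$ and apply a smoothing-inequality (Esseen) argument using characteristic functions. The Gaussian factor $e^{-kt^2/4}$ kills the high frequencies, and a third-order expansion of the Bernoulli characteristic functions handles the low frequencies. Either route gives a constant free of $n$ and $p_i$, so $k_0$ is simply the one from Proposition~\ref{semiasympt}.
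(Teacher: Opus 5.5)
Your proposal is correct and follows essentially the same route as the paper. You start from Proposition~\ref{semiasympt}, rewrite the Poisson-binomial mixture of $Q$-terms as a tail probability of $\sqrt{k/2}\,G-(J_n-\mu_n)$, split the Gaussian into $N$ pieces so that its third-moment contribution vanishes as $N\to\infty$, and bound the Bernoulli third moments by $\log(2^m)V_n$. The only difference is cosmetic: you use $V_n\le S_{n,k}^2$ to get $\log(2^m)V_n/S_{n,k}^3\le\log(2^m)\sqrt{2/k}$, whereas the paper evaluates $\sup_{v\ge 0}v/(k/2+v)^{3/2}=\frac{2\sqrt2}{3\sqrt3}k^{-1/2}$; both yield $D=K+D'$ depending only on $\delta$ and $m$.
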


The proofs of Propositions~\ref{semiasympt} and~\ref{9y,n}
in fact upper bound the ensemble-average excess-distortion probability
of the random code $C_{n,k}^{(d)}$ in
\eqref{bec_enc}--\eqref{bec_dec} by the respective right-hand
sides of \eqref{eq:semi-asymptotic-statement} and
\eqref{eq:Pe2255-normal}. Hence, each of the inequalities $(\ref{eq:gaussian-corollary-statement})$, $(\ref{eq:semi-asymptotic-statement})$ and $(\ref{eq:Pe2255-normal})$ is a  sufficient condition for the source-channel code in \eqref{bec_enc}--\eqref{bec_dec} to be $(d, \epsilon, n, k)$ provided that $k$ is sufficiently large. Consequently, those are also sufficient conditions for the existence of a deterministic $(d, \epsilon, n, k)$ source-channel code.

The result in Proposition
\ref{semiasympt} is obtained by using the normal approximation
of the source random variable (Corollary \ref{coro_specialize}), while the distribution of the
channel random variable is retained exactly. On the other hand, Proposition \ref{9y,n} replaces the exact
Poisson-binomial mixture in Proposition \ref{semiasympt} by a
normal distribution approximation involving only the mean and variance of the
channel random variable. Importantly, this is not done by applying
a conventional central limit theorem to the channel random
variable in isolation. Such an approach would generally produce
an additional approximation error of
order $n^{-1/2}$ and the resulting approximation would therefore not be accurate uniformly over all $n \geq 1$. Instead, our proof first combines the channel random variable with the Gaussian fluctuation
arising from the normal approximation of the source random variable, whose variance is $k/2$, and then
applies Berry--Esseen to the combined random variable. In doing so, we first represent the normal-approximated source
term as a sum of arbitrarily many
independent Gaussian summands. This makes its contribution to the Berry–Esseen third-moment term arbitrarily small, leaving only the Bernoulli channel summands in $(\ref{i140b]]})$, whose total third absolute moment is bounded by a constant times their variance \(V_n\). The resulting Berry–Esseen error term is therefore proportional to
\[
    \frac{V_n}{(k/2+V_n)^{3/2}},
\]
which is uniformly $O(k^{-1/2})$ over all $V_n\geq 0$, where the channel dispersion $V_n$ is the variance of the channel random variable in $(\ref{i140b]]})$. Consequently, the constants $D$ and $k_0$ in Proposition \ref{9y,n} are
independent of the channel blocklength $n$ and of the
erasure profile $p_1,\ldots,p_n$. This Gaussian-smoothing
argument outlined above is the main reason that the additional normal
approximation in going from Proposition \ref{semiasympt} to Proposition \ref{9y,n} preserves the uniformity of the result w.r.t. $n$ and $p_1, \ldots, p_n$.

Finally, deriving the information balance condition $(\ref{eq:gaussian-corollary-statement})$ in Theorem \ref{asymp_thm_achievability} from Proposition \ref{9y,n} requires a uniform inversion at the target excess-distortion probability so as not to introduce dependence on $n$ and $p_1, \ldots, p_n$. 
 Let
\begin{align}
    x&=
    \frac{\mu_n-kR(d)-\frac{1}{2}\log k}
    {S_{n,k}}.
\end{align}
The error bound in Proposition \ref{9y,n} has the form
$Q(x)+D/\sqrt{k}$. An $O(k^{-1/2})$ error in probability
does not by itself imply an $O(1)$ correction in the information balance condition $(\ref{eq:gaussian-corollary-statement})$, since multiplying a
normalized correction by $S_{n,k}$ could potentially
introduce a dependence on $n$ and $p_1, \ldots, p_n$. Our proof handles this issue by separating the case
\(x\geq Q^{-1}(\epsilon)+1\), where the probability bound is already sufficiently below \(\epsilon\), from the case \(x<Q^{-1}(\epsilon)+1\), where one can show that \(S_{n,k}=O(\sqrt{k})\) uniformly in the channel parameters.

The familiar appearance of the source and
channel dispersion terms is not by itself the novelty of the
result in Theorem \ref{asymp_thm_achievability}. Rather, the novelty lies in establishing this
dispersion-type approximation uniformly over every positive
integer $n$ and every nonstationary erasure profile $p_1, \ldots, p_n$, together
with the refined $\frac{1}{2}\log k+O(1)$ third-order term in $(\ref{eq:gaussian-corollary-statement})$. In contrast, the third-order term obtained by directly specializing
the achievability part of \cite[Theorem~10]{kostina_JSCC} to the
GMS--BLEC pair is
\begin{align}
    \left(1+\frac{d^2}{\sigma^4-d^2}\right)\log k
    +\log\log k+O(1),                                      \label{m./}
\end{align}
for $d\in(0,\sigma^2)$. In writing \eqref{m./}, we used the fact
that $k=\Theta(n)$ in the coupled asymptotic regime of
\cite[Theorem~10]{kostina_JSCC}. One reason for our sharper
third-order term is that our asymptotic analysis is carried out after
specializing a one-shot result to the GMS--BLEC pair, and therefore exploits the exact structure of the GMS source and the erasure channel. By contrast, \cite[Theorem~10]{kostina_JSCC} applies
to a broad class of sources and channels.

The benefit of specializing the nonasymptotic bounds before carrying
out the asymptotic analysis was already observed in
\cite{kostina_JSCC}. In particular, the pair-specific analyses for
BMS--BSC and GMS--AWGN in
\cite[Theorem~15]{kostina_JSCC} and
\cite[Theorem~19]{kostina_JSCC}, respectively, give the improved
third-order term
\begin{align}
    \log k+\log\log k+O(1).                                \label{m./2}
\end{align}
This follows from \cite[(160)]{kostina_JSCC} and
\cite[(184)]{kostina_JSCC}, together with $k=\Theta(n)$ in the
coupled asymptotic regimes considered therein. These results,
however, apply only to stationary channel laws and are not uniform
in the channel blocklength and channel parameters. Furthermore, our GMS--BLEC
third-order term, $\frac12\log k+O(1)$, improves upon \eqref{m./2}. To explain the improvement, we first note that the GMS--AWGN analysis underlying \eqref{m./2}
can be decomposed as 
\begin{align*}
    \log k+\log\log k+O(1)
    &=
    \underbrace{\frac12\log k}_{\text{source}}
    +\underbrace{\frac12\log k}_{\text{threshold}}\\
    &\quad
    +\underbrace{\log\log k}_{\text{auxiliary }\gamma}
    +O(1).
\end{align*}
Our improved analysis features only the first of these logarithmic terms, which is already present in the source
random variable analysis. Indeed, the GMS
source-coding result \cite[Theorem 40]{kostina_SC}  obtains $\frac{1}{2}\log k + \log \log k + O(1)$. This is consistent with our own Corollary \ref{coro_specialize}, which features the $\frac{1}{2} \log k$ term.  Specifically, the $\frac{1}{2} \log k$ term arises directly from the
asymptotic behavior of the Gaussian spherical-cap probability $\rho_{k, d}(\Upsilon)$: its dominant exponential term\footnote{See \eqref{nmyknf}.} is multiplied by a factor
of order $k^{-1/2}$ for typical values of $\Upsilon$.

The extra $\log\log k$ penalty is attributable to the form of the
one-shot achievability bound used as the starting point for the
asymptotic analysis. The bound in
\cite[Theorem~8]{kostina_JSCC} contains both an auxiliary parameter
$\gamma$ and a separate error term $e^{1-\gamma}$. To make this
error term of order $k^{-1/2}$, the analysis takes $\gamma=1+\frac12\log k$ which adds an extra term $\log \gamma =\log\log k+O(1)$ in the
information-balance condition.
This $\log\log k$ penalty disappears when either the one-shot
achievability bound in our Theorem~\ref{thm_achievability} or that in
\cite[Theorem~4]{alternative_oneshot_JSCC} is used as the starting
point instead. On the other hand, the extra $\frac{1}{2} \log(k)$ penalty in \eqref{m./2} comes from using a thresholding inequality on a typical event, e.g., an inequality of the form 
\begin{align*}
    \mathbb{E}\left [\frac{1}{1 + e^X} \right] \leq \operatorname{Pr}\left( e^{-X}\geq k^{-1/2}  \right) + \frac{1}{\sqrt{k}} 
\end{align*}
is used in \cite[Appendix D]{alternative_oneshot_JSCC}, while a similar thresholding step is used in \cite[(409)-(412)]{kostina_JSCC}. This $\frac{1}{2}\log k$ threshold penalty disappears by instead using the logistic representation \cite[Eq. (31)]{adeel_unknown_channels} 
\begin{align}
    \mathbb{E}\left [ \frac{1}{1 + e^{X}}\right] = \operatorname{Pr}\left(L \geq X \right), \label{498780343}
\end{align}
where $L \ind X$ is an independent standard logistic random variable. This technique yields improvement in more general JSCC setting \cite{adeel_unknown_channels} and is not tied to the pair-specific analysis for GMS--BLEC.

The removal of the extra $\log\log k$ and $\frac12\log k$ terms
is a refinement of the achievability analysis and does not, by itself,
compare the optimal third-order coefficients of different
source--channel pairs. For the GMS--BLEC pair studied here,
Theorem~\ref{thm_semiasympconverse} additionally shows that the
remaining $\frac12\log k$ term is necessary. Consequently, our
achievability and converse conditions match through logarithmic third
order, with only an $O(1)$ difference in their information-balance
remainders. Analogous refinements for BMS--BSC or GMS--AWGN are not
ruled out by these comparisons.

\section{Converse Results}

In this section, we present converse results for the GMS-BLEC pair. 

\begin{theorem}[Non-asymptotic converse]
Fix any $\sigma>0$, $d\in(0,\sigma^2)$, $\epsilon\in(0,1)$ and $m,n,k\in\mathbb{Z}_{\geq 1}$. Let $\delta=d/\sigma^2$ and define
\begin{align}
\alpha_{k,d}(\upsilon)
&\coloneqq
\begin{cases}
1 & \text{ if }0\leq \upsilon\leq\delta,\\[3pt]
L_k\left(\sqrt{1-\frac{\delta}{\upsilon}}\right)
& \text{ if }\upsilon>\delta.
\end{cases}
\label{eq:maximal-source-cap}
\end{align}
If a $(d,\epsilon,n,k)$ source-channel code exists, then
\begin{align}
\sum_{j=0}^n\pi_j\Theta_j(k,d)\leq \epsilon,
\label{m-}
\end{align}
where
\begin{align}
\Theta_j(k,d)
&\coloneqq\max\left\{
\int_0^\infty p_{\Upsilon}(\upsilon)\max\{0,1-2^{mj}\alpha_{k,d}(\upsilon)\}\,d\upsilon,\,
1-F_{\chi_k^2}\left(k\delta\,2^{2mj/k}\right)
\right\},
\label{eq:conditional-converse}
\end{align}
and $p_{\Upsilon}$ is the PDF in $(\ref{pzzz})$. The result applies to both deterministic and randomized codes.
\label{convo2}
\end{theorem}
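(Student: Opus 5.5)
We first reduce to deterministic codes. A randomized code is a mixture of deterministic codes through common randomness independent of $(S^k, Y^n)$, so its excess-distortion probability is an average of those of deterministic codes. It therefore suffices to show that every deterministic pair $(\operatorname{f},\operatorname{g})$ has excess-distortion probability at least $\sum_j \pi_j \Theta_j(k,d)$.

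We then condition on the erasure pattern. The set $E\subseteq\{1,\dots,n\}$ of unerased positions is independent of $(S^k, X^n)$, since erasures occur independently of the input. Given $E=A$ with $|A|=j$, the decoder output is $\operatorname{g}(Y^n)$, where $Y^n$ is determined by $(X_i)_{i\in A}$. Hence $Z^k$ takes values in a fixed set $\mathcal{C}_A$ of at most $2^{mj}$ points of $\mathbb{R}^k$, and the conditional success event is contained in $\{S^k\in\bigcup_{z\in\mathcal{C}_A} B_d(z)\}$, where $B_d$ is the ball of radius $\sqrt{kd}$. By independence of $S^k$ and $E$,
\[
\Pr(\text{success}\mid E=A)\leq P_{S}^{\otimes k}\Bigl(\textstyle\bigcup_{z\in\mathcal{C}_A}B_d(z)\Bigr).
\]
Averaging over $A$ with weights $\pi_j$ reduces the theorem to showing that, for any set $\mathcal{C}$ of at most $M=2^{mj}$ points, $P_S^{\otimes k}(\bigcup B_d(z))\leq 1-\Theta_j$. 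We take the bound from each of the two branches of the max separately.

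\emph{Volume branch.} The Gaussian density $\phi_k(s)$ is radially decreasing, so a ball of radius $r$ has Gaussian measure at most that of the centered ball of the same radius. Indeed, $\int_{B(z,r)}\phi_k$ is maximized at $z=0$ by Anderson's inequality, or by a rearrangement argument for the symmetric unimodal density. A single ball would then only give $M\cdot F_{\chi^2_k}(kd/\sigma^2)$, which is not the stated form. Instead, we use the fact that the Gaussian measure of any set of Lebesgue volume $V$ is at most that of the centered ball of volume $V$. The union has volume at most $M\,\mathrm{vol}(B_{\sqrt{kd}})$, which equals the volume of a centered ball of radius $\sqrt{kd}\,M^{1/k}$. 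Its Gaussian measure is $F_{\chi_k^2}(k\delta M^{2/k})$, giving the second term with $M^{2/k}=2^{2mj/k}$.

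\emph{Spherical-cap branch.} We condition on $\|S^k\|^2=k\sigma^2\upsilon$. Given the energy, $S^k$ is uniform on the sphere $\mathcal{S}_\upsilon$ of radius $R=\sqrt{k\sigma^2\upsilon}$. For each $z$, the set $B_d(z)\cap\mathcal{S}_\upsilon$ is either empty, the whole sphere, or a spherical cap $\{s:\langle s,z/\|z\|\rangle\geq t\}$. Its normalized area is maximized over $\|z\|$ when the cap's angular radius is largest; this happens when the ball's boundary sphere is tangent to the cone, namely $\|z\|^2=R^2-kd$. That choice yields the threshold $\cos\theta=\sqrt{1-kd/R^2}=\sqrt{1-\delta/\upsilon}$. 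Corollary \ref{lemmaball_coro} gives the cap fraction in closed form, and maximizing $\frac{R^2+r^2-kd}{2Rr}$ over $r$ gives the minimum $\sqrt{1-\delta/\upsilon}$ at $r^2=R^2-kd$, so each cap has fraction at most $\alpha_{k,d}(\upsilon)$. When $\upsilon\leq\delta$, we use the trivial bound $1$.

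By the union bound, the conditional success probability is at most $\min\{1,M\alpha_{k,d}(\upsilon)\}$. Integrating against $p_\Upsilon$ gives a failure probability of at least $\int p_\Upsilon\max\{0,1-M\alpha\}$.

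Both branches are valid lower bounds for each $A$, so the conditional failure probability is at least $\Theta_j$. Averaging over $E$ yields \eqref{m-}.

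The main obstacle is the geometric maximization in the cap branch. It requires checking that the optimal center radius is admissible, that $L_k$ is monotone so that minimizing the threshold maximizes the cap, and handling the degenerate cases $z=0$ and $\upsilon\leq\delta$. The Gaussian isoperimetric-type step in the volume branch, namely that the centered ball maximizes Gaussian measure among sets of a given volume, also needs a brief justification via the bathtub principle.
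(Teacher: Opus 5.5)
Your proposal is correct and follows the paper's proof essentially step for step: you reduce randomized codes to deterministic ones by conditioning on the seeds, condition on the erasure pattern so the decoder has at most $2^{mj}$ outputs, bound each ball's energy-conditioned covering fraction by $\alpha_{k,d}(\upsilon)$ via Corollary~\ref{lemmaball_coro} and a union bound, bound the union's Gaussian measure by that of the centered ball of volume $2^{mj}v_{k,d}$, and take the maximum before averaging over $j$. Two verbal slips do not affect the argument: you \emph{minimize} (not maximize) $\frac{R^2+r^2-kd}{2Rr}$ over $r$, and the optimum $\|z\|^2=R^2-kd$ is not a tangency with the cone from the origin (that would give $\|z\|^2=R^2+kd$); it is the configuration with a right angle at $z$, where the law of sines gives $\sin\theta\leq\sqrt{kd}/R$.
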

\textit{Proof:} The proof of Theorem \ref{convo2} is given in Appendix \ref{convo2_proof}.

\textit{Discussion of Theorem \ref{convo2}:} We note that Theorem \ref{convo2} is at least as strong as the list-code converse \cite[Theorem 5]{kostina_JSCC} bound applied to the GMS--BLEC pair with the list measure taken to be the Lebesgue measure and the auxiliary output distribution taken to be $P_{\overline{Y}^n} = P_{\overline{Y}_1} \times \cdots \times P_{\overline{Y}_n}$, where 
\begin{align}
    P_{\overline{Y}_i}(y_i) &= \begin{cases}
        p_i & y_i = e,\\
        \frac{1 -p_i}{2^m} & y_i \in \mathcal{X}. 
    \end{cases}
    \label{2pyi}
\end{align}
We show this in Appendix \ref{list_code_converse} along with a recap of the list-code converse argument used to prove a converse for joint source-channel codes. 

Our converse proof first conditions on the erasure pattern. If exactly $j$ blocks are unerased, a deterministic decoder can produce at most $2^{mj}$ distinct reproduction sequences. Conditioned further on the normalized source energy, the Gaussian source is uniform on a sphere. Each reproduction sequence covers a spherical cap on this source sphere, and maximizing the cap area over the reproduction norm gives an upper bound on the fraction of source sequences that any one reproduction can cover. A union bound over the $2^{mj}$ possible reproductions, followed by averaging over the source energy and erasure pattern, gives a necessary condition for a $(d,\epsilon,n,k)$ source-channel code to exist. This is the first term in \eqref{eq:conditional-converse}. For the second term, we use the fact that successful reconstruction implies that $S^k$ lies in a distortion ball of one of the $2^{mj}$ possible reconstruction sequences. Maximizing this probability by using the total volume of the $2^{mj}$ reproduction balls obtains the second term. For a random decoder, the same argument holds by conditioning first on the random seed followed by averaging over the seeds.   

The spherical-cap component of Theorem~\ref{convo2}, i.e., the first term in \eqref{eq:conditional-converse}, alone is sufficient to obtain the \(\frac12\log k\) third-order term in Theorem \ref{thm_semiasympconverse}. For source energies in a fixed
neighborhood of $1$, the factorization
\eqref{eq:maximal-cap-factorization} gives
\begin{align*}
 -\log\alpha_{k,d}(\upsilon)
 &= kR(d)+\frac12\log k+\frac{k}{2}\log\upsilon+O(1),
\end{align*}
where the $O(1)$ remainder is uniform on that neighborhood. By comparison, the volume component is the tail probability
\begin{align*}
 1-F_{\chi_k^2}\left(k\delta\,2^{2mj/k}\right)
 &=\mathbb P\left[
 kR(d)+\frac{k}{2}\log\Upsilon>mj\log 2
 \right].
\end{align*}
Here, the source-side quantity in the volume bound is
$kR(d)+\frac{k}{2}\log\Upsilon$. Around the typical value
$\Upsilon=1$, it has the expansion
\begin{align*}
kR(d)+\frac{k}{2}(\Upsilon-1)
+O\!\left(k(\Upsilon-1)^2\right),
\end{align*}
and therefore has no additive $\frac12\log k$ term in its
centering. The volume
argument replaces the union of distortion balls by a centered ball
whose volume is $2^{mj}$ times that of one distortion ball and
thereby relaxes the angular covering constraint. In contrast, the spherical-cap quantity
$-\log\alpha_{k,d}(\Upsilon)$ is centered at
$kR(d)+\frac12\log k+O(1)$. 
The volume component of Theorem \ref{convo2} is nevertheless retained because it can strengthen the non-asymptotic result, and it also ensures that Theorem \ref{convo2} is at least as strong as the Lebesgue-measure list-code specialization discussed in Appendix \ref{list_code_converse}.

\begin{theorem}[Semi-asymptotic converse]
Fix $\sigma>0$, $d\in(0,\sigma^2)$, $\epsilon\in(0,1)$, and
$m\in\mathbb{Z}_{\geq 1}$. Then there exist constants
$k_0\in\mathbb{Z}_{\geq 1}$ and $C>0$, depending only on
$\delta = d/\sigma^2$, $\epsilon$ and $m$, such that for every $n\in\mathbb{Z}_{\geq 1}$,
$p_1,\ldots,p_n\in[0,1]$, and
$k\in\mathbb{Z}_{\geq 1}$ with $k\geq k_0$, if a
$(d,\epsilon,n,k)$ source-channel code exists, then
\begin{align}
\log(2^m)\sum_{i=1}^n(1-p_i)-kR(d)
&\geq
\sqrt{
\frac{k}{2}
+
\bigl(\log(2^m)\bigr)^2
\sum_{i=1}^n p_i(1-p_i)
}
\,Q^{-1}(\epsilon)
+\frac12\log k-C.
\label{eq:normal-approximation-converse}
\end{align}
\label{thm_semiasympconverse}
\end{theorem}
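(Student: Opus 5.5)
We start from Theorem~\ref{convo2} and keep only the spherical-cap component of $\Theta_j(k,d)$. A $(d,\epsilon,n,k)$ code therefore forces
\begin{align*}
\epsilon\geq\sum_{j=0}^n\pi_j\,\mathbb E\bigl[\max\{0,1-2^{mj}\alpha_{k,d}(\Upsilon)\}\bigr].
\end{align*}

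\textbf{Step 1: Gaussian asymptotics of the cap.} Write $W_k=-\log\alpha_{k,d}(\Upsilon)$ and use the factorization of $\alpha_{k,d}$ referred to after Theorem~\ref{convo2}. On a fixed neighborhood $|\upsilon-1|\leq\eta_1$ this gives
\begin{align*}
-\log\alpha_{k,d}(\upsilon)=kR(d)+\tfrac12\log k+\tfrac{k}{2}\log\upsilon+O(1).
\end{align*}
Since $\log\upsilon=(\upsilon-1)+O((\upsilon-1)^2)$, the expansion \eqref{eq:local-quadratic-expansion} holds with $b_k=kR(d)+\frac12\log k$. Proposition~\ref{prop:normal-approximation} then shows that $\widehat W_k=(W_k-b_k)/\sqrt{k/2}$ is within Kolmogorov distance $C/\sqrt k$ of $\mathcal N(0,1)$, uniformly in the threshold.

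\textbf{Step 2: exponential random-threshold representation.} Use the identity
\begin{align*}
\max\{0,1-e^{x}\}=\mathbb P[E\geq x],
\end{align*}
where $E$ is an independent random variable with $\mathbb P[E\geq x]=1-e^{x}$ for $x\leq0$; that is, $-E$ is standard exponential. With $x=mj\log2-W_k$, the bound becomes
\begin{align*}
\epsilon\geq\mathbb P\bigl[W_k+E\geq J\log 2^m\bigr],
\end{align*}
where $J\sim\{\pi_j\}$ is the Poisson-binomial count, and $W_k$, $E$, $J$ are mutually independent. This is the representation \eqref{exportrep}.

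Condition on $J$ and $E$. Threshold-uniformity of Step 1 lets us replace $W_k$ by $b_k+\sqrt{k/2}\,G$, with $G\sim\mathcal N(0,1)$ independent of everything else, at cost $C/\sqrt k$. This gives
\begin{align*}
\epsilon+\tfrac{C}{\sqrt k}\geq\mathbb P\bigl[J\log 2^m-\sqrt{k/2}\,G-E\leq b_k\bigr].
\end{align*}
Because $-E\geq0$, the variable $E$ only helps the inequality we need. Alternatively, we absorb it as follows: with probability at least $1-\theta$ we have $-E\leq\log(1/\theta)$, a bounded shift. Choosing $\theta$ small relative to $1-\epsilon$, this yields
\begin{align*}
\epsilon'\geq\mathbb P\bigl[J\log 2^m-\sqrt{k/2}\,G\leq b_k+c_\theta\bigr],
\end{align*}
where $\epsilon'=\epsilon+\theta+C/\sqrt k<1$ and $c_\theta$ is a constant. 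It suffices to use $E\leq0$ in the direction that lowers the probability and keep only a constant.

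\textbf{Step 3: Gaussian smoothing of the channel sum.} Apply Berry--Esseen to $J\log2^m-\sqrt{k/2}\,G$, splitting $G$ into arbitrarily many i.i.d.\ Gaussian pieces exactly as in the proof of Proposition~\ref{9y,n}. Since the summands are Bernoulli plus Gaussian, the error is of order
\begin{align*}
\frac{V_n}{(k/2+V_n)^{3/2}}=O(k^{-1/2}),
\end{align*}
uniformly in $n$ and $p_1,\ldots,p_n$. This gives
\begin{align*}
\epsilon+\frac{D}{\sqrt k}\geq Q\!\left(\frac{\mu_n-kR(d)-\frac12\log k-c}{S_{n,k}}\right).
\end{align*}

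\textbf{Step 4: uniform inversion (main obstacle).} Let $x$ denote the argument of $Q$. For $k\geq k_0$ we have $\epsilon+D/\sqrt k\leq(1+\epsilon)/2<1$, hence $x\geq Q^{-1}(\epsilon+D/\sqrt k)$, which is bounded below by a constant.

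If $x\geq Q^{-1}(\epsilon)$, the claim is immediate. Otherwise $x$ lies in the compact interval $[Q^{-1}((1+\epsilon)/2),\,Q^{-1}(\epsilon)]$. On that interval $Q$ has derivative bounded away from zero, so
\begin{align*}
x\geq Q^{-1}(\epsilon)-\frac{c'}{\sqrt k}.
\end{align*}
The correction must be multiplied by $S_{n,k}$, and $S_{n,k}$ is not uniformly $O(\sqrt k)$ in the channel parameters. This is the delicate point, and we handle it by splitting on the size of $V_n$:
\begin{itemize}
\item If $V_n\leq Ak$, then $S_{n,k}=O(\sqrt k)$ and the loss $S_{n,k}\,c'/\sqrt k$ is $O(1)$.
\item If $V_n>Ak$, the required inequality follows with room to spare. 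Here I would either sharpen the Berry--Esseen error to $V_n/(k/2+V_n)^{3/2}\leq 1/\sqrt{V_n}$, so that the loss after multiplying by $S_{n,k}$ is again $O(1)$, or choose $A$ depending on $\epsilon$.
\end{itemize}
The sharper error bound $O(1/S_{n,k})$ resolves both cases uniformly, giving \eqref{eq:normal-approximation-converse} with $C$ depending only on $\delta,\epsilon,m$.
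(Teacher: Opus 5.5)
Your architecture matches the paper's: keep only the spherical-cap term, apply Proposition~\ref{prop:normal-approximation} to $-\log\alpha_{k,d}$, use an exponential random threshold, apply Berry--Esseen with the Gaussian split into many pieces, then invert. Steps 1 and 3 are fine, but two steps fail as written.

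\textbf{Removing $E$ in Step 2.} Write $h=\log 2^m$ and $q=Q^{-1}(\epsilon)$. With your convention $E\leq 0$, the event $\{Jh-\sqrt{k/2}\,G-E\leq b_k\}$ is contained in $\{Jh-\sqrt{k/2}\,G\leq b_k\}$. So discarding $E$ \emph{upper}-bounds the probability you need to bound from below: $E$ hurts a converse, it does not help it.

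Your truncation alternative does not repair this. It inflates $\epsilon$ to $\epsilon+\theta$ with $\theta$ fixed, and after inversion that costs $S_{n,k}\bigl(Q^{-1}(\epsilon)-Q^{-1}(\epsilon+\theta)\bigr)\geq c\sqrt{k}$ in the information-balance condition. Taking $\theta\asymp k^{-1/2}$ instead forces a shift $\log(1/\theta)=\frac12\log k+O(1)$, which cancels exactly the third-order term you are proving. Step 3 then silently reverts to $\epsilon+D/\sqrt{k}$, dropping $\theta$ altogether.

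The missing step is to average over $E$ rather than truncate. After the Gaussian replacement, the conditional probability given $(J,E)$ is $Q\bigl((Jh+|E|-b_k)/\sqrt{k/2}\bigr)$. Since $Q$ is $1/\sqrt{2\pi}$-Lipschitz and $\mathbb{E}|E|=1$, removing $E$ costs at most $1/\sqrt{\pi k}$ in probability, with no shift. This is what the paper does.

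\textbf{The large-variance case in Step 4.} Your closing claim that an $O(1/S_{n,k})$ error resolves both cases is false, because only the Berry--Esseen term $hV_n/S_{n,k}^3$ is $O(1/S_{n,k})$. The source approximation error $K/\sqrt{k}$ from Proposition~\ref{prop:normal-approximation} is a threshold-uniform Kolmogorov bound that does not improve with $V_n$. The same holds for the $O(k^{-1/2})$ cost of removing $E$. Multiplied by $S_{n,k}$, these give a loss of order $S_{n,k}/\sqrt{k}$, which is unbounded when $V_n\gg k$. Upgrading them would need a separate argument that you do not give.

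The idea you need is that for this channel, large variance forces large mean: $p_i(1-p_i)\leq 1-p_i$ gives $V_n\leq h\mu_n$. The paper splits on $\mu_n$:
\begin{itemize}
\item If $\mu_n\geq 2kR(d)+2k$, then $S_{n,k}\leq\sqrt{k/2}+\sqrt{h\mu_n}$ and $|q|\sqrt{h\mu_n}\leq\mu_n/2+hq^2/2$. Together these give $\mu_n-kR(d)-qS_{n,k}-\frac12\log k\geq k-|q|\sqrt{k/2}-hq^2/2-\frac12\log k\geq 0$ for large $k$. So the target inequality holds without using the code at all.
\item Otherwise $S_{n,k}\leq D_0\sqrt{k}$, with $D_0$ depending only on $\delta$ and $m$, and your compact-interval inversion costs only $O(1)$.
\end{itemize}
Your option of choosing $A$ works only through this inequality, with $A$ chosen from $\delta$ and $m$. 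For example, $A=h(2R(d)+2)$ ensures that $V_n>Ak$ forces $\mu_n>(2R(d)+2)k$.
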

\textit{Proof:} The proof of Theorem \ref{thm_semiasympconverse} is given in Appendix \ref{thm_semiasympconverse_proof}.

\textit{Discussion of Theorem~\ref{thm_semiasympconverse}:}
Theorems~\ref{asymp_thm_achievability}
and~\ref{thm_semiasympconverse} determine the same logarithmic
third-order term. Writing
\begin{align*}
 \mu_n&=\log(2^m)\sum_{i=1}^n(1-p_i),\\
 S_{n,k} &= \sqrt{\frac{k}{2}
       +\bigl(\log(2^m)\bigr)^2\sum_{i=1}^n p_i(1-p_i)},
\end{align*}
the sufficient and necessary conditions from Theorems \ref{asymp_thm_achievability} and \ref{thm_semiasympconverse} compare $\mu_n-kR(d)$
with the common expression
\begin{align*}
 S_{n,k} \, Q^{-1}(\epsilon)+\frac12\log k,
\end{align*}
up to bounded additive constants. 
Thus the logarithmic third-order gap is closed,
although the bounded remainder is not identified.

We call Theorems~\ref{asymp_thm_achievability}
and~\ref{thm_semiasympconverse} semi-asymptotic because only the
source blocklength must exceed a threshold depending on the fixed
distortion ratio, target excess-distortion probability, and block size; the channel
blocklength and erasure profile remain unrestricted. This does not mean that the channel blocklength can remain fixed along a feasible sequence with \(k\to\infty\): for fixed \(m\), the necessary information-balance condition in Theorem \ref{thm_semiasympconverse}
forces \(n\) to grow as well. Rather, the significance of channel uniformity is that no separate large-\(n\) assumption is needed. Once \(k\) is sufficiently large, the bounded remainder terms in Theorems~\ref{asymp_thm_achievability} and~\ref{thm_semiasympconverse} are uniform over \(n\) and the erasure profile.

The matching $\frac{1}{2} \log k$ term has a common geometric origin. The achievability
analysis uses a fixed reproduction sphere, whereas the converse
maximizes the covered fraction of each source-energy sphere over all
reproduction norms. Nevertheless, the negative logarithms of the two spherical-cap probabilities have the same centering
$kR(d)+\frac12\log k$ and the same linear source-energy fluctuation
$\frac{k}{2}(\upsilon-1)$. Their differences beyond this common linear term are accommodated by the \(O(1+k(\upsilon-1)^2)\) remainder allowed in Proposition~\ref{prop:normal-approximation}.

Similar to the use of the logistic random-threshold identity \eqref{498780343} in achievability, the converse proof uses the exponential random-threshold representation $\max\{0,1-e^{-x}\}=\mathbb P[E<x]$, where $E$ is an independent
unit-mean exponential random variable. 
Source-induced Gaussian smoothing and the subsequent uniform
inversion then give~\eqref{eq:normal-approximation-converse} without
an additional logarithmic loss. The resulting
$\frac12\log k-O(1)$ converse correction is sharper than the
$O(1)$ corrections in the positive-distortion BMS--BSC and GMS--AWGN
converses of \cite[(159) and (183)]{kostina_JSCC}. However, this does not show that the BLEC intrinsically has a smaller optimal third-order
term than the BSC or the AWGN channel. While Theorems \ref{asymp_thm_achievability} and \ref{thm_semiasympconverse} establish the third-order optimality of the \(\frac12\log k\) term for the GMS--BLEC pair, the bounds established for the BMS--BSC and GMS--AWGN pairs in \cite{kostina_JSCC} do not identify their optimal third-order terms.

\section{Numerical Evaluation}

Define $A_n(k)$ to be the RHS of $(\ref{achie2r794h})$, where the dependence of $A_n(k)$ on other parameters listed in Table \ref{tab:parameters} is implicit. Then define 
\begin{align}
    k_A(n) = \sup \left \{k \in \mathbb{Z}_{\geq 1}: A_n(k) \leq \epsilon \right \}, \label{kapossibleconfusion}
\end{align}
with the convention that $\sup \varnothing = 0$. Similarly, define $C_n(k)$ to be the LHS of $(\ref{m-})$. Then define 
\begin{align*}
    k_C(n) = \sup \left \{k \in \mathbb{Z}_{\geq 1}: C_n(k) \leq \epsilon \right \}.
\end{align*}
$k_A(n)$ is the largest $k$ certified achievable by the achievability bound in Theorem \ref{thm_achievability}, while $k_C(n)$ is the largest $k$ not ruled out by the converse result in Theorem \ref{convo2}. Note that the dependence of $k_A(n)$ and $k_C(n)$ on other parameters listed in Table \ref{tab:parameters} is suppressed. With these parameters fixed as well as a fixed erasure-probability sequence $(p_i)_{i \geq 1}$, $C_n(k)$ is nonincreasing in $n$ which makes $k_C(n)$ nondecreasing in $n$, although the coding rate $k_C(n)/n$ need not be nondecreasing. Let $k^\star(n) = \sup\{k \in \mathbb{Z}_{\geq 1} : \exists \text{ a } (d, \epsilon, n, k) \text{ source-channel code} \}$ denote the true optimum. Then $k_A(n) \leq k^\star(n) \leq k_C(n)$. 

Define also the finite-blocklength normal approximation based on Theorems \ref{asymp_thm_achievability} and \ref{thm_semiasympconverse} as   
\begin{align*}
    \widetilde{k}(n) &= \sup \left \{k \in \mathbb{Z}_{\geq 1}: \log(2^m)\sum_{i=1}^n(1-p_i)
-
\frac{k}{2}\log\left(\frac{\sigma^2}{d}\right)
\geq \right . \\
& \quad \quad \quad \quad  \left . 
\sqrt{
\frac{k}{2}
+
\bigl(\log(2^m)\bigr)^2
\sum_{i=1}^n p_i(1-p_i)
}
\,Q^{-1}(\epsilon)
+
\frac{1}{2}\log k  \right \}.
\end{align*}
This is the common third-order approximation furnished by Theorems~\ref{asymp_thm_achievability} and~\ref{thm_semiasympconverse} after omitting their bounded remainder constants. Because those constants are omitted, it is neither a
certified achievability bound nor a certified converse bound and need
not lie between $k_A(n)$ and $k_C(n)$ at finite blocklength.

Define 
\begin{align*}
    R_A(n) &= \frac{k_A(n)}{n},\\
    R_C(n) &= \frac{k_C(n)}{n},\\
    \widetilde{R}(n) &= \frac{\widetilde{k}(n)}{n}.
\end{align*}
Figure~\ref{RA_vs_RC} plots $R_A(n)$, $R_C(n)$, and their common
third-order approximation $\widetilde R(n)$ for
$m=100$, $\sigma=1$, $d=0.1$, $\epsilon=0.01$, and $p_i=0.1$,
with $n=1,\ldots,32$. The quantities $R_A(n)$ and $R_C(n)$ bound
the optimal rate, whereas $\widetilde R(n)$ is an approximation. For a memoryless, stationary BLEC, the asymptotically optimal limit is 
\begin{align*}
    \lim_{n \to \infty} \frac{k^\star(n)}{n} = \frac{C}{R(d)} = \frac{(1-p) \log(2^m)}{\frac{1}{2} \log (\sigma^2/d)}.
\end{align*}

\begin{figure}[H]
    \centering
\includegraphics[width=10cm]{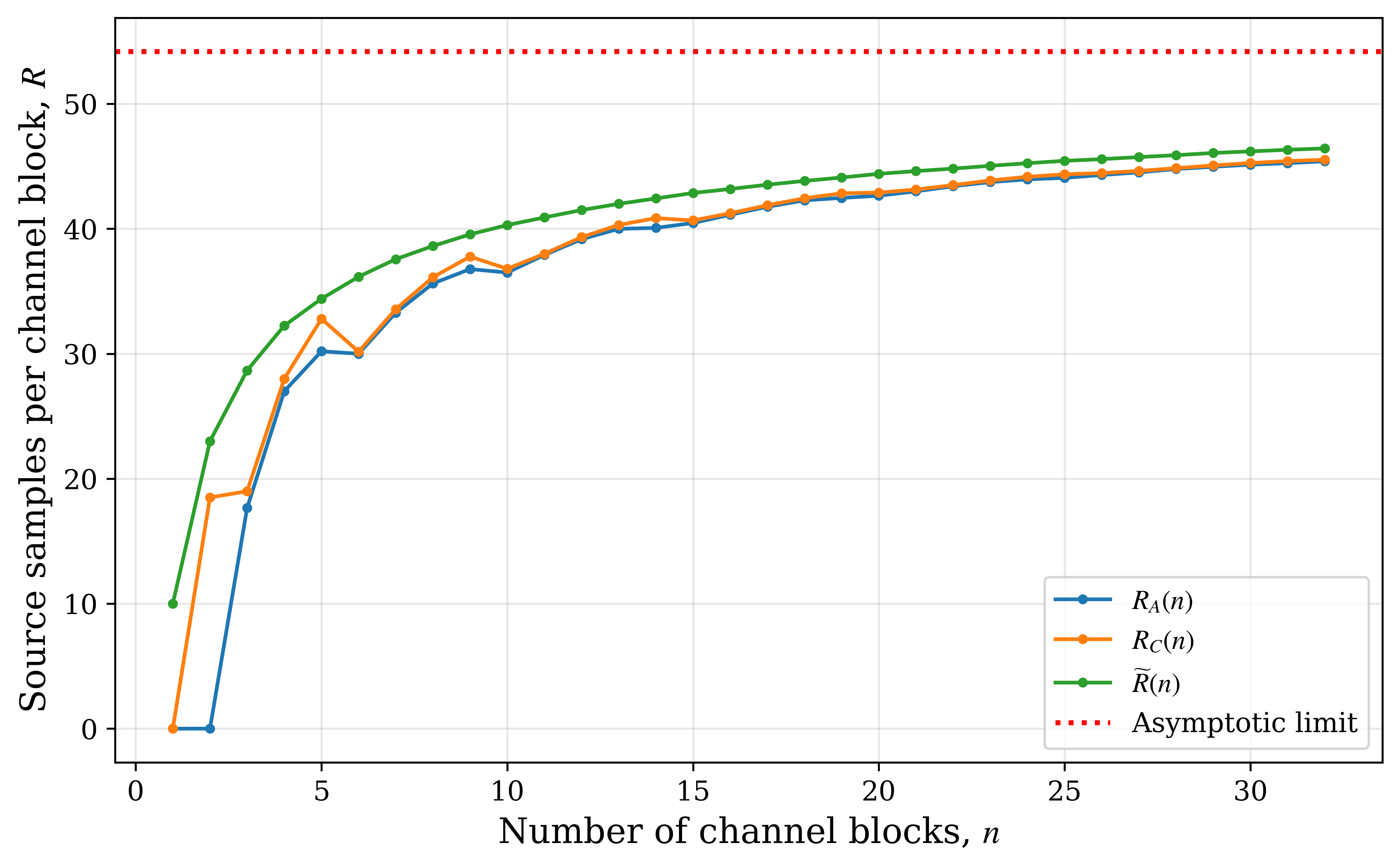}
\caption{Nonasymptotic achievable rate $R_A(n)$, converse rate
    $R_C(n)$, and common third-order approximation
    $\widetilde R(n)$ for a GMS over a stationary BLEC with
    $m=100$, $\sigma=1$, $d=0.1$, $\epsilon=0.01$, and $p=0.1$. For these parameter values, \(\widetilde R(n)\) lies above the nonasymptotic converse curve over the plotted range, so the zero-remainder approximation is optimistic at these finite blocklengths.
    }
\label{RA_vs_RC}
\end{figure}



\appendices 

\section{Proof of Lemma \ref{lemmaball} \label{lemmaball_proof}}

Let
$r=\sqrt{k(\sigma^2-d)}$. Then $Z^k$
is uniform on the sphere of radius $r$. Let \(\rho=\|s^k\|\). Since the distribution of \(Z^k\) is invariant under orthogonal transformations and Euclidean distance is rotation invariant, we may apply an orthogonal transformation sending \(s^k\) to \((\rho,0,\ldots,0)\).

Write $Z^k=rU^k$, where $U^k$ is uniform on the
unit sphere $S^{k-1}$. Then
\begin{align}
    \|Z^k-s^k\|^2
    &=
    r^2+\rho^2-2r\rho U_1.
\end{align}
Therefore, since $r>0$ and $\rho>0$,
\begin{align}
    \|Z^k-s^k\|^2\leq kd
    &\iff
    r^2+\rho^2-2r\rho U_1 \leq kd \\
    &\iff
    U_1 \geq t,
\end{align}
where
\begin{align}
    t
    &=
    \frac{r^2+\rho^2-kd}{2r\rho}\\
    &= \frac{\upsilon+1-2\delta}
    {2\sqrt{\upsilon}\sqrt{1-\delta}},
\end{align}
where $\upsilon$ and $\delta$ are defined in the statement of Lemma 1. Hence, $\operatorname{Pr}\left(\|Z^k-s^k\|^2\leq kd\right) = \operatorname{Pr}(U_1\geq t)$.

For the case $k=1$, we have $S^0=\{-1,1\}.$ Consequently, $U_1$ is uniformly distributed on $\{-1,1\}$. It follows that
\begin{align}
    \operatorname{Pr}(U_1\geq t)
    &=
    \begin{cases}
        0 & \text{ if } t > 1,\\
        1 & \text{ if } t \leq -1,\\
        \frac{1}{2} & \text{ if } -1 < t \leq 1. 
    \end{cases}
\end{align}
For $k \in \mathbb{Z}_{\geq 2}$,
the first coordinate $U_1$ has density
\begin{align}
    f_{U_1}(u)
    =
    \frac{\Gamma(k/2)}
    {\sqrt{\pi}\Gamma((k-1)/2)}
    (1-u^2)^{(k-3)/2},
    \qquad -1\leq u\leq 1.
\end{align}
Therefore,
\begin{align}
    \operatorname{Pr}(U_1\geq t) = 
    \begin{cases}
    0 & \text{if } t\geq 1,\\[4pt]
    1 & \text{if } t\leq -1,\\[4pt]
    \displaystyle
    \frac{\Gamma(k/2)}
    {\sqrt{\pi}\Gamma((k-1)/2)}
    \int_{t}^1 (1-u^2)^{(k-3)/2}\,du,
    & \text{if } -1<t<1.
    \end{cases}
\end{align}
It remains to rewrite the integral in terms of the regularized incomplete beta
function. For $0\leq t<1$,
\begin{align}
    \frac{1}{2}
    I_{1-t^2}
    \left(\frac{k-1}{2},\frac{1}{2}\right)
    &=
    \frac{\Gamma(k/2)}{2\Gamma((k-1)/2)\Gamma(1/2)}
    \int_0^{1-t^2}
    s^{\frac{k-3}{2}}(1-s)^{-1/2}\,ds.
\end{align}
Using the change of variables $u=\sqrt{1-s}$, we obtain, for $0 \leq t < 1$,
\begin{align}
    \frac{1}{2}
    I_{1-t^2}
    \left(\frac{k-1}{2},\frac{1}{2}\right)
    &=
    \frac{\Gamma(k/2)}
    {\sqrt{\pi}\Gamma((k-1)/2)}
    \int_{t}^1 (1-u^2)^{(k-3)/2}\,du. \label{mnlyt}
\end{align}

For $-1<t<0$, by symmetry of the density of $U_1$,
\begin{align}
    \operatorname{Pr}(U_1\geq t)
    &=
    1-\operatorname{Pr}(U_1<t) \\
    &=
    1-
    \frac{\Gamma(k/2)}
    {\sqrt{\pi}\Gamma((k-1)/2)}
    \int_{-1}^{t} (1-u^2)^{(k-3)/2}\,du \\
    &=
    1-
    \frac{\Gamma(k/2)}
    {\sqrt{\pi}\Gamma((k-1)/2)}
    \int_{-t}^{1} (1-u^2)^{(k-3)/2}\,du.
\end{align}
Since $-t\in(0,1)$, the previous identity $(\ref{mnlyt})$ gives
\begin{align}
    \operatorname{Pr}(U_1\geq t)
    &=
    1-
    \frac{1}{2}
    I_{1-t^2}
    \left(\frac{k-1}{2},\frac{1}{2}\right).
\end{align}
Combining the above cases and recalling the definition of $L_k$, we obtain
\begin{align}
    \operatorname{Pr}\left(\|Z^k-s^k\|^2\leq kd\right)
    &=
    L_k\left(
    \frac{\upsilon+1-2\delta}
    {2\sqrt{\upsilon}\sqrt{1-\delta}}
    \right)\\
    &=
    \rho_{k,d}(\upsilon).
\end{align}

\section{Proof of Proposition \ref{prop:normal-approximation} \label{prop:normal-approximation_proof}}

Define
\begin{align*}
    s_k &= \sqrt{\frac{k}{2}}, \quad \quad 
    U_k = s_k(\Upsilon-1), \quad \quad 
    \widehat W_k = \frac{f_k(\Upsilon)-b_k}{s_k}.
\end{align*}
Since $\Upsilon\stackrel{d}{=}\chi_k^2/k$,
\begin{align*}
U_k\stackrel{d}{=} \frac{1}{\sqrt{k}}\sum_{i=1}^k
\frac{G_i^2-1}{\sqrt{2}},
\end{align*}
where the $G_i$'s are i.i.d. standard normal random variables. Hence, the Berry--Esseen theorem gives
\begin{align}
\sup_{u\in\mathbb{R}}
\left|
\mathbb{P}[U_k\leq u]-\Phi(u)
\right|
\leq\frac{C}{\sqrt{k}}.
\label{eq:berrmuy-esseen}
\end{align}

Let $a=\max\{1,2C_1\}$ and choose
\begin{align*}
    \eta=\min\left\{\eta_1,\frac12,\frac{1}{8a}\right\}.
\end{align*}
Let $\mathcal I=[1-\eta,1+\eta]$ and
$\mathcal G=\{\Upsilon\in\mathcal I\}$. By the usual Chernoff bound for $\chi_k^2$,
\begin{align}
    \mathbb P(\mathcal G^c)\leq Ce^{-\gamma k}
    \label{cher86noffG}
\end{align}
for constants $C$ and $\gamma>0$.
\begin{remark}
    In $(\ref{cher86noffG})$ and throughout the proof of the proposition, any unspecified constants such as $C$, $\gamma$ and $k_0$ depend only on $\eta_1$, $k_1$ and $C_1$. The values of these constants may change from one occurrence to the next.
\end{remark}

Since $k(\Upsilon-1)^2=2U_k^2$, the hypothesis $(\ref{eq:local-quadratic-expansion})$ gives, on the event $\mathcal G$ and for $k\geq k_1$,
\begin{align}
    |\widehat W_k-U_k|
    \leq \frac{C_1(1+2U_k^2)}{s_k}
    \leq \frac{a(1+U_k^2)}{s_k}
\end{align}
and hence,  
\begin{align}
    U_k -\frac{a(1+U_k^2)}{s_k} \leq \widehat W_k \leq U_k + \frac{a(1+U_k^2)}{s_k}. \label{eq:W-T-caeomparison}
\end{align}
Note that
\begin{align}
    \mathcal G=\{|\Upsilon-1|\leq\eta\}
    =\{|U_k|\leq\eta s_k\}.
    \label{1-a0uf}
\end{align}
In view of \eqref{eq:W-T-caeomparison}, we define the two "envelope" functions
\begin{align}
    g_{k,\pm}(u)=u\pm\frac{a(1+u^2)}{s_k},
    \qquad u\in\mathcal D,
    \label{eq:quadratic-envelopes}
\end{align}
where $\mathcal D=[-\eta s_k,\eta s_k]$. The choice of $\eta$ ensures that the  derivatives of both envelope functions on $\mathcal D$ lie between $3/4$ and $5/4$. Extend each function continuously to $\mathbb R$ by an affine function of slope $1$ on each of the intervals $(-\infty,-\eta s_k)$ and $(\eta s_k,\infty)$. The extended functions are strictly increasing and satisfy
\begin{align}
    \frac34
    \leq
    \frac{g_{k,\pm}(v)-g_{k,\pm}(u)}{v-u}
    \leq\frac54,
    \qquad u<v.
    \label{eq:envelope-slope-bounds}
\end{align}
Moreover, the extensions satisfy
\begin{align}
    |g_{k,\pm}(u)-u|
    \leq\frac{a(1+u^2)}{s_k},
    \qquad u\in\mathbb R,
    \label{eq:g-iden867tity-comparison}
\end{align}
since outside $\mathcal D$ the difference $g_{k,\pm}(u)-u$ is constant on each of the two intervals and has absolute value $a(1+\eta^2s_k^2)/s_k$.

We next transfer the normal approximation for \(U_k\) in \eqref{eq:berrmuy-esseen} to the two auxiliary random variables \(g_{k,+}(U_k)\) and \(g_{k,-}(U_k)\). We show that each satisfies the same \(O(k^{-1/2})\) Kolmogorov bound as in \eqref{eq:berrmuy-esseen}. Using \eqref{eq:W-T-caeomparison}, these estimates will later be used to bracket the distribution function of \(\widehat W_k\) itself. 

Fix either sign and write $g_k$ for the corresponding function. Since $g_k$ is strictly increasing and its range is $\mathbb R$, for every $w\in\mathbb R$ there is a unique $u_w\in\mathbb R$ such that $g_k(u_w)=w$. Hence,
\begin{align}
    \mathbb P[g_k(U_k)\leq w]=\mathbb P[U_k\leq u_w].
    \label{eq:ehevent-comparison}
\end{align}

We next compare $\Phi(w)$ and $\Phi(u_w)$. From $(\ref{eq:g-iden867tity-comparison})$,
\begin{align}
    |w-u_w|\leq\frac{a(1+u_w^2)}{s_k}.
    \label{eq:x3-y-bound}
\end{align}
Also, $|g_k(0)|=a/s_k$. \textit{Case 1:} If $|u_w|\leq4a/s_k$, then $(\ref{eq:x3-y-bound})$ and the bound $\sup_u\Phi'(u)\leq1/\sqrt{2\pi}$ give
\begin{align*}
    |\Phi(w)-\Phi(u_w)|\leq\frac{C}{s_k}.
\end{align*}
\textit{Case 2:} If $|u_w|>4a/s_k$, then $(\ref{eq:envelope-slope-bounds})$ shows that $w$ and $u_w$ have the same sign and
\begin{align*}
    |w|\geq\frac34|u_w|-\frac{a}{s_k}
    \geq\frac{|u_w|}{2}.
\end{align*}
Therefore in \textit{Case 2}, every point $u$ between $w$ and $u_w$ satisfies $|u|\geq|u_w|/2$. The mean value theorem gives
\begin{align*}
|\Phi(w)-\Phi(u_w)|
&\leq
|w-u_w|
\sup_{u\text{ between }w\text{ and }u_w}
\frac{1}{\sqrt{2\pi}}e^{-u^2/2}\\
&\leq
\frac{C}{s_k}(1+u_w^2)e^{-u_w^2/8}\\
&\leq\frac{C}{s_k}.
\end{align*}
Thus, in all cases,
\begin{align}
    \sup_{w\in\mathbb R}|\Phi(w)-\Phi(u_w)|
    \leq\frac{C}{\sqrt{k}}.
    \label{eq:normal-2cdf-comparison}
\end{align}
Combining $(\ref{eq:berrmuy-esseen})$, $(\ref{eq:ehevent-comparison})$ and $(\ref{eq:normal-2cdf-comparison})$, we obtain, for either sign,
\begin{align}
    \sup_{w\in\mathbb R}
    \left|\mathbb P[g_{k,\pm}(U_k)\leq w]-\Phi(w)\right|
    \leq\frac{C}{\sqrt{k}}.
    \label{eq:envelope-normal}
\end{align}

On the event $\mathcal G$, $(\ref{eq:W-T-caeomparison})$ and the definitions \eqref{eq:quadratic-envelopes} imply that 
$g_{k,-}(U_k)\leq\widehat W_k\leq g_{k,+}(U_k)$. Consequently, for every $w\in\mathbb R$,
\begin{align*}
    \mathbb P[g_{k,+}(U_k)\leq w]-\mathbb P(\mathcal G^c)
    &\leq\mathbb P[\widehat W_k\leq w]\\
    &\leq\mathbb P[g_{k,-}(U_k)\leq w]
    +\mathbb P(\mathcal G^c).
\end{align*}
Combining these inequalities with $(\ref{cher86noffG})$ and $(\ref{eq:envelope-normal})$ proves $(\ref{tronle})$.

\subsection*{Spherical-cap specialization (Corollary \ref{coro_specialize})}

We now prove $(\ref{eq:spherical-cap-normal})$ under the local boundedness condition $(\ref{eq:spherical-perturbation-bound})$. Let $\delta=d/\sigma^2$ and define
\begin{align*}
    t(\upsilon)&=
    \frac{\upsilon+1-2\delta}{2\sqrt{\upsilon}\sqrt{1-\delta}}.
\end{align*}
Since $t(1)=\sqrt{1-\delta}\in(0,1)$, we can choose $0<\eta\leq\min\{\eta_1,1/2\}$ such that whenever $|\upsilon-1|\leq\eta$, the value of $t(\upsilon)$ lies in a compact subinterval of $(0,1)$. In this specialization, let $\mathcal I=[1-\eta,1+\eta]$ and $\mathcal G=\{\Upsilon\in\mathcal I\}$. Here $\eta$ depends only on $\delta$ and $\eta_1$. Unspecified constants such as $C$, $\gamma$ and $k_0$ may now depend on $\delta$, $\eta_1$, $k_1$ and $C_1$, and may change from one occurrence to the next.

For $k\geq2$ and $0<t<1$, we rewrite $L_k(t)$ as
\begin{align}
    L_k(t)
    &=
    \frac{1}{2B\left(\frac{k-1}{2},\frac12\right)}
    \int_0^{1-t^2}u^{(k-3)/2}(1-u)^{-1/2}\,du,
    \label{Lin546tegral}
\end{align}
where $B(a,b)$ denotes the beta function \cite[8.17.3]{NISTHandbook} defined as
\begin{align*}
    B(a,b)=\frac{\Gamma(a)\Gamma(b)}{\Gamma(a+b)}.
\end{align*}
For $\upsilon\in\mathcal I$, let $s(\upsilon)=1-t(\upsilon)^2$. Then
\begin{align}
s(\upsilon)
&=
\frac{\delta}{\upsilon}
\left(
1-\frac{(\upsilon-1)^2}{4\delta(1-\delta)}
\right).
\label{sdi3dentity}
\end{align}
Based on the integral representation $(\ref{Lin546tegral})$ and a change of variables $u=s \tau$, we can write, for $\upsilon\in\mathcal I$,
\begin{align}
    \rho_{k,d}(\upsilon)
    &=L_k(t(\upsilon))\notag\\
    &=\lambda_k s(\upsilon)^{\frac{k-1}{2}}\ell_k(s(\upsilon)),
    \label{Ad3exact}
\end{align}
where
\begin{align}
    \lambda_k&=
    \frac{\Gamma(k/2)}
    {\sqrt{\pi}(k-1)\Gamma((k-1)/2)},
    \label{lamlamlam}\\
    \ell_k(s)&=
    \frac{k-1}{2}\int_0^1 \tau^{\frac{k-3}{2}}(1-s\tau)^{-1/2}\,d\tau.
    \label{ellnndef}
\end{align}
Since $s(\upsilon)$ stays in a compact subinterval of $(0,1)$ for $\upsilon\in\mathcal I$, the integral defining $\ell_k$ gives
\begin{align}
    1\leq\ell_k(s(\upsilon))
    \leq(1-s(\upsilon))^{-1/2}\leq C.
    \label{28uiif}
\end{align}
Hence, for $\upsilon\in\mathcal I$, we have
\begin{align}
    \rho_{k,d}(\upsilon)
    =
    \underbrace{\lambda_k}_{\sim (2\pi k)^{-1/2}}
    \cdot
    \underbrace{s(\upsilon)^{\frac{k-1}{2}}}_{\text{exponential}}
    \cdot
    \underbrace{\ell_k(s(\upsilon))}_{\text{bounded correction}}.
    \label{nmyknf}
\end{align}
Since $s(\upsilon)$ is uniformly bounded away from $0$ and below $1$ for $\upsilon\in\mathcal I$, we have
\begin{align}
    \sup_{\upsilon\in\mathcal I}\rho_{k,d}(\upsilon)
    \leq Ce^{-\gamma k}.
    \label{eq:spherical-cap-local-decay}
\end{align}
Recall the definition $W_k
\coloneqq -\log\rho_{k,d}(\Upsilon)+\xi_k(\Upsilon)$. Then
combining $(\ref{sdi3dentity})$ and $(\ref{Ad3exact})$, we obtain on the event $\mathcal G$ that
\begin{align}
W_k
&= 
-\log\lambda_k+(k-1)R(d)
+\frac{k-1}{2}h(\Upsilon)\notag\\
&\quad-\log\ell_k(s(\Upsilon))+\xi_k(\Upsilon),
\label{Wke343xpansion}
\end{align}
where
\begin{align}
    h(\upsilon)
    =
    \log\upsilon
    -
    \log\left(
    1-\frac{(\upsilon-1)^2}{4\delta(1-\delta)}
    \right).
    \label{hd2tef}
\end{align}
Since $h(1)=0$ and $h'(1)=1$, Taylor's theorem gives
\begin{align}
    |h(\upsilon)-(\upsilon-1)|
    \leq C(\upsilon-1)^2,
    \qquad \upsilon\in\mathcal I.
    \label{eq:Tay3glor-bound}
\end{align}
Stirling's formula yields
\begin{align}
    -\log\lambda_k
    =
    \frac12\log k+\frac12\log(2\pi)+O(k^{-1}).
    \label{stirlih5bnglambda}
\end{align}

Now let
\begin{align*}
    f_k(\upsilon)&=-\log\rho_{k,d}(\upsilon)+\xi_k(\upsilon),\\
    b_k&=kR(d)+\frac12\log k.
\end{align*}
The calculation in $(\ref{Wke343xpansion})$ holds pointwise for every realization $\Upsilon = \upsilon\in\mathcal I$. Hence, for $\upsilon \in \mathcal{I}$,
\begin{align*}
    f_k(\upsilon) &= -\log\lambda_k+(k-1)R(d)
+\frac{k-1}{2}h(\upsilon)\notag\\
& \quad \quad \quad \quad-\log\ell_k(s(\upsilon))+\xi_k(\upsilon)\\
&= b_k + \frac{k-1}{2}h(\upsilon) + r_k(\upsilon), 
\end{align*}
where in the last equality above, we used $(\ref{eq:spherical-perturbation-bound})$, $(\ref{28uiif})$ and $(\ref{stirlih5bnglambda})$ so that the term $r_k(\upsilon)$ satisfies 
\begin{align}
    \sup_{k\geq\max\{k_1,2\}}
    \sup_{\upsilon\in\mathcal I}|r_k(\upsilon)|
    &\leq C.
    \label{eq:spherical-cap-local-expansion}
\end{align}
Since
\begin{align*}
    \frac{k-1}{2}h(\upsilon)-\frac{k}{2}(\upsilon-1)
    =
    \frac{k-1}{2}\bigl(h(\upsilon)-(\upsilon-1)\bigr)
    -\frac{\upsilon-1}{2},
\end{align*}
$(\ref{eq:Tay3glor-bound})$ and $(\ref{eq:spherical-cap-local-expansion})$ imply
\begin{align}
    \left|f_k(\upsilon)-b_k-\frac{k}{2}(\upsilon-1)\right|
    \leq C\bigl(1+k(\upsilon-1)^2\bigr),
    \qquad \upsilon\in\mathcal I.
    \label{eq:spherical-cap-quadratic-expansion}
\end{align}
Thus, $f_k$ satisfies the hypotheses of Proposition \ref{prop:normal-approximation}, with constants depending only on $\delta$, $\eta_1$, $k_1$ and $C_1$. Applying the proposition proves $(\ref{eq:spherical-cap-normal})$.

\section{Proof of Theorem \ref{thm_achievability} \label{thm_achievability_proof}}

Let $\widehat Z^k$ be the output of the decoder when $S^k$ is the input to the encoder. We start by considering the conditional probability 
\begin{align}
    \Pr\left (\|S^k-\widehat Z^k\|^2 \leq kd \big | S^k = s^k, E_\mathcal{J}\right), \label{condsuccess}
\end{align}
which denotes the probability
conditioned on the source realization $s^k$ and on the event $E_\mathcal{J}$ that the channel output $Y^n$ has $n - j$ erasures, and the unerased block positions are
$\mathcal J\subseteq\{1,\ldots,n\}$ where $|\mathcal J|=j$. Let $\upsilon = \|s^k\|^2/(k \sigma^2)$.

The decoder in $(\ref{bec_dec})$ checks whether for each entry  $(\bar{X}_i^n, \bar{Z}_i^k)$ in the codebook, $\bar{X}_i^n$ agrees with $Y^n$ in every unerased position in $\mathcal{J}$. Such entries will be called decoder-consistent. The decoder then outputs the $\bar{Z}_i^k$ from the first such entry. On the other hand, the encoder in $(\ref{bec_enc})$ transmits the $\bar{X}_i^n$ from the first entry $(\bar{X}_i^n, \bar{Z}_i^k)$ whose reproduction mark $\bar{Z}_i^k$ lies in $B_d(s^k)$. Let $I$ and $D$
denote the encoder-selected and decoder-selected indices, respectively. 
On the event $E_{\mathcal J}$, the received unerased symbols at the decoder are
\[
(Y_\ell)_{\ell\in\mathcal J}
=
(\bar X_{I,\ell})_{\ell\in\mathcal J}.
\]
Consequently,
\[
D
=
\min\left\{
i\in\mathbb N:
(\bar X_{i,\ell})_{\ell\in\mathcal J}
=
(\bar X_{I,\ell})_{\ell\in\mathcal J}
\right\}.
\]
In particular, $D\leq I$ and $D < \infty$ almost surely.

Note that $\rho_{k,d}(\upsilon) = P_{Z^k}(B_d(s^k))$. 
We first handle the case $\rho_{k,d}(\upsilon)=0$. In this case, the encoder definition gives
$I=1$. Since index $1$ is decoder-consistent and no smaller index exists,
we also have $D=1$. Hence, $\widehat Z^k=\bar Z_1^k$. It follows that the conditional
excess-distortion probability is one. This agrees with
\[
\frac{1-\rho_{k,d}(\upsilon)}
{1+(2^{mj}-1)\rho_{k,d}(\upsilon)}
=
1.
\]

We now assume that $\rho_{k,d}(\upsilon)>0$. Because the reproduction codewords are
independent and each one belongs to $B_d(s^k)$ with probability $\rho_{k,d}(\upsilon)$, the
encoder-selected index has the geometric distribution
\begin{align}
\Pr\left(
I=t
\,\middle|\,
S^k=s^k,E_{\mathcal J}
\right)
=
(1-\rho_{k,d}(\upsilon))^{t-1}\rho_{k,d}(\upsilon),
\qquad t\in\mathbb N.
\label{eq:encoder-index-geometric}
\end{align}
Note that the event $\{I = t \}$ above is independent of the event $E_{\mathcal{J}}$ since the latter concerns the random erasures introduced by the channel on to the channel input, and the reproduction codewords are independently generated from the channel input codewords.

Conditioned on $I=t$, the reproduction codewords satisfy $\bar Z_i^k\notin B_d(s^k)$ for $i = 1, \ldots, t - 1$, and $\bar Z_t^k\in B_d(s^k).$ Since the index $I$ is determined only by the reproduction codewords, the
channel-input codewords $\bar X_1^n,\ldots,\bar X_t^n$ remain independent and uniformly distributed over $\mathcal X^n$
conditional on $I=t$. Conditioned further on
$(\bar X_{t,\ell})_{\ell\in\mathcal J}$, each preceding channel-input
codeword agrees with $\bar X_t^n$ in every position in $\mathcal J$ with
probability $2^{-mj}$, independently of the other preceding channel-input
codewords. This is because $2^{-mj}$ is the probability that two independent channel-input sequences,
each uniformly distributed over $\mathcal X^n$, agree in all the positions
in $\mathcal J$. It then follows that
\begin{align}
\Pr\left(
D=t
\,\middle|\,
S^k=s^k,E_{\mathcal J},I=t
\right)
=
(1-2^{-mj})^{t-1}.
\label{eq:decoder-chooses-encoder-index}
\end{align}

Conditional on $I=t$, all reproduction codewords with indices less than
$t$ lie outside $B_d(s^k)$, while the encoder-selected reproduction
$\bar Z_t^k$ lies in $B_d(s^k)$. Therefore, successful reconstruction
occurs if and only if the decoder selects index $t$. We obtain
\begin{align*}
&\Pr\left(
\|S^k-\widehat Z^k\|^2\leq kd
\,\middle|\,
S^k=s^k,E_{\mathcal J},I=t
\right)\\
&\qquad =
(1-2^{-mj})^{t-1}.
\end{align*}
For brevity, let $r = \rho_{k,d}(\upsilon)$. Then averaging over $I$ using
\eqref{eq:encoder-index-geometric} gives
\begin{align*}
&\Pr\left(
\|S^k-\widehat Z^k\|^2\leq kd
\,\middle|\,
S^k=s^k,E_{\mathcal J}
\right)\\
&\qquad =
\sum_{t=1}^{\infty}
(1-r)^{t-1}r\,
(1-2^{-mj})^{t-1}\\
&\qquad = \frac{2^{mj}r}
{1+\bigl(2^{mj}-1\bigr)r}.
\end{align*}
Hence, 
\begin{align}
&\Pr\left(
\|S^k-\widehat Z^k\|^2>kd
\,\middle|\,
S^k=s^k,E_{\mathcal J}
\right)\notag\\
&\qquad =
1-
\frac{2^{mj}\rho_{k,d}(\upsilon)}
{1+\bigl(2^{mj}-1\bigr)\rho_{k,d}(\upsilon)}\notag\\
&\qquad =
\frac{1-\rho_{k,d}(\upsilon)}
{1+\bigl(2^{mj}-1\bigr)\rho_{k,d}(\upsilon)}.
\label{eq:conditional-error}
\end{align}
As shown above, this formula also holds when $\rho_{k,d}(\upsilon)=0$. Thus, for every source realization
$s^k\neq 0$ and every erasure pattern $\mathcal J$ of cardinality $j$, we have
\begin{align}
\Pr\left(
\|S^k-\widehat Z^k\|^2>kd
\,\middle|\,
S^k=s^k,E_{\mathcal J}
\right)
=
\Psi_j(\upsilon;d). \label{8hk2jtl}    
\end{align}
Note that the RHS depends on $E_{\mathcal{J}}$ only through $|\mathcal{J}| = j$, not the positions of the unerased blocks, and it depends on $s^k$ only through the normalized energy $\upsilon = \|s^k\|^2/(k \sigma^2)$. We have $\operatorname{Pr}(|\mathcal{J}| = j) = \pi_j$ as given in \eqref{njerasureslej4}. Averaging \eqref{8hk2jtl} over $\Upsilon$ and $|\mathcal{J}|$, noting that $\Upsilon \ind |\mathcal{J}|$, we obtain 
\begin{align*}
\Pr\left (\|S^k-\widehat Z^k\|^2 > kd \right)
&=
\sum_{j=0}^n
\pi_j
\int_0^\infty
p_{\Upsilon}(\upsilon)\Psi_j(\upsilon;d)\,d\upsilon.
\end{align*}
The expression above is the excess-distortion probability averaged over
the random codebook. Therefore, there exists a realization of the random
codebook whose excess-distortion probability is no larger than this
average. Hence, whenever
\[
\epsilon
\geq
\sum_{j=0}^n
\pi_j
\int_0^\infty
p_{\Upsilon}(\upsilon)\Psi_j(\upsilon;d)\,d\upsilon,
\]
there exists a $(d,\epsilon,n,k)$ source-channel code.

\section{Proof of Theorem \ref{asymp_thm_achievability} \label{asymp_thm_achievability_proof} }

Let $\rho_{k, d}(\upsilon) = L_k(t(\upsilon))$, where 
\begin{align*}
     t(\upsilon) &= \frac{\upsilon + 1 - 2\delta}{2\sqrt{\upsilon} \sqrt{1-\delta}}. 
\end{align*}
Then note that
\begin{align*}
    \frac{1-\rho_{k, d}(\upsilon)}
{1+\bigl(2^{mj}-1\bigr)\rho_{k, d}(\upsilon)} = \frac{1}
{1+2^{mj} \frac{\rho_{k, d}(\upsilon)}{1 - \rho_{k, d}(\upsilon) }},
\end{align*}
where we use the convention $1/0 = \infty$ and $1/\infty = 0$ for the RHS above. 

Let $P_e(d,n,k)$ denote the ensemble-average excess-distortion
probability of the code $C_{n,k}^{(d)}$ in \eqref{bec_enc} - \eqref{bec_dec}.
In the proof of  Theorem~\ref{thm_achievability}, we showed that it is equal to 
\begin{align*}
\mathbb{E}\left [ \frac{1}{1 + e^{J_n - W_k}}  \right],
\end{align*}
where
\begin{align*}
    J_n &= \log(2^m) \sum_{i=1}^n B_i,\\
    \Upsilon &\stackrel{d}{=} \frac{\chi_k^2}{k},\\
    W_k &= - \log \left(\rho_{k, d}(\Upsilon)\right) + \log(1 -\rho_{k, d}(\Upsilon) ), 
\end{align*}
$B_1, \ldots, B_n$ are independent, each $B_i \sim \operatorname{Bern}(1-p_i)$, and $J_n \ind \Upsilon$. In the definition of $W_k$, we use the convention $-\log 0 = +\infty$. Let $U$ be an independent random variable with CDF given by
\begin{align}
    F_U(u) = \frac{1}{1 + e^{-u}}. \label{32095} 
\end{align}
Then 
\begin{align*}
    \mathbb{P} \left [W_k + U \geq J_n \big | J_n, \Upsilon \right ] = \frac{1}{1 + e^{J_n - W_k}}.
\end{align*}
Hence, 
\begin{align}
    \mathbb{E}\left [ \frac{1}{1 + e^{J_n - W_k}}  \right] = \mathbb{P} \left [W_k + U \geq J_n \right ], \label{marium1}
\end{align}
where $U \ind W_k \ind J_n$. 

We apply Corollary \ref{coro_specialize} with
\[
\xi_k(\upsilon)=\log(1-\rho_{k,d}(\upsilon)).
\]
But first, we must check that the hypothesis \eqref{eq:spherical-perturbation-bound} in Corollary \ref{coro_specialize} holds. The factorization \eqref{nmyknf} of $\rho_{k, d}(\upsilon)$ and the bound \eqref{eq:spherical-cap-local-decay} in Appendix~\ref{prop:normal-approximation_proof}, which hold on a fixed interval $I=[1-\eta,1+\eta]$, where $\eta>0$ depends only on $\delta$, give us that $\rho_{k,d}(\upsilon)\leq Ce^{-\gamma k}$ uniformly on $I$. Here and below, the constants in this source approximation depend only on $\delta$. For all sufficiently large $k$, we therefore have $\rho_{k,d}(\upsilon)\leq1/2$ on $I$, and hence
\begin{align*}
|\xi_k(\upsilon)|
&=-\log(1-\rho_{k,d}(\upsilon))
\leq2\rho_{k,d}(\upsilon)
\leq Ce^{-\gamma k},
\qquad \upsilon\in I.
\end{align*}
Thus $\xi_k$ is uniformly bounded on this fixed neighborhood and satisfies \eqref{eq:spherical-perturbation-bound}. Define
\begin{align*}
\widehat W_k
&=\frac{W_k-kR(d)-\frac12\log k}{\sqrt{k/2}}.
\end{align*}
By \eqref{eq:spherical-cap-normal}, for all sufficiently large $k$,
\begin{align*}
\sup_{w\in\mathbb R}
\left|\mathbb P[\widehat W_k\leq w]-\Phi(w)\right|
\leq\frac{C}{\sqrt{k}}.
\end{align*}
Conditioning on the independent logistic random variable $U$, we obtain
\begin{align}
    &\left|
\mathbb{P}\left[
\frac{
W_k+U-kR(d)-\frac12\log k
}{
\sqrt{k/2}
}
\leq x
\right]
-\Phi(x)
\right| \label{09j,k}\\
&= \left|
\mathbb{E}\left [\mathbb{P}\left[
\frac{
W_k+U-kR(d)-\frac12\log k
}{
\sqrt{k/2}
}
\leq x \Big | U
\right] \right] 
-\Phi(x)
\right|\\
&= \left|
\mathbb{E}\left [\mathbb{P}\left[
\frac{
W_k+U-kR(d)-\frac12\log k
}{
\sqrt{k/2}
}
\leq x \Big | U
\right] -\Phi(x) \right] 
\right|\\
&= \left|
\mathbb{E}\left [\mathbb{P}\left[\widehat W_k
\leq x - \frac{U}{\sqrt{k/2}} \Big | U
\right] -\Phi(x) \right] 
\right|\\
&= \left|
\mathbb{E}\left [\mathbb{P}\left[\widehat W_k
\leq x - \frac{U}{\sqrt{k/2}} \Big | U
\right] - \Phi\left(x - \frac{U}{\sqrt{k/2}} \right) + \Phi\left(x - \frac{U}{\sqrt{k/2}} \right) -  \Phi(x) \right] 
\right|\\
&\leq
\frac{C}{\sqrt{k}}
+
\mathbb{E}\left[
\left|
\Phi\left(x-\frac{U}{\sqrt{k/2}}\right)-\Phi(x)
\right|
\right]\\
&\leq
\frac{C}{\sqrt{k}}
+
\frac{\mathbb{E}\left[|U|\right]}{\sqrt{\pi k}}.
\end{align}
It can be checked that $\mathbb{E}\left[|U|\right] = 2 \log(2)$. Hence, 
\begin{align}
\sup_{x\in\mathbb{R}}
\left|
\mathbb{P}\left[
\frac{
W_k+U-kR(d)-\frac12\log k
}{
\sqrt{k/2}
}
\leq x
\right]
-\Phi(x)
\right|
\leq
\frac{C}{\sqrt{k}}.
\label{eq:W-plus-U-normal}
\end{align}
Since \(U\) has a continuous distribution, \(W_k+U\) has no atoms at
finite points. Hence, for every \(j\in\mathbb{R}\),
\begin{align}
\mathbb{P}[W_k+U\geq j]
\leq
Q\left(
\frac{
j-kR(d)-\frac12\log k
}{
\sqrt{k/2}
}
\right)
+
\frac{C}{\sqrt{k}}.
\label{eq:source-tail-bound}
\end{align}
Since \(J_n\) is independent of \(W_k+U\), conditioning on \(J_n\)
and applying \((\ref{eq:source-tail-bound})\), we obtain
\begin{align}
\mathbb{E}\left [ \frac{1}{1 + e^{J_n - W_k}}  \right]
&=
\mathbb{P}[W_k+U\geq J_n] \label{rer11}\\
&\leq
\mathbb{E}\left[
Q\left(
\frac{
J_n-kR(d)-\frac12\log k
}{
\sqrt{k/2}
}
\right)
\right]
+
\frac{C}{\sqrt{k}}. \label{rer22}
\end{align}
This proves Proposition \ref{semiasympt}.

The rest of the proof uses Proposition \ref{semiasympt} as the starting point. From Proposition \ref{semiasympt}, there exists an $(n, k)$ source-channel code for sufficiently large $k$ whose excess-distortion probability $P_e(d, n, k)$ is upper bounded by 
\begin{align*}
    &\frac{K}{\sqrt{k}} +  \sum_{j=0}^n \pi_j \, Q \left( \frac{j m \log(2) - k R(d) - \frac{1}{2} \log(k) }{\sqrt{k/2}} \right)\\
    &= \mathbb{P} \left( \sqrt{\frac{k}{2}} Z \geq J_n - k R(d) - \frac{1}{2} \log(k)  \right) + \frac{K}{\sqrt{k}},
\end{align*}
where $Z \ind J_n$, $Z \sim \mathcal{N}(0, 1)$ and $J_n = m \log(2) \sum_{i=1}^n B_i$, where $B_1, \ldots, B_n$ are independent and $B_i \sim \operatorname{Bern}(1 - p_i)$.

The constants in Proposition \ref{semiasympt} are uniform in $n$ and in
$p_1,\ldots,p_n$. Indeed, the approximation used in that proposition is
applied only to the source random variable (via Corollary \ref{coro_specialize}), while the distribution of
$J_n$ is retained exactly.

The mean and variance of $J_n$ are
\begin{align}
\mu_n
&=
\log(2^m)\sum_{i=1}^n(1-p_i),\\
V_n
&=
\bigl(\log(2^m)\bigr)^2
\sum_{i=1}^n p_i(1-p_i).
\end{align}
Also, 
\begin{align}
    S_{n,k}^2
&=
\frac{k}{2}+V_n = \operatorname{Var}\left(\sqrt{\frac{k}{2}}Z-(J_n-\mu_n) \right). \label{2398yt34u} 
\end{align}
Consequently,
\begin{align}
P_e(d, n, k)
&\leq
\mathbb{P}\left(
\sqrt{\frac{k}{2}}Z-(J_n-\mu_n)
\geq
\mu_n-kR(d)-\frac{1}{2}\log k
\right)
+
\frac{K}{\sqrt{k}}.
\label{eq:Pe-centered}
\end{align}

Let $Z_1,\ldots,Z_r$ be independent standard Gaussian random variables,
independent of $J_n$. Since Gaussian variances add,
$\sqrt{k/2}Z$ has the same distribution as
$\sum_{\ell=1}^r\sqrt{k/(2r)}Z_\ell$. We may therefore apply the
Berry--Esseen theorem to the $n+r$ independent summands
$-m\log(2)(B_i-(1-p_i))$, $1\leq i\leq n$, and
$\sqrt{k/(2r)}Z_\ell$, $1\leq\ell\leq r$. The total contribution of
the Gaussian summands to the sum of third absolute moments is
$r(k/(2r))^{3/2}\mathbb{E}[|Z|^3]
=(k/2)^{3/2}\mathbb{E}[|Z|^3]/\sqrt{r}$, which goes to zero as
$r\to\infty$. The distribution and variance of the total sum do not
depend on $r$. Since the resulting Berry--Esseen inequality holds for
every $r$, taking $r\to\infty$ removes the Gaussian third-moment term,
leaving only the third absolute moments of the Bernoulli summands. For the Bernoulli summands, 
\begin{align}
\mathbb{E}\left[
\left|
m\log(2)\bigl(B_i-(1-p_i)\bigr)
\right|^3
\right]
&=
\bigl(m\log(2)\bigr)^3
p_i(1-p_i)
\left(
p_i^2+(1-p_i)^2
\right)
\nonumber\\
&\leq
\bigl(m\log(2)\bigr)^3p_i(1-p_i).
\label{eq:third-moment}
\end{align}
It follows that
\begin{align}
P_e(d, n, k)
&\leq
Q\left(
\frac{
\mu_n-kR(d)-\frac{1}{2}\log k
}{
S_{n,k}
}
\right)
+
C_{\mathrm{BE}}
\frac{m\log(2)V_n}{S_{n,k}^3}
+
\frac{K}{\sqrt{k}},
\label{eq:Pe-BE}
\end{align}
where $C_{\mathrm{BE}}$ is an absolute constant.

Since
\begin{align}
\sup_{v\geq 0}
\frac{v}{(k/2+v)^{3/2}}
&=
\frac{2\sqrt{2}}{3\sqrt{3}}\frac{1}{\sqrt{k}},
\label{eq:variance-remainder}
\end{align}
there is a constant $D$, independent of $n$, $k$, and
$p_1,\ldots,p_n$, such that
\begin{align}
P_e(d, n, k)
&\leq
Q\left(
\frac{
\mu_n-kR(d)-\frac{1}{2}\log k
}{
S_{n,k}
}
\right)
+
\frac{D}{\sqrt{k}}.
\label{eq:Pe-normal}
\end{align}

Let $q = Q^{-1}(\epsilon)$ for some fixed $\epsilon \in (0, 1)$ and 
\begin{align}
x
&=
\frac{
\mu_n-kR(d)-\frac{1}{2}\log k
}{
S_{n,k}
}.
\label{eq:x-definition}
\end{align}

If $x\geq q+1$, then
\begin{align}
Q(x)+\frac{D}{\sqrt{k}}
&\leq
Q(q+1)+\frac{D}{\sqrt{k}}
\leq
Q(q)
=
\epsilon
\label{eq:large-x-case}
\end{align}
for all sufficiently large $k$, which implies that $P_e(d, n, k) \leq \epsilon$. 

If $x<q+1$, then we obtain from 
$(\ref{eq:x-definition})$ that, for constants $A > 0$ and $B$
depending only on the fixed parameters,
\begin{align}
\mu_n
&\leq
Ak+BS_{n,k}.
\label{eq:mu-upper-bound}
\end{align}
Moreover, the Bernoulli channel satisfies
\begin{align}
V_n
&=
\bigl(m\log(2)\bigr)^2
\sum_{i=1}^n p_i(1-p_i)
\nonumber\\
&\leq
\bigl(m\log(2)\bigr)^2
\sum_{i=1}^n(1-p_i)
=
m\log(2)\mu_n.
\label{eq:variance-mean}
\end{align}
Combining $(\ref{2398yt34u})$, $(\ref{eq:mu-upper-bound})$ and $(\ref{eq:variance-mean})$ gives
\begin{align}
S_{n,k}^2
&\leq
A'k+B'S_{n,k}
\label{eq:S-quadratic}
\end{align}
for suitable constants $A' > 0$ and $B'$. Solving
$(\ref{eq:S-quadratic})$, we obtain a constant $C > 0$ such that
\begin{align}
S_{n,k}
&\leq
C\sqrt{k}.
\label{eq:S-bound}
\end{align}

Let $\phi$ denote the standard normal PDF and set
$\phi_*=\min_{u\in[q,q+1]}\phi(u)>0$. Let the constant $L = \frac{CD}{\phi_*}$. Then for sufficiently large $k$, 
\begin{align}
    q + \frac{L}{C \sqrt{k}} < q + 1. 
\end{align}
Let $q + \frac{L}{C \sqrt{k}} \leq x < q + 1$. 
Then 
\begin{align}
Q(q)-Q(x)
&=
\int_q^x\phi(u)\,du
\nonumber\\
&\geq
\phi_*(x-q)
\nonumber\\
&\geq
\frac{\phi_*L}{C\sqrt{k}} = \frac{D}{\sqrt{k}}.
\label{eq:normal-tail-gap}
\end{align}
We then obtain from $(\ref{eq:Pe-normal})$ and
$(\ref{eq:normal-tail-gap})$ that
\begin{align}
P_e(d, n, k)
&\leq
Q(x)+\frac{D}{\sqrt{k}}
\leq
Q(q)
=
\epsilon.
\label{eq:final-error-bound}
\end{align}
for all $q + 1 > x \geq q + \frac{L}{C \sqrt{k}}$.

To conclude, denote the constant in
\eqref{eq:gaussian-corollary-statement} by $C_{\mathrm{th}}$
and choose $C_{\mathrm{th}}\geq L$. Then \eqref{eq:gaussian-corollary-statement} implies that 
\begin{align}
x\geq q+\frac{C_{\mathrm{th}}}{S_{n,k}}. \label{823rg8u4girr}    
\end{align}
If $x<q+1$, then \eqref{eq:S-bound} gives
\[
x\geq q+\frac{C_{\mathrm{th}}}{C\sqrt{k}}
\geq q+\frac{L}{C\sqrt{k}},
\]
so \eqref{eq:final-error-bound} holds.
If $x\geq q+1$, \eqref{eq:large-x-case} holds. Increasing $k_0$ to satisfy the preceding lower bounds on $k$
proves the theorem. Both $C_{\mathrm{th}}$ and $k_0$ depend
only on $\delta$, $\epsilon$, and $m$. Hence, we have proved that \eqref{eq:gaussian-corollary-statement}, which implies \eqref{823rg8u4girr}, is a sufficient condition for an optimal source-channel code to be $(d, \epsilon, n, k)$.

\section{Proof of Theorem \ref{convo2} \label{convo2_proof}}

Let $\Upsilon=\|S^k\|^2/(k\sigma^2)$. Conditional on $\Upsilon=\upsilon>0$, the source is uniform on the sphere of radius $r=\sqrt{k\sigma^2\upsilon}$. We first upper bound the maximum fraction of this sphere that can be covered by a ball of radius $\sqrt{kd}$ centered at any arbitrary $z^k\in\mathbb R^k$. This upper bound will be uniform over all $z^k$. For $\upsilon \leq \delta$, we use the trivial upper bound of $1$. Suppose $\upsilon>\delta$ and let $a=\|z^k\|$. If $z^k = \mathbf{0}$, none of the sphere is covered because when $\upsilon > \delta$, we have $r > \sqrt{kd}$. For $a>0$, Corollary \ref{lemmaball_coro} gives us
\begin{align}
\mathbb P\left[\|S^k-z^k\|^2\leq kd\,\middle|\,\Upsilon=\upsilon\right]
&=L_k\left(\frac{r^2+a^2-kd}{2ra}\right).
\label{eq:source-sphere-cap}
\end{align}
Since $r^2>kd$, we have
\begin{align}
\frac{r^2+a^2-kd}{2ra}
&= \frac{1}{2r}\left(a+\frac{r^2-kd}{a}\right)
\geq\frac{\sqrt{r^2-kd}}{r}
=\sqrt{1-\frac{\delta}{\upsilon}},
\label{eq:optimal-cap-angle}
\end{align}
with equality at $a=\sqrt{r^2-kd}$. Since $L_k$ is nonincreasing, the maximum conditional covering probability is exactly $\alpha_{k,d}(\upsilon)$ in $(\ref{eq:maximal-source-cap})$. Specifically, 
\begin{equation}
    \sup_{z^k \in \mathbb{R}^k} \mathbb P\left[\|S^k-z^k\|^2\leq kd\,\middle|\,\Upsilon=\upsilon\right] \leq \alpha_{k,d}(\upsilon). \label{3ujbk}
\end{equation}

Now consider a deterministic source-channel code. Let $\mathcal J\subseteq\{1,\ldots,n\}$ denote the set of unerased block positions, and fix any realization $\mathcal{J} = \mathcal{J}_{r}$ with positive probability and $|\mathcal{J}_{r}|=j$. Since the erasures are independent of the channel input, $\mathcal J$ is independent of $S^k$. Conditioned on $\mathcal J = \mathcal{J}_{r}$, there are exactly $2^{mj}$ possible channel output sequences. Consequently, the set $\mathcal R_{\mathcal{J}_{r}}$ of reproduction sequences that a deterministic decoder, denoted as $\operatorname{g}$, can output satisfies
\begin{align}
|\mathcal R_{\mathcal{J}_{r}}|\leq 2^{mj}.
\label{eq:conditional-reproduction-count}
\end{align}
Successful reproduction requires that
\begin{align}
S^k\in A_{\mathcal{J}_{r}} \coloneqq \bigcup_{z^k\in\mathcal R_{\mathcal{J}_{r}}} B_d(z^k).
\label{eq:conditional-success-union}
\end{align}

Applying the union bound to $(\ref{eq:conditional-success-union})$ and using $(\ref{3ujbk})$, we obtain
\begin{align}
\mathbb P\left[\|S^k-\operatorname g(Y^n)\|^2>kd\,\middle|\,\Upsilon=\upsilon,\mathcal{J}_{r}\right]
&\geq\max\{0,1-2^{mj}\alpha_{k,d}(\upsilon)\}.
\end{align}
Averaging over $\Upsilon$, whose distribution remains $\chi_k^2/k$ under conditioning on $\mathcal{J}_{r}$, gives
\begin{align}
\mathbb P\left[\|S^k-\operatorname g(Y^n)\|^2>kd\,\middle|\,\mathcal{J}_{r}\right]
&\geq\int_0^\infty p_{\Upsilon}(\upsilon)\max\{0,1-2^{mj}\alpha_{k,d}(\upsilon)\}\,d\upsilon.
\label{eq:cap-converse-conditional}
\end{align}

We next obtain a second lower bound for the LHS of \eqref{eq:cap-converse-conditional}. Recall from \eqref{eq:conditional-success-union} that 
\begin{align}
    \operatorname{Pr} \left( \|S^k - \operatorname{g}(Y^n) \|^2 \leq kd \,\middle|\,\mathcal J_r  \right) &\leq \operatorname{Pr}\left( S^k\in A_{\mathcal{J}_r} \right). \label{3498g}
\end{align}
Let
\begin{align}
v_{k,d}\coloneqq\frac{\pi^{k/2}(kd)^{k/2}}{\Gamma(k/2+1)} \label{9bb}
\end{align}
be the volume of the distortion ball of radius $\sqrt{kd}$. The union in $(\ref{eq:conditional-success-union})$ has volume at most $2^{mj}v_{k,d}$. Since the Gaussian density is a decreasing function of the distance from the origin, among measurable sets with volume at most $2^{mj}v_{k,d}$, the zero-centered ball of this volume has the largest Gaussian probability. Its radius is $\sqrt{kd}\,2^{mj/k}$, and we denote this ball by $\mathcal{B}(0, \sqrt{kd}\,2^{mj/k})$. 

Hence, we can upper bound \eqref{3498g} by 
\begin{align*}
    &\operatorname{Pr}\left( S^k\in \mathcal{B}(0, \sqrt{kd}\,2^{mj/k}) \right)\\
    &= \operatorname{Pr}\left( \|S^k \|^2 \leq kd 2^{2mj/k} \right)\\
    &= F_{\chi_k^2} \left(k\delta 2^{2mj/k} \right)
\end{align*}
Hence, 
\begin{align}
\mathbb P\left[\|S^k-\operatorname g(Y^n)\|^2>kd \,\middle|\,\mathcal J_r\right]
&\geq 1-F_{\chi_k^2}\left(k\delta\,2^{2mj/k}\right).
\label{eq:volume-converse-conditional}
\end{align}
Taking the maximum of $(\ref{eq:cap-converse-conditional})$ and $(\ref{eq:volume-converse-conditional})$, and averaging over $\mathcal J$, gives
\begin{align}
\mathbb P\left[\|S^k-\operatorname g(Y^n)\|^2>kd\right]
&\geq\sum_{j=0}^n\pi_j\Theta_j(k,d),
\label{eq:geometric-converse-error}
\end{align}
because $\mathbb P[|\mathcal J|=j]=\pi_j$. For a randomized code, condition first on all source-independent random seeds used by the encoder and decoder. The resulting deterministic code satisfies $(\ref{eq:geometric-converse-error})$ for every realization of these seeds, with the same right-hand side. Averaging over the seeds proves the same inequality for randomized codes. A $(d,\epsilon,n,k)$ code has excess-distortion probability at most $\epsilon$, so $(\ref{eq:geometric-converse-error})$ implies $(\ref{m-})$.

\subsection{Comparison with the list-code converse \label{list_code_converse}}

Let $\lambda$ be the Lebesgue measure on $\mathbb R^k$. Following
\cite{kostina_JSCC}, an $(\epsilon,L,\lambda)$ list code has a decoder
whose output is a measurable set $\mathcal L\subseteq\mathbb R^k$
satisfying $\lambda(\mathcal L)\leq L$ almost surely and
$\mathbb P[S^k\notin\mathcal L]\leq\epsilon$. Thus, the size of the
list is measured by its Lebesgue volume. Given a
$(d,\epsilon,n,k)$ source-channel code, map its decoder output $Z^k$
to the list $\mathcal L=B_d(Z^k)$. This list has volume $v_{k,d}$
given in \eqref{9bb}, and
\begin{align*}
S^k\notin\mathcal L
\iff \|S^k-Z^k\|^2>kd.
\end{align*}
The LHS is the list-decoding error event for the list code while the
RHS is the excess-distortion event for the source-channel code.
Hence, every $(d,\epsilon,n,k)$ source-channel code induces an
$(\epsilon,v_{k,d},\lambda)$ list code, and any necessary condition
(i.e., a converse result) for the latter also applies to the former.

The list-code converse in \cite[Theorem~5]{kostina_JSCC} is based on binary
hypothesis testing. Let $P$ be a probability measure and $Q$ a
$\sigma$-finite measure on a common measurable space $\Omega$.
A possibly randomized test $T:\Omega\to\{0,1\}$ returns a binary
decision.
Write $P_T(t\mid\omega)$ for its conditional probability of decision
$t$ given observation $\omega$, with decision $1$ meaning that it
chooses $P$. For $\alpha\in[0,1]$, define
\begin{align}
\beta_\alpha(P,Q)
&\coloneqq
\inf_{\substack{P_T:\\
 \int P_T(1\mid\omega)\,P(d\omega)\geq\alpha}}
\int P_T(1\mid\omega)\,Q(d\omega).
\label{30tuk,w}
\end{align}
The infimum is over these stochastic conditional laws $P_T$.
The constraint requires the probability of decision $1$ under
$P$ to be at least $\alpha$. When $Q$ is a probability measure, the
objective is the probability of incorrectly choosing $P$ under $Q$.
For a general $\sigma$-finite $Q$, the objective is instead the
$Q$-measure assigned to decision $1$ and need not be a probability.
For a deterministic test, $P_T(1\mid\omega)=T(\omega)$, so the two
integrals reduce to $\int T\,dP$ and $\int T\,dQ$.

\textit{Specializing the hypothesis-testing condition:}
In \cite[Theorem~5]{kostina_JSCC}, $Q_S$ is the measure used to define
the size of the decoder's output list, and $L$ is the upper bound on
that size. In that theorem, we take the source to be $S^k$, the
channel to be $P_{Y^n|X^n}$, the list measure to be $Q_S=\lambda$,
and the list size to be $L=v_{k,d}$. Furthermore, we choose the
auxiliary output probability measure $P_{\overline Y}$ in
\cite[Theorem~5]{kostina_JSCC} to be the following product distribution:
\begin{align}
P_{\overline Y^n}
&=\bigotimes_{i=1}^nP_{\overline Y_i},
\qquad
P_{\overline Y_i}(y_i)=
\begin{cases}
p_i, & y_i=e,\\
(1-p_i)2^{-m}, & y_i\in\mathcal X.
\end{cases}
\label{eq:list-comparison-auxiliary}
\end{align}

Fix any reference input $x^n\in\mathcal X^n$. To verify the symmetry
assumption of that theorem, write
$\mathcal J(y^n)=\{i:y_i\neq e\}$ and
$j(y^n)=|\mathcal J(y^n)|$. For every $y^n$ with
$P_{\overline Y^n}(y^n)>0$,
\begin{align}
\log\frac{P_{Y^n|X^n=x^n}(y^n)}{P_{\overline Y^n}(y^n)}
&=
\begin{cases}
mj(y^n)\log 2,
& y_i=x_i\text{ for every }i\in\mathcal J(y^n),\\
-\infty, & \text{otherwise.}
\end{cases}
\label{eq:list-comparison-symmetry}
\end{align}
Under $P_{Y^n|X^n=x^n}$, every unerased block agrees with $x^n$, and
the number of unerased blocks has PMF $\pi_j$. Under
$P_{\overline Y^n}$, the erasure pattern has the same distribution.
Conditional on any erasure pattern of positive probability with $j$
unerased blocks, their values are uniform on $\mathcal X^j$, so they
agree with $x^n$ with probability $2^{-mj}$. Therefore, the
distribution of the log-likelihood ratio in
\eqref{eq:list-comparison-symmetry} does not depend on $x^n$ under
either output law, as required by the theorem.

Define two measures on $\Omega=\mathbb R^k\times\mathcal Y^n$ by
\begin{align}
P_{x^n}
&=P_{S^k}\otimes P_{Y^n|X^n=x^n},
\qquad
Q=\lambda\otimes P_{\overline Y^n}.
\label{eq:list-comparison-measures}
\end{align}
The measure $P_{x^n}$ is a probability measure under which the
Gaussian source is independent of a channel output generated using
the fixed input $x^n$. The measure $Q$ is $\sigma$-finite because its
source factor is Lebesgue measure. The reference input specifies
the conditional channel law in $P_{x^n}$ and imposes no restriction
on the actual encoder. With the choices above,
\cite[Theorem~5]{kostina_JSCC}
requires every $(\epsilon,v_{k,d},\lambda)$ list code, and hence every
$(d,\epsilon,n,k)$ source-channel code, to satisfy
\begin{align}
\beta_{1-\epsilon}(P_{x^n},Q)\leq v_{k,d}.
\label{27rvbf}
\end{align}

We now show that every parameter choice satisfying \eqref{m-} also
satisfies \eqref{27rvbf}, establishing that our necessary condition
is at least as strong as this specialization of the list-code
converse. The comparison uses the volume component of our bound,
namely the second term in the maximum defining $\Theta_j(k,d)$ in
\eqref{eq:conditional-converse}:
\begin{align}
1-F_{\chi_k^2}\left(k\delta\,2^{2mj/k}\right),
\qquad \delta=d/\sigma^2.
\label{eq:list-comparison-volume-term}
\end{align}
By \eqref{30tuk,w}, it is enough to exhibit one deterministic test
$T:\Omega\to\{0,1\}$ satisfying
\begin{align*}
\int T\,dP_{x^n}\geq1-\epsilon,
\qquad
\int T\,dQ\leq v_{k,d}.
\end{align*}
This test observes a pair $(s^k,y^n)$ and returns decision $1$ to
choose $P_{x^n}$. It is an auxiliary mathematical construction and
need not be a decoder or be induced by a source-channel code. Its
first integral is the probability of decision $1$ under $P_{x^n}$.
For a deterministic test, its second integral is the Lebesgue volume
of the accepted source set for each $y^n$, averaged using
$P_{\overline Y^n}(y^n)$.

\textit{Constructing a test and evaluating its two integrals:}
For $j\in\{0,\ldots,n\}$, let
\begin{align*}
r_j=\sqrt{kd}\,2^{mj/k}.
\end{align*}
Define the deterministic test
\begin{align*}
T_{x^n}(s^k,y^n)
&=\mathds{1}\{y_i=x_i\text{ for every }i\in\mathcal J(y^n)\}
\mathds{1}\{\|s^k\|\leq r_{j(y^n)}\}.
\end{align*}
Thus, the test returns $1$ when the unerased output blocks agree with
the reference input and the source vector belongs to the indicated
centered ball. The radius $r_j$ is chosen so that this ball has
volume $2^{mj}v_{k,d}$.

First evaluate the integral under $P_{x^n}$. Agreement with $x^n$
holds automatically under this measure, the number of unerased
blocks has PMF $\pi_j$, and $S^k$ is independent of the channel
output. Since $\|S^k\|^2/\sigma^2$ has the $\chi_k^2$ distribution,
\begin{align}
\int T_{x^n}\,dP_{x^n}
&=\sum_{j=0}^n\pi_j\,\mathbb P[\|S^k\|\leq r_j]\notag\\
&=\sum_{j=0}^n\pi_j
F_{\chi_k^2}\left(\frac{r_j^2}{\sigma^2}\right)\notag\\
&=\sum_{j=0}^n\pi_j
F_{\chi_k^2}\left(k\delta\,2^{2mj/k}\right)
\eqqcolon\alpha_V.
\label{eq:list-comparison-P-integral}
\end{align}
Thus, $\alpha_V$ is the probability that this test returns $1$ under
$P_{x^n}$.

Next evaluate the integral under $Q=\lambda\otimes P_{\overline Y^n}$.
For a given $y^n$, the accepted source set is empty if any unerased
block disagrees with $x^n$; otherwise, it is the ball of radius
$r_{j(y^n)}$. Conditional on an erasure pattern of positive
probability with $j$ unerased blocks, agreement has auxiliary
probability $2^{-mj}$, while the
accepted ball has volume $2^{mj}v_{k,d}$. Integrating over the source
and grouping the output sequences by the number of unerased blocks
therefore gives
\begin{align}
\int T_{x^n}\,dQ
&=\sum_{y^n\in\mathcal Y^n}P_{\overline Y^n}(y^n)
\int_{\mathbb R^k}T_{x^n}(s^k,y^n)\,d\lambda(s^k)\notag\\
&=\sum_{j=0}^n\pi_j\,2^{-mj}
\left(2^{mj}v_{k,d}\right)\notag\\
&=v_{k,d}.
\label{eq:list-comparison-Q-integral}
\end{align}

\textit{The implication between the two converse conditions:}
Suppose now that the numerical condition \eqref{m-} holds. Since
$\Theta_j(k,d)$ is at least its volume component
\eqref{eq:list-comparison-volume-term},
\begin{align}
\epsilon
&\geq\sum_{j=0}^n\pi_j\Theta_j(k,d)\notag\\
&\geq\sum_{j=0}^n\pi_j
\left[1-F_{\chi_k^2}\left(k\delta\,2^{2mj/k}\right)\right]\notag\\
&=1-\alpha_V.
\label{eq:list-comparison-implication}
\end{align}
Consequently, \eqref{eq:list-comparison-P-integral} gives
$\int T_{x^n}\,dP_{x^n}=\alpha_V\geq1-\epsilon$, so the test is
admissible in the infimum defining $\beta_{1-\epsilon}(P_{x^n},Q)$.
Its $Q$-integral equals $v_{k,d}$ by
\eqref{eq:list-comparison-Q-integral}. It follows that
\begin{align*}
\beta_{1-\epsilon}(P_{x^n},Q)
\leq\int T_{x^n}\,dQ
=v_{k,d},
\end{align*}
which is \eqref{27rvbf}. Therefore, every parameter choice satisfying
\eqref{m-} also satisfies the Kostina--Verd\'u specialization with
$Q_S=\lambda$ and the auxiliary output law
\eqref{eq:list-comparison-auxiliary}. Equivalently, any parameter
choice ruled out by this specialization is also ruled out by
\eqref{m-}.

\section{Proof of Theorem \ref{thm_semiasympconverse} \label{thm_semiasympconverse_proof}}

Let $h=\log(2^m)$ and define
\begin{align}
J_n&=h\sum_{i=1}^n B_i, &
\mu_n&=h\sum_{i=1}^n(1-p_i),\notag\\
V_n&=h^2\sum_{i=1}^n p_i(1-p_i), &
S_{n,k}&=\sqrt{k/2+V_n},
\label{eq:converse-mean-variance}
\end{align}
where $B_1,\ldots,B_n$ are independent and $B_i\sim\operatorname{Bern}(1-p_i)$. Let $\Upsilon=\|S^k\|^2/(k\sigma^2)$ be independent of $J_n$, and write
\begin{align}
A_k(\upsilon)&=-\log \alpha_{k,d}(\upsilon), &
b_k&=kR(d)+\frac12\log k, &
s_k&=\sqrt{k/2},
\label{eq:converse-source-variable}
\end{align}
where $\alpha_{k,d}(\upsilon)$ is defined in Theorem \ref{convo2}. Throughout the proof, unspecified constants may change from one occurrence to the next. Before the final inversion, these constants depend only on $\delta$, and also on $m$ when the channel variance is used. In particular, they are independent of $n$ and $p_1,\ldots,p_n$.

\textit{Normal approximation of the source term:} We apply Proposition~\ref{prop:normal-approximation} with $f_k=A_k$ and the centering $b_k$ defined in \eqref{eq:converse-source-variable}. Fix $0<\eta<(1-\delta)/2$ as a function of $\delta$, and let $I=[1-\eta,1+\eta]$. For $\upsilon\in I$, we have $\upsilon>\delta$ and hence 
\begin{equation}
    \alpha_{k, d}(\upsilon) = L_k\left(\sqrt{1-\frac{\delta}{\upsilon}}\right). 
\end{equation}
Then the beta-integral representation \eqref{Lin546tegral} of $L_k(\cdot)$ gives, for $k\geq2$,
\begin{align}
\alpha_{k,d}(\upsilon)
&=\lambda_k\left(\frac{\delta}{\upsilon}\right)^{(k-1)/2}
\ell_k\left(\frac{\delta}{\upsilon}\right),
\label{eq:maximal-cap-factorization}
\end{align}
where $\lambda_k$ and $\ell_k$ are defined in \eqref{lamlamlam} and \eqref{ellnndef}. Here $\delta/\upsilon$ lies in a fixed compact subinterval of $(0,1)$, and
\begin{align}
1\leq\ell_k\left(\frac{\delta}{\upsilon}\right)
&\leq\left(1-\frac{\delta}{1-\eta}\right)^{-1/2}.
\end{align}
Stirling's formula yields
\begin{align}
-\log\lambda_k=\frac12\log k+\frac12\log(2\pi)+O(k^{-1}).
\end{align}
Consequently, for all sufficiently large $k$ and $\upsilon \in I$,
\begin{align}
A_k(\upsilon)&=b_k+\frac{k}{2}\log\upsilon+r_k(\upsilon),\qquad
\sup_{\upsilon\in I}|r_k(\upsilon)|\leq C,
\label{eq:maximal-cap-log-expansion}
\end{align}
where $C$ depends only on $\delta$. Taylor's theorem gives $|\log\upsilon-(\upsilon-1)|\leq C(\upsilon-1)^2$ on $I$. Combining this bound with \eqref{eq:maximal-cap-log-expansion}, we obtain
\begin{align*}
\left|A_k(\upsilon)-b_k-\frac{k}{2}(\upsilon-1)\right|
&\leq C\bigl(1+k(\upsilon-1)^2\bigr),
\qquad \upsilon\in I.
\end{align*}
Thus $A_k$ satisfies the local condition \eqref{eq:local-quadratic-expansion}. Since $\Upsilon\stackrel{d}{=}\chi_k^2/k$, Proposition~\ref{prop:normal-approximation} gives constants $K$ and $k_1$, depending only on $\delta$, such that for $k\geq k_1$,
\begin{align}
\sup_{w\in\mathbb R}
\left|\mathbb P\left[\frac{A_k(\Upsilon)-b_k}{s_k}\leq w\right]-\Phi(w)\right|
&\leq\frac{K}{\sqrt{k}}.
\label{eq:maximal-cap-normal}
\end{align}

\textit{An exponential random-threshold representation:} Let $E$ have the exponential distribution with mean one, independently of $(\Upsilon,J_n)$. For every real $x$,
\begin{align}
\max\{0,1-e^{-x}\}=\mathbb P[E<x]. \label{exportrep}
\end{align}
The spherical-cap component of Theorem \ref{convo2} can therefore be written exactly as
\begin{align}
\mathcal B_{n,k}
&\coloneqq\sum_{j=0}^n\pi_j
\int_0^\infty p_{\Upsilon}(\upsilon)\max\{0,1-2^{mj}\alpha_{k,d}(\upsilon)\}\,d\upsilon\notag\\
&=\mathbb P[A_k(\Upsilon)-J_n-E>0].
\label{eq:exponential-converse-representation}
\end{align}
By Theorem \ref{convo2}, every $(d,\epsilon,n,k)$ code satisfies $\epsilon\geq\mathcal B_{n,k}$.

Conditioning on $J_n=j$ and $E=e$, and using the independence of $\Upsilon$ from $(J_n,E)$, the bound \eqref{eq:maximal-cap-normal} gives, for every $j,e$,
\begin{align*}
\left|
\mathbb P\left[A_k(\Upsilon)>j+e\right]
-
Q\left(\frac{j+e-b_k}{s_k}\right)
\right|
\leq \frac{K}{\sqrt{k}}.
\end{align*}
Moreover, by conditioning on $(J_n,E)$ in \eqref{eq:exponential-converse-representation},
\begin{align*}
\mathcal B_{n,k}
&=
\mathbb E\left[
\mathbb P\left[
A_k(\Upsilon)>J_n+E
\,\middle|\,
J_n,E
\right]
\right].
\end{align*}
Hence, applying the preceding pointwise bound and using $|\mathbb E[X]|\leq \mathbb E[|X|]$,
\begin{align*}
\left|
\mathcal B_{n,k}
-
\mathbb E\left[
Q\left(\frac{J_n+E-b_k}{s_k}\right)
\right]
\right|
&\leq
\mathbb E\left[
\left|
\mathbb P\left[
A_k(\Upsilon)>J_n+E
\,\middle|\,
J_n,E
\right]
-
Q\left(\frac{J_n+E-b_k}{s_k}\right)
\right|
\right] \\
&\leq \frac{K}{\sqrt{k}}.
\end{align*}
Since $\mathbb E[E]=1$ and $\sup_x|Q'(x)|=1/\sqrt{2\pi}$, removing $E$ changes this expectation by at most $1/(s_k\sqrt{2\pi})$. Thus
\begin{align}
\left|\mathcal B_{n,k}-\sum_{j=0}^n\pi_jQ\left(\frac{hj-b_k}{s_k}\right)\right|
&\leq\frac{K'}{\sqrt{k}}.
\label{eq:converse-poisson-binomial-normal}
\end{align}
The constants in this bound depend only on $\delta$; the distribution of the channel term has been retained exactly.

\textit{Source-induced Gaussian smoothing:} Let $Z \sim\mathcal N(0,1)$ be independent of $J_n$. Then 
\begin{align}
\sum_{j=0}^n\pi_jQ\left(\frac{hj-b_k}{s_k}\right) = \mathbb P[s_k Z-(J_n-\mu_n)>\mu_n-b_k].
\end{align}
As in Appendix \ref{asymp_thm_achievability_proof}, specifically \eqref{eq:Pe-centered}, we split $s_kZ$ into $r$ independent zero-mean Gaussian summands of variance $k/(2r)$, and apply Berry--Esseen to these summands and the centered Bernoulli channel summands. The distribution and total variance are independent of $r$, while the sum of third absolute moments of the Gaussian summands tends to zero as $r\to\infty$. Also,
\begin{align}
\sum_{i=1}^n\mathbb E\left[|h(B_i-(1-p_i))|^3\right]
&\leq h^3\sum_{i=1}^n p_i(1-p_i)=hV_n.
\end{align}
Taking $r\to\infty$ in the Berry--Esseen inequality gives
\begin{align}
\sup_{t\in\mathbb R}\left|
\mathbb P[s_kZ-(J_n-\mu_n)>t]-Q\left(\frac{t}{S_{n,k}}\right)
\right|
&\leq C_{\mathrm{BE}}\frac{hV_n}{S_{n,k}^3}.
\label{eq:converse-smoothed-berry-esseen}
\end{align}
The right-hand side is uniformly $O(k^{-1/2})$ because
\begin{align}
\sup_{v\geq0}\frac{v}{(k/2+v)^{3/2}}
&=\frac{2\sqrt{2}}{3\sqrt{3}}\frac{1}{\sqrt{k}}.
\end{align}
Combining this with $(\ref{eq:converse-poisson-binomial-normal})$, we obtain constants $D$ and $k_2$, depending only on $\delta$ and $m$, such that for all $k\geq k_2$,
\begin{align}
\left| \mathcal B_{n,k} -Q\left(\frac{\mu_n- b_k}{S_{n,k}}\right)\right|
&\leq\frac{D}{\sqrt{k}}.
\label{eq:cap-converse-normal-error}
\end{align}
In particular, every $(d,\epsilon,n,k)$ code satisfies
\begin{align}
\epsilon
&\geq Q\left(\frac{\mu_n-kR(d)-\frac12\log k}{S_{n,k}}\right)-\frac{D}{\sqrt{k}}.
\label{eq:converse-normal-probability}
\end{align}

\textit{Uniform inversion at the target probability:} Let $q=Q^{-1}(\epsilon)$. Since $p_i(1-p_i)\leq1-p_i$,
\begin{align}
V_n\leq h\mu_n,\qquad S_{n,k}^2\leq k/2+h\mu_n.
\label{eq:converse-variance-mean}
\end{align}
First suppose $\mu_n\geq2kR(d)+2k$. Then
\begin{align}
\mu_n-kR(d)-qS_{n,k}-\frac12\log k
&\geq\mu_n-kR(d)-|q|\left(\sqrt{k/2}+\sqrt{h\mu_n}\right)-\frac12\log k\notag\\
&\stackrel{(a)}{\geq} \frac{\mu_n}{2}-kR(d)-|q|\sqrt{k/2}-\frac{hq^2}{2}-\frac12\log k\notag\\
&\geq k-|q|\sqrt{k/2}-\frac{hq^2}{2}-\frac12\log k.
\label{eq:converse-large-mean}
\end{align}
In inequality $(a)$, we used $|q|\sqrt{h\mu_n}\leq\mu_n/2+hq^2/2$. The final expression is nonnegative for all sufficiently large $k$, which proves $(\ref{eq:normal-approximation-converse})$ in this case.

It remains to consider $\mu_n<2kR(d)+2k$. By $(\ref{eq:converse-variance-mean})$,
\begin{align}
S_{n,k}\leq D_0\sqrt{k},\qquad
D_0=\sqrt{\frac12+2h(R(d)+1)}.
\label{eq:converse-standard-deviation-bound}
\end{align}
Let
\begin{align}
x=\frac{\mu_n-kR(d)-\frac12\log k}{S_{n,k}}.
\end{align}
Then equation $(\ref{eq:converse-normal-probability})$ implies $Q(x)\leq\epsilon+D/\sqrt{k}$. For sufficiently large $k$, $Q(q-1) > \epsilon+D/\sqrt{k}$, so $x>q-1$. If $x\geq q$, the desired result follows immediately. 

If $q-1<x<q$, then equation $(\ref{eq:converse-normal-probability})$ implies
\begin{align}
\frac{D}{\sqrt{k}}\geq Q(x)-Q(q)
=\int_x^q\phi(u)\,du\geq\phi_*(q-x), \label{9nurl8z}
\end{align}
where $\phi$ denotes the standard Gaussian PDF and we let
\begin{align}
\phi_* =\min_{u\in[q-1,q]}\phi(u)>0.
\end{align}
Hence, \eqref{9nurl8z} gives us 
\begin{align}
\mu_n-kR(d)
&\geq S_{n,k}q+\frac12\log k-\frac{D S_{n,k}}{\phi_*\sqrt{k}}\notag\\
&\geq S_{n,k}q+\frac12\log k-\frac{DD_0}{\phi_*},
\end{align}
where the last inequality above follows from \eqref{eq:converse-standard-deviation-bound}. 
Choosing $C\geq DD_0/\phi_*$ and increasing $k_0$ to meet the finitely many lower bounds on $k$ above proves the theorem. Both $C$ and $k_0$ depend only on $\delta$, $\epsilon$ and $m$.

\bibliographystyle{IEEEtran}

\bibliography{citations}

@misc{adeel_unknown_channels,
      title={Lossy Joint Source-Channel Coding over Unknown Channels}, 
      author={Adeel Mahmood and Harish Viswanathan and Jinfeng Du},
      year={2026},
      eprint={2606.07933},
      archivePrefix={arXiv},
      primaryClass={cs.IT},
      url={https://arxiv.org/abs/2606.07933}, 
}

@misc{nargis,
      title={Towards Robust Semantic Video Transmission over Block Erasure Channels}, 
      author={Nargis Fayaz and Homa Esfahanizadeh and Matin Mortaheb and Jinfeng Du and Harish Viswanathan},
      year={2026},
      eprint={2607.07823},
      archivePrefix={arXiv},
      primaryClass={eess.IV},
      url={https://arxiv.org/abs/2607.07823}, 
}

@ARTICLE{lancho2020saddlepoint,
  author={Lancho, Alejandro and Östman, Johan and Durisi, Giuseppe and Koch, Tobias and Vazquez-Vilar, Gonzalo},
  journal={IEEE Transactions on Wireless Communications}, 
  title={Saddlepoint Approximations for Short-Packet Wireless Communications}, 
  year={2020},
  volume={19},
  number={7},
  pages={4831-4846},
  doi={10.1109/TWC.2020.2987573}}

@ARTICLE{durisi2016toward,
  author={Durisi, Giuseppe and Koch, Tobias and Popovski, Petar},
  journal={Proceedings of the IEEE}, 
  title={Toward Massive, Ultrareliable, and Low-Latency Wireless Communication With Short Packets}, 
  year={2016},
  volume={104},
  number={9},
  pages={1711-1726},
  doi={10.1109/JPROC.2016.2537298}}

@book{NISTHandbook,
  title={NIST Handbook of Mathematical Functions},
  editor={Olver, Frank W.~J. and Lozier, Daniel W. and Boisvert, Ronald F. and Clark, Charles W.},
  year={2010},
  publisher={Cambridge University Press},
  address={New York},
  isbn={9780521140638}
}

@ARTICLE{kostina_SC,
  author={Kostina, Victoria and Verdu, Sergio},
  journal={IEEE Transactions on Information Theory}, 
  title={Fixed-Length Lossy Compression in the Finite Blocklength Regime}, 
  year={2012},
  volume={58},
  number={6},
  pages={3309-3338},
  doi={10.1109/TIT.2012.2186786}}

@INPROCEEDINGS{cheng2020,
  author={Cheng, Zhengxue and Sun, Heming and Takeuchi, Masaru and Katto, Jiro},
  booktitle={2020 IEEE/CVF Conference on Computer Vision and Pattern Recognition (CVPR)}, 
  title={Learned Image Compression With Discretized Gaussian Mixture Likelihoods and Attention Modules}, 
  year={2020},
  volume={},
  number={},
  pages={7936-7945},
  doi={10.1109/CVPR42600.2020.00796}}

@ARTICLE{6802432,
  author={Yang, Wei and Durisi, Giuseppe and Koch, Tobias and Polyanskiy, Yury},
  journal={IEEE Transactions on Information Theory}, 
  title={Quasi-Static Multiple-Antenna Fading Channels at Finite Blocklength}, 
  year={2014},
  volume={60},
  number={7},
  pages={4232-4265},
  doi={10.1109/TIT.2014.2318726}}

@ARTICLE{5452208,

  author={Polyanskiy, Yury and Poor, H. Vincent and Verdu, Sergio},

  journal={IEEE Transactions on Information Theory}, 

  title={Channel Coding Rate in the Finite Blocklength Regime}, 

  year={2010},

  volume={56},

  number={5},

  pages={2307-2359},

  doi={10.1109/TIT.2010.2043769}}

@misc{homapaper2,
      title={Block Erasure-Aware Semantic Multimedia Compression via JSCC Autoencoder}, 
      author={Homa Esfahanizadeh and Nargis Fayaz and Jinfeng Du and Harish Viswanathan},
      year={2026},
      eprint={2601.20707},
      archivePrefix={arXiv},
      primaryClass={cs.MM},
      url={https://arxiv.org/abs/2601.20707}, 
}

@misc{adeelUEP,
      title={Weighted Unequal Error Protection over a {R}ayleigh Fading Channel}, 
      author={Adeel Mahmood},
      year={2026},
      eprint={2602.24225},
      archivePrefix={arXiv},
      primaryClass={cs.IT},
      url={https://arxiv.org/abs/2602.24225}, 
}

@ARTICLE{alternative_oneshot_JSCC,
  author={Li, Cheuk Ting and Anantharam, Venkat},
  journal={IEEE Transactions on Information Theory}, 
  title={A Unified Framework for One-Shot Achievability via the Poisson Matching Lemma}, 
  year={2021},
  volume={67},
  number={5},
  pages={2624-2651},
  doi={10.1109/TIT.2021.3058842}}

@ARTICLE{kostina_JSCC,
  author={Kostina, Victoria and Verdú, Sergio},
  journal={IEEE Transactions on Information Theory}, 
  title={Lossy Joint Source-Channel Coding in the Finite Blocklength Regime}, 
  year={2013},
  volume={59},
  number={5},
  pages={2545-2575},
  doi={10.1109/TIT.2013.2238657}}

@book{Cover2006,
  author    = {T. M. Cover and J. A. Thomas},
  title     = {Elements of Information Theory},
  edition   = {2nd},
  publisher = {Wiley-Interscience},
  address   = {Hoboken, N.J.},
  year      = {2006}
}

@inproceedings{palzer_timo_converse,
  author    = {L. Palzer and R. Timo},
  title     = {A converse for lossy source coding in the finite blocklength regime},
  booktitle = {2016 International Zurich Seminar on Communications (IZS)},
  address   = {Zurich, Switzerland},
  pages     = {15--19},
  month     = mar,
  year      = {2016},
  doi       = {10.3929/ethz-a-010645199}
}

@phdthesis{palzer_thesis,
  author = {L. Palzer},
  title  = {Rate-Distortion Analysis of Sparse Sources and Compressed Sensing with Scalar Quantization},
  school = {Technische Universit{\"a}t M{\"u}nchen},
  type   = {Dr.-Ing. dissertation},
  year   = {2019},
  url    = {https://mediatum.ub.tum.de/doc/1486321/1486321.pdf}
}

\end{document}